\newtheorem{theorem}{Theorem}
\newtheorem{lemma}[theorem]{Lemma}
\newtheorem{corollary}[theorem]{Corollary}
\theoremstyle{remark}
\newtheorem{remark}{Remark}
\newcommand{\D}{\mathcal{D}}
\newcommand{\V}{\mathcal{V}}
\newcommand{\tV}{\tilde{\V}^{(2)}}
\newcommand{\tVo}{\tilde{\V}}
\newcommand{\tPi}{\tilde{\Pi}^{(2)}}
\newcommand{\tPio}{\tilde{\Pi}}
\newcommand{\vb}{\mathbf{v}}
\newcommand{\eps}{\varepsilon}
\DeclareMathOperator{\E}{\mathbb{E}}
\def \Rev  {{\sf Rev}}
\def \supp {{\sf supp}}
\def \spike  {{\sf spike}}
\def \rnd  {{\sf rnd}}
\def \pad  {{\sf pad}}
\title{Prior-free Dynamic Mechanism Design With Limited Liability}
\begin{document}

\author{Mark Braverman\thanks{Department of Computer Science, Princeton University, email: mbraverm@cs.princeton.edu. Research supported in part by the NSF Alan T. Waterman Award, Grant
No. 1933331, a Packard Fellowship in Science and Engineering, and the
Simons Collaboration on Algorithms and Geometry. Any opinions,
findings, and conclusions or recommendations expressed in this
publication are those of the author and do not necessarily reflect the
views of the National Science Foundation.} \and Jon Schneider\thanks{Google Research, email: jschnei@google.com.} \and S. Matthew Weinberg\thanks{Department of Computer Science, Princeton University, email: smweinberg@princeton.edu. Supported by NSF CAREER Award CCF-1942497.}}
\date{}

\begin{titlepage}

\maketitle

\begin{abstract}
We study the problem of repeatedly auctioning off an item to one of $k$ bidders where: a) bidders have a per-round individual rationality constraint, b) bidders may leave the mechanism at any point, and c) the bidders' valuations are adversarially chosen (the prior-free setting). Without these constraints, the auctioneer can run a second-price auction to ``sell the business'' and receive the second highest total value for the entire stream of items. We show that under these constraints, the auctioneer can attain a constant fraction of the ``sell the business'' benchmark, but no more than $2/e$ of this benchmark. 

In the course of doing so, we design mechanisms for a single bidder problem of independent interest: how should you repeatedly sell an item to a (per-round IR) buyer with adversarial valuations if you know their total value over all rounds is $V$ but not how their value changes over time? We demonstrate a mechanism that achieves revenue $V/e$ and show that this is tight.
\end{abstract}

\end{titlepage}

\section{Introduction}\label{sec:intro}

Classical dynamic mechanism design traditionally assumes that the values of the participants are generated \textit{stochastically}: for example, drawn iid from some distribution every round, or generated according to a simple stochastic process known to the mechanism designer. These assumptions are often somewhat unrealistic; actual valuations drift over time and are subject to shocks in a way which is hard to predict or model. A mechanism designer may hope for a dynamic mechanism robust to these features of the problem. 

The goal of this paper is to initiate the study of dynamic mechanism design in the presence of \textit{adversarially} chosen valuations (the prior-free setting). Specifically: there are $k$ bidders, and $T$ rounds. Bidder $i$ has value $v_{i,t}$ for the round-$t$ item, and bidders are additive across rounds. The designer's goal is to maximize their \emph{revenue}. 

It is of course impossible to achieve any approximation guarantee with respect to the ``first-best'' benchmark of $\sum_t\max_i \{ v_{i,t}\}$ (perhaps all but one bidder has a constant value of $0$ --- there is no way to guess the correct price to set for the remaining bidder). This challenge also arises in the vast literature on prior-free auctions in other domains, and care is required to pick an insightful ``second-best'' benchmark~\cite{GoldbergHW01,GoldbergHKS04,GoldbergHKSW06,ChenGL14,DevanurHY15,ChenGL15}.

Let us quickly consider two potential benchmarks. First, one might target the second-best per-round: $\Rev_{SPA}:=\sum_t \text{Second-Highest}_i \{v_{i,t}\}$. There is also a simple, dominant-strategy truthful auction which achieves $\Rev_{SPA}$: simply auction each item via a second-price auction. This benchmark, however, may be infinitely far from optimal if for each item there is a unique bidder with non-zero value. On the other extreme, one might target the aggregate second-best: $\Rev_{STB}:=\text{Second-Highest}_i \{\sum_t v_{i,t}\}$. There is also a simple, dominant-strategy truthful auction which achieves $\Rev_{STB}$: simply ``sell the business'' and auction the entire stream of items at the very beginning.\\

\noindent\textbf{Limited Liability and Per-Round Individual Rationality.} In a static setting, $\Rev_{SPA}$ corresponds to selling the items separately (using a second-price auction), and $\Rev_{STB}$ corresponds to bundling the items together (for a second-price auction). Both mechanisms are simple to implement in the static setting, and in fact variants of (the better of) these are known to yield good approximation guarantees in static settings~\citep{BabaioffILW14, Yao15, ChawlaM16, CaiDW16, CaiZ16}. In a dynamic setting, however, bundling all time slots together requires the bidder to pay a large amount up front and trust that seller will honor their promise to follow the protocol in the rounds to come. For this reason, much literature on dynamic mechanism design studies what is possible under per-round ``limited liability'' or ``individual rationality'' constraints on the buyer, where they never pay more than their value per round \citep{PapadimitriouPPR16, AshlagiDH16, MirrokniLTZ16, MirrokniLTZ16b, BMP16, MLTZ18, ADMS18}. Recalling that the benchmark $\Rev_{STB}$ can be achieved in the static setting (or the dynamic setting without limited liability), the main question we ask is: 

\begin{quote}
\emph{What approximation to $\Rev_{STB}$ can be guaranteed by a dominant-strategy truthful mechanism in the dynamic setting with limited liability?}
\end{quote}

Observe further that any positive resolution to the above question will also imply an approximation guarantee to $\max\{\Rev_{STB},\Rev_{SPA}\}$, by simply flipping a fair coin to run either the designed mechanism or a per-round second-price auction (which already satisfies limited liability). In other words, the $\Rev_{STB}$ benchmark is key to understanding the revenue lost in prior-free settings \emph{specifically due to limited liability constraints} (rather than other sources which harm revenue already in the static setting). 

\subsection{Main results and techniques}

We design a dominant-strategy truthful, limited-liability mechanism which achieves a constant-factor approximation to $\Rev_{STB}$. We also establish that it is impossible to achieve an arbitrarily close approximation to $\Rev_{STB}$, let alone match it. Below are informal statements of our main results, along with pointers to the formal statements.

\begin{theorem}[Informal Restatement of Theorem \ref{thm:twopmech}]
There exists a mechanism (for two bidders) which guarantees revenue at least $\frac{1}{2e}\Rev_{STB} - O(1)$. 
\end{theorem}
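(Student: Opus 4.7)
The plan is to reduce the two-bidder problem to the single-bidder mechanism that attains revenue $V/e$ when the bidder's total value is known to be $V$, via a second-price-style reduction. The key structural idea is to decouple each bidder's own report from their own payments so that truthfulness becomes automatic, while still using the reports to parameterize the subroutine.

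Concretely, the mechanism is as follows. At $t=0$, both bidders report their total values $V_1, V_2$. The mechanism flips a fair coin to select $i^{*}\in\{1,2\}$ uniformly at random, and then runs the single-bidder mechanism on bidder $i^{*}$ with parameter $V_{3-i^{*}}$; the other bidder is inactive (receives no items, makes no payments). Truthfulness follows immediately: bidder $i$'s own report never enters into $i$'s own payments, since when $i=i^{*}$ the parameter comes from the other bidder and when $i\ne i^{*}$ bidder $i$ is inactive. So truthful reporting is weakly dominant, and combined with the truthfulness of the single-bidder subroutine under per-round IR, this gives a DSIC mechanism.

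For the revenue bound, suppose without loss of generality $V_1 \le V_2$, so $\Rev_{STB} = V_1$. With probability $1/2$ the coin selects bidder $2$, who runs the single-bidder mechanism with parameter $V_1 \le V_2$. Since bidder $2$'s true total is at least $V_1$, she can execute the prescribed play and the mechanism extracts revenue at least $V_1/e - O(1)$ in this branch. The other branch contributes non-negative revenue, so in expectation we collect at least $\Rev_{STB}/(2e) - O(1)$.

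The main technical obstacle I anticipate is verifying that the single-bidder mechanism's revenue guarantee continues to hold when the bidder's true total exceeds the declared parameter, since the adversary now has extra value to allocate across rounds. Intuitively the adversary can at best ``waste'' this extra value, so the revenue should remain $\ge V'/e$, but this requires a careful argument tied to the exact structure of the single-bidder mechanism (e.g., that its prescribed per-round prices remain pointwise affordable whenever per-round values dominate some adversarial profile of total $V'$). A secondary, easier issue is formalizing the DSIC property given limited liability and the bidder's option to leave mid-mechanism; both follow directly from the corresponding guarantees of the single-bidder subroutine.
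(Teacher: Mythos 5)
Your proposal is essentially the paper's argument: use the other bidder's reported total as a lower-bound parameter for the single-bidder pay-to-play mechanism of Theorem \ref{thm:mainalg}, note that the lower of the two totals equals $\Rev_{STB}$, and observe that truthful reporting is (weakly) dominant because a bidder's own report never affects their own allocation or payments. Two small remarks. First, the paper does not flip a coin: it splits every item in half and runs both single-bidder copies simultaneously (allocating $\rho(2X_1/V_2)/2$ to bidder 1 and $\rho(2X_2/V_1)/2$ to bidder 2), which stays inside the paper's deterministic fractional-allocation model and yields the $\frac{1}{2e}\Rev_{STB}-O(1)$ bound for every realization rather than only in expectation over the seller's randomness; your randomized version can be derandomized into exactly that half-split, so this is a presentational rather than substantive difference, but as stated your guarantee is the weaker expected-revenue one. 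Second, the ``main technical obstacle'' you flag --- that the single-bidder guarantee survives when the true total $V'$ exceeds the declared parameter $V$ --- is precisely Remark \ref{remark:mech2} in the paper, and it is resolved by a short computation rather than a delicate affordability argument: with the specific $\rho(w)=\min(e^{ew-1},1)$, a buyer of true total $V'\ge V$ who stops after paying $\bar{x}V$ gets utility $\rho(\bar{x})\bigl(V'-V\int_0^{\bar{x}}dw/\rho(w)\bigr)=(V'-V)\rho(\bar{x})+V/e$ for $\bar{x}\in[0,1/e]$, which is increasing there and decreasing beyond $1/e$, so paying at least $V/e$ remains dominant; your proof is complete once this remark (and its discrete counterpart, Lemma \ref{lem:paytoplay-disc}, which supplies the $O(1)$ loss and the $O(1)$ reimbursement) is invoked.
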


It is also worth noting that our mechanism does \emph{not} require the bidders to know their individual per round values ahead of time, but just their average value over all $T$ rounds. Still, the mechanism is dominant-strategy truthful even if they have full knowledge of the future. For $k$ bidders, a simple generalization of our 2-bidder mechanism leads to a competitive ratio of $1/(ke)$. We also show how to improve this to a constant competitive ratio independent of $k$, although the solution concept for our auction is implementation in undominated strategies (that is, it guarantees an approximation guarantee of $\alpha$ \emph{as long as every bidder plays an undominated strategy}).

\begin{theorem}[Informal Restatement of Theorem \ref{thm:kbiddermech}]
There exists a mechanism for $k$ bidders which guarantees revenue at least $\alpha \cdot \Rev_{STB} - O(k)$ for some constant $\alpha$ (independent of $k$). 
\end{theorem}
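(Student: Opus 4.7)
The plan is to reduce the $k$-bidder problem to the 2-bidder setting handled by Theorem \ref{thm:twopmech}, paying only a constant factor and an additive $O(k)$. The reason the solution concept must weaken to undominated strategies is that with $k$ participants, one needs to decide which subset to serve, and it seems hopeless to do so in a fully dominant-strategy truthful way without first eliciting each bidder's entire value trajectory (which the 2-bidder mechanism showed is already impossible even for $k=2$ if one seeks an arbitrarily close approximation).

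The mechanism I propose proceeds in two phases. In a short elicitation phase of $O(k)$ rounds, each bidder reports a single number $\hat V_i$ meant to represent (an upper bound on) their total remaining value; during these rounds one can run a simple per-round second-price auction, so that the revenue foregone relative to $\Rev_{STB}$ in this phase is $O(k)$ additive. Based on the reports the mechanism selects the two bidders with the largest $\hat V_i$ as \emph{finalists} and discards the others. In the remaining rounds, the mechanism runs the 2-bidder mechanism of Theorem \ref{thm:twopmech} between the two finalists, using $\hat V_i$ as their ``promised'' totals. By Theorem \ref{thm:twopmech} the second phase contributes revenue at least $\frac{1}{2e}\min(\hat V_{(1)}, \hat V_{(2)}) - O(1)$, where $\hat V_{(1)} \ge \hat V_{(2)}$ are the top two reports.

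What remains is to show that in any undominated-strategy profile, $\min(\hat V_{(1)}, \hat V_{(2)})$ is a constant fraction of $\Rev_{STB}$. For over-reporting, the key is the structure of the underlying single-bidder mechanism: a finalist promising $\hat V_i$ will be asked to pay roughly $\hat V_i/e$ over the second phase, so if the adversary sets the realized per-round values so that their sum is only $V_i < \hat V_i$, then the bidder must either quit early (total utility $0$) or else violate per-round IR. Hence for $\hat V_i$ significantly larger than $V_i$ the strategy is strictly dominated by reporting $\hat V_i = V_i$. Under-reporting by a large factor is likewise dominated, because it can only reduce the chance of becoming a finalist and cannot increase the utility conditional on being one. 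Together these imply that each bidder's report is within a constant factor of their true total, so the second-highest report is $\Omega(\Rev_{STB})$ and we obtain the claimed $\alpha \cdot \Rev_{STB} - O(k)$ bound. The main obstacle I anticipate is making the domination arguments \emph{strict} and robust to the adversarial choice of the sequence $v_{i,t}$ subject to $\sum_t v_{i,t} = V_i$; this likely requires slightly perturbing the 2-bidder mechanism so that over-reporting incurs a strict rather than merely weak utility loss, together with a careful accounting of the $O(k)$ revenue deficit accumulated during elicitation.
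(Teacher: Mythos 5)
There is a genuine gap, and it is exactly at the step you flag as the crux: the claim that over-reporting $\hat V_i$ is dominated. In the two-bidder mechanism of Theorem~\ref{thm:twopmech}, a bidder's own report is used only as the \emph{opponent's} payment target (player 1 is allocated $\rho(2X_1/\hat V_2)/2$), so inflating $\hat V_i$ costs the reporter nothing in the second phase; and even if you rewired the mechanism so that a finalist's own report set their own target, pay-to-play payments are voluntary, so a bidder facing a target above their true total is never ``asked to pay roughly $\hat V_i/e$'' on pain of violating IR --- they simply stop paying (indeed, for the specific $\rho(w)=\min(e^{ew-1},1)$, a buyer whose true total is below the assumed target maximizes utility by paying nothing at all). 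Meanwhile, becoming a finalist is strictly valuable: since $\rho(0)=1/e$, a finalist receives a $1/(2e)$ share of every remaining item for free. Hence in your mechanism, reporting an enormous $\hat V_i$ is an undominated (essentially dominant) strategy for every bidder; when all bidders do so, each finalist's target exceeds their true value, the pay-to-play component collects (nearly) nothing, and the revenue can be $o(\Rev_{STB})$ even though $\Rev_{STB}=\Theta(T)$. The ``forced to quit early or violate IR'' reasoning does not rescue this, because the adversarial value sequence never forces any payment in a pay-to-play mechanism. A spot-check of the kind used in the paper does not fix your design either: the check only deters misreports when the report is otherwise irrelevant to the reporter's own allocation, whereas in your scheme the report directly determines finalist status, a $\Theta(T)$-sized benefit that dwarfs any $O(1)$ or $\eps$-scale penalty.

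For comparison, the paper avoids selecting who gets served based on reports. It randomly splits bidders into $S_{price}$ and $S_{play}$, uses the maximum report in $S_{price}$ only as a price parameter $V^*$ (so no bidder's report ever affects their own allocation, and a small-probability check with limited liability then makes truthful reporting dominant), and within $S_{play}$ it replaces ``pick two finalists'' by a first-price auction whose payments buy shares of the allocation half (Theorem~\ref{thm:kbiddermech-fpa}); the domination argument there is that the highest-value bidder must bid at least $v_{1,t}/2$ in early rounds, which yields a constant fraction of $V^*$ whether or not the share cap is hit. That competition step is what lets the paper escape the $1/k$ loss without ever trusting a report to decide who is served, which is precisely the trust your reduction requires and cannot enforce.
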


Finally, we show it is impossible to achieve revenue arbitrarily close to $\Rev_{STB}$, even when there are only two bidders. 

\begin{theorem}[Informal Restatement of Theorem \ref{thm:2plb}]
For any $\alpha > 2/e$, there is no mechanism (for two bidders) which always guarantees revenue at least $\alpha\cdot\Rev_{STB} - O(1)$. 
\end{theorem}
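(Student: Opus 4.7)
The plan is to exhibit a distribution $\D$ over two-bidder instances on which $\Rev_{STB} \equiv V$ and every per-round IR, allow-leave mechanism collects expected revenue at most $2V/e + O(1)$; pushing $V$ large then rules out any multiplicative approximation ratio strictly above $2/e$ up to any fixed additive constant. Let $\D_1$ denote the hard single-bidder distribution of value profiles with total value $V$ that witnesses tightness of the single-bidder $V/e$ upper bound already established in the paper, and let $\D$ sample each of the two bidders' value profiles independently from $\D_1$. On every outcome in $\supp(\D)$, both bidders have total value exactly $V$, so $\Rev_{STB} = V$ deterministically; hence it suffices to bound expected mechanism revenue on $\D$.

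The main step is a reduction to the single-bidder lower bound. Fix any candidate two-bidder mechanism $M$. For each $i\in\{1,2\}$, I define a single-bidder mechanism $M_i$ by: internally sampling a profile $\pi_{-i}\sim\D_1$, scripting bidder $-i$ inside a simulation of $M$ to report $\pi_{-i}$, and routing every interaction of $M$ with bidder $i$ to the real single bidder. Because $M$ is per-round IR and allows departures on a per-bidder basis, $M_i$ inherits both properties, and so is a valid single-bidder mechanism. By construction, the expected payment $M_i$ extracts from a bidder whose profile is drawn from $\D_1$ equals $\E_{\D}[\text{payment of bidder }i\text{ under }M]$. Applying the single-bidder $V/e$ upper bound to $M_i$ bounds each such quantity by $V/e + O(1)$; summing over $i$ gives $\E_{\D}[\Rev_M] \leq 2V/e + O(1)$. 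Therefore at least one instance in $\supp(\D)$ realizes $\Rev_M \le 2V/e + O(1)$ while $\Rev_{STB}=V$, and if some $M$ guaranteed $\Rev_M \geq \alpha\Rev_{STB} - c$ pointwise for some $\alpha > 2/e$ and constant $c$, then $\alpha V - c \leq 2V/e + O(1)$, contradicting this inequality for $V$ sufficiently large.

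The main obstacle will be verifying cleanly that the reduction $M_i$ fits the hypotheses of the single-bidder $V/e$ upper bound. Specifically, I need to check that (i) allowing $M_i$ to use private internal randomness to simulate bidder $-i$ is within the scope of the single-bidder bound (it should be, as the bound applies to randomized mechanisms), and (ii) bidder $i$'s strategic options (including the choice to leave at any point) inside $M_i$ coincide with their options inside $M$, so optimal play against $M_i$ matches the behavior used to derive the single-bidder bound. Care is also needed if the single-bidder lower bound is stated against deterministic adversarial value sequences rather than a distribution; in that case one first phrases it in a minimax form before invoking it, which is standard via Yao's principle.
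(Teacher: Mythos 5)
Your proposal follows essentially the same route as the paper's proof: obtain (via a minimax argument over a suitably discretized single-bidder game) a hard distribution on which no single-buyer mechanism exceeds expected revenue $V/e$, build a product distribution of two-bidder instances on which $\Rev_{STB}=V$ deterministically, and split any candidate two-bidder mechanism into two single-bidder mechanisms by simulating the other bidder with a fresh draw from the hard distribution, so that a guarantee above $2/e$ would force one of them above $V/e$. The only cosmetic difference is that the paper places the two bidders' values on disjoint halves of a $2T$-round horizon rather than on the same $T$ rounds (which makes the per-bidder revenue decomposition and the truthfulness of the induced single-bidder mechanisms immediate), and the caveats you flag -- randomization in the induced mechanisms and the minimax/discretization step -- are exactly the technicalities the paper resolves in its appendix.
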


Since the $k$ bidder game contains the $2$ bidder game as a subcase, it is still the case that no mechanism for $k$ bidders can achieve an approximation ratio better than $2/e$ (Corollary \ref{cor:kbidderub}).\\

\noindent\textbf{Technical Overview: A Related Single-Bidder Problem.} The main ingredient in our above results is actually a thorough understanding of a related single-buyer problem. Specifically, consider exactly the same setup as our model, except there is only a single bidder \emph{and the designer knows $V=\sum_t v_t$} (or equivalent, that the designer knows $\nu:=V/T$). In the static setting, there is a trivial optimal solution (sell the bundle of all items for $V$, which is the first-best). Here we seek to answer the same questions as above: what fraction of this benchmark can the seller guarantee subject to limited liability?

Note that the prior-free nature of the valuations is an essential detail here. If the valuations were generated stochastically from a time-invariant distribution, the seller could run mechanisms of the form ``continue giving the item to the buyer as long as the buyer has paid at least $(1-\eps)\nu$ on average each round so far'' and extract almost the entire expected welfare. When valuations are adversarial, such mechanisms are no longer possible -- it could be the case that the value of the buyer is entirely back-loaded and is zero for the first half of the rounds. Nonetheless, we show that the seller can still (asymptotically) attain a $1/e$ fraction of the total welfare $V$, and that this is tight.  

\begin{theorem}[Informal Restatement of Theorems \ref{thm:mainalg} and \ref{thm:singlb}]
There is a mechanism for the single buyer game that achieves revenue $V/e - O(1)$. Moreover, this is tight: for any $\alpha > 1/e$, there is no mechanism for the single buyer game that can guarantee revenue $\alpha V - O(1)$. 
\end{theorem}

The mechanism we design for Theorem~\ref{thm:mainalg} is what we call a \textit{pay-to-play mechanism}. Such mechanisms are parameterized by an increasing function $\rho: [0, 1] \rightarrow (0, 1]$ and work as follows: on round $t$, the seller allocates the item to the buyer with probability $\rho(X_{t}/V)$, where $X_{t}$ is the total payment the buyer has made to the seller by time $t$. The buyer is free to make any payment they wish each round --- since $\rho$ is increasing, they are incentivized to pay early to increase their allocation probability later (and in particular, for whatever total payment they choose to make, they are incentivized to make that payment as soon as possible).

We show that the optimal such $f$ (specifically, $f(w) = \min(e^{ew-1}, 1)$) leads to a pay-to-play mechanism with revenue $V/e$. Our impossibility result shows that this mechanism is optimal, and uses much of the same machinery. One highlight of this approach is that we reparameterize \emph{any} truthful mechanism in terms of the allocation/payment \emph{as a function of the maximum possible welfare the buyer could have had so far} (rather than as a function of the round $t$). Under this reparameterization, we find a pay-to-play interpretation of any mechanism, meaning that the optimal pay-to-play mechanism is in fact optimal among all mechanisms.\\

\noindent\textbf{Technical Overview: Back to Multiple Bidders.} We apply the following technique to extend these ideas to 2 (and $k > 2$) bidders. We begin by soliciting each bidder's total value (e.g. $V_{a} = \sum a_{t}$ and $V_{b} = \sum b_{t}$). We then split each item in half and run two copies of the single bidder mechanism: for Alice, we run the single bidder mechanism on one of the halves, assuming her total value for these halves is $V_{b}/2$, and for Bob we run the single bidder mechanism on the other halves, assuming his total value for these halves is $V_{a}/2$. If $V_{a} > V_{b}$, then $V_{b}/2$ is an underestimate for Alice's value, and our single bidder mechanism guarantees we receive revenue at least $V_{b}/2e$, or equivalently, $\Rev_{STB}/2e$. 

This simple idea extends to $k$ bidders, but causes the approximation guarantee to degrade to $1/(ke)$. The main issue with this approach is that we are allocating $1/k$ of the items to each bidder, even to bidders that cannot afford them. To fix this, we introduce an element of competition to this mechanism in the form of a first-price auction. Roughly speaking, each round $t$ we allow each bidder $i$ to submit a bid $b_{i,t}$ in an attempt to increase the rate at which they are allocated items. The bidder with the highest bid pays that amount, which in turn increases the rate at which they are allocated items (this description elides several details which can be found in Section \ref{sect:kbidders}). This mechanism is not truthful, but we show that \emph{as long as all bidders play undominated strategies}, this mechanism achieves an $\alpha$-approximation to $\Rev_{STB}$ for a constant $\alpha$ independent of $k$. 

\subsection{Related Work}

The problem of selling a stream of items to one of several bidders is a central problem in the field of dynamic mechanism design (for a general introduction to the area, we recommend the survey \cite{bergemann2018dynamic}). Our treatment of this problem imposes two important constraints: the assumption of lack of long-term trust between the auctioneer and seller, and the adversarial prior-free nature of the bidders' valuations. 

\paragraph{Lack of long-term commitment.}
There is a major recent line of work which studies the design of dynamic mechanisms satisfying ex-post IR constraints (or ``limited liability'')~\citep{PapadimitriouPPR16, AshlagiDH16, MirrokniLTZ16, MirrokniLTZ16b, BMP16, MLTZ18, ADMS18}. In this language, our constraint that bidders never pay more than their value in a given round is a ``per round ex-post IR'' constraint. Most of these papers focus on a specific class of ex-post IR mechanisms called \textit{bank account mechanisms}, where the seller maintains a virtual ``bank account'' for each bidder, and allows bidders to draw from and pay to this bank account so that they end up paying (for example) their average value per round. Such mechanisms work well for stochastic valuations (where the total liability of the seller -- the size of the bank account -- can be bounded sublinearly in $T$) but fail for adversarial valuations (where the liability of the seller can grow linearly in $T$, e.g. for a bidder whose values are back-loaded). 

\paragraph{Prior-free valuations.}
There is a large spectrum of ways to model the valuations of the buyer, from weakest (being independently drawn each round from a time-invariant distribution) to strongest (the prior-free setting). The majority of the work mentioned above studies settings where each bidder's value is drawn iid each round from some distribution known to the seller (or occasionally more generally, the value for round $t$ is drawn from a distribution specific to round $t$ but also known to the seller in advance). One notable exception is \cite{MLTZ18} (and the follow-up work \cite{deng2019robust}), which studies ``non-clairvoyant dynamic mechanism design'' for selling to a single buyer, where the buyer has a different value distribution for each round but the seller is only aware of the distribution for the current round. 

One related line of work is the line of work on dynamic mechanisms for buyers with evolving values \citep{BB84, Besanko85, AtheyB01, KLN13, BS15, CDKS16}. In these works, the value of the buyer evolves according to some stochastic process over time (perhaps depending on the action of the seller or buyer). Again, the standard assumption here is that the seller has detailed knowledge of this stochastic process, and as such tend not to apply to adversarial valuations. 

To the best of our knowledge, very few papers have studied dynamic mechanism design in a completely adversarial setting. The most closely related such work is \cite{deng2019prior}, which studies the problem of selling to a buyer with adversarial valuations, but makes the additional assumption that the buyer is running a low-regret learning algorithm. In contrast to this, prior-free \emph{static} mechanism design has been extensively studied~\citep{GoldbergHW01,FiatGHK02, GoldbergHKS04,GoldbergHKSW06,ChenGL14,DevanurHY15,ChenGL15}. See also Chapter 7 of~\cite{hartline2013mechanism}.

\paragraph{Contract Theory.}
The primary benchmark we compare against is the revenue obtained when ``selling the business''; i.e. selling the entire stream of items at the beginning of the protocol through a second-price auction. A similar strategy of ``selling the firm'' is a common concept in the contract theory literature. As in our setting, for the classical principal-agent problem with hidden actions, ``selling the firm'' is the optimal revenue strategy for the principal in the absence of limited liability. \cite{dutting2018simple} studies the approximation ratio between the optimal contract and this first-best benchmark in this classical problem.

Indeed, our setting can be viewed as a peculiar multi-agent dynamic contract problem, where the principal is the seller and the buyers are agents (with actions corresponding to possible payments to the principal). A similar approach is taken in \cite{braverman2019multi}, which examines this model (in the stochastic values setting) when the principal uses a low-regret learning algorithm to allocate items. 

\section{Model}

\subsection{Multiple bidder game}\label{sect:modelmulti}

We consider a setting with $k \geq 2$ bidders and one seller. There are $T$ rounds, and in each round the seller has one item for sale. Bidder $i$ has value $v_{i, t} \in [0, 1]$ for the item sold at time $t$. Unlike in much prior work, where $v_{i, t}$ are generated stochastically from some known prior, we consider the case where all $v_{i, t}$ are adversarially set at the beginning of the game. We assume the seller has no knowledge of the valuations $v_{i, t}$. We further assume that each bidder $i$ has full knowledge of their own valuations $v_{i, t}$ over all times $t$. This assumption exists largely so that strategic play is well-defined for each bidder; in many of the mechanisms we present, we will see that it is possible to relax this assumption to bidders who only know their average valuation per round. 

During each round $t$, the seller and the $k$ bidders are allowed to participate in arbitrary communication. At the end of round $t$, the seller allocates the item among the $k$ bidders by giving some fraction\footnote{For mathematical convenience, we assume the seller runs a deterministic strategy with the ability to fractionally allocate the item. We believe our results should carry over (with perhaps some loss in terms sublinear in $T$) to randomized settings where the seller awards the entire item to bidder $i$ with probability $r_{i, t}$.} $r_{i, t}$ of the item to bidder $i$ (with $\sum_{i} r_{i, t} \leq 1$; in particular, the seller does not need to allocate all of the item). In return for this portion of the item, bidder $i$ pays the seller some price $p \in [0, r_{i, t}v_{i, t}]$. Note that we do not include any mechanism for the seller to force the bidder to commit to paying a specific price $p$ for the item in a given round; but as this is a repeated game, informal commitments of the form ``if bidder $i$ does not pay the agreed upon price in a round, the seller will never allocate any item to bidder $i$ ever again'' are possible\footnote{In particular, all our results extend to an alternate model where the seller offers bidder $i$ a price $p$ for some fraction of the item, and the buyer either can either accept the price and take the item or reject the price and leave the item.}.

Note also that we assume the bidder never pays the seller more than their value for the item in a given round. This can be seen as an ``ex post individual rationality'' or ``limited liability'' assumption, and rules out protocols where the seller simply sells the entire stream of items at the beginning of the protocol to the highest bidder. Nonetheless, we will see that our mechanisms have good performance even when some or all of the bidders are allowed to break this constraint and pay more than their value. 

Finally, at the end of the game, we allow a single bonus round where the seller is allowed to reimburse each bidder an amount of the seller's choosing. These reimbursements provide a convenient way to incentivize bidders to participate for all $T$ rounds (otherwise e.g. there is never any incentive to pay the seller anything in round $T$), but are otherwise unimportant and are small (on the order of $O(1)$) in most of the mechanisms we present\footnote{One notable exception is in Section \ref{sect:kbidders}, where we first demonstrate a mechanism for the $k$ bidder case which uses reimbursements on the order of $O(T)$. We subsequently show how to reduce the size of these reimbursements to $O(1)$, but some may find the mechanism with large reimbursements more natural.}. 

\paragraph{Strategies and solution concepts.} We are primarily interested in mechanisms  with dominant strategy equilibria for the bidders. Specifically, let $s_i$ denote a deterministic strategy for bidder $i$ (i.e., a function from states of the mechanism and the valuations $v_{i, t}$ to allowed actions of bidder $i$ at that state), and let $U_i(s_i, s_{-i})$ denote the expected utility received by bidder $i$ (where the expectation is over potential randomness in the seller's actions) when $i$ plays according to $s_i$ and all other bidders play according to $s_{-i}$. Then the tuple $(s^*_1, s^*_2, \dots, s^*_k)$ is a dominant strategy equilibrium for this mechanism if for all $s_i$ and $s_{-i}$,

$$U_i(s^*_i, s_{-i}) \geq U_i(s_i, s_{-i}).$$

Any mechanism with a dominant strategy equilibrium can be transformed into a truthful direct-revelation mechanism via the revelation principle, where the bidders report their private types (their valuations $v_{i, t}$) to the seller at the beginning of the protocol and the seller uses this to simulate the bidders' dominant strategies $s^*_i$ on behalf of the bidders. In our setting, we specify a direct-revelation mechanism as follows.

A \textit{direct-revelation mechanism} for the multiple bidder game specifies the following information for each type profile $\mathbf{v} = \{v_{i, t}\}_{i \in [k], t \in [T]}$ of all $k$ bidders:

\begin{itemize}
    \item For each bidder $i$, the fraction $r_{\mathbf{v}, i, t}$ of the item allocated to bidder $i$ at time $t$. (These must satisfy $\sum_{i} r_{\mathbf{v}, i, t} \leq 1$ for any $\mathbf{v}$ and $t$).
    \item For each bidder $i$, the price $x_{\mathbf{v}, i, t}$ the seller expects bidder $i$ to pay on round $t$.
    \item For each bidder $i$, the reimbursement $g_{\mathbf{v}, i}$ paid to bidder $i$ at the end of the game.
\end{itemize}

\noindent
To run a direct-revelation mechanism, the seller proceeds as follows:

\begin{enumerate}
    \item The seller begins by soliciting the type profile $\mathbf{v}$ from all of the bidders. The seller computes $r_{\mathbf{v}, i, t}$, $p_{\mathbf{v}, i, t}$, and $g_{\mathbf{v}, i}$, and distributes this information to the bidders. 
    \item On round $t$, the seller allocates to bidder $i$ a fraction  $r_{\mathbf{v}, i, t}$ of the item (note that there can be some fraction of the item that goes unallocated), unless bidder $i$ has been eliminated (in which case bidder $i$ receives nothing).
    \item This bidder responds by paying the seller some price $p$. If $p \neq x_{\mathbf{v}, i, t}$, then the seller eliminates bidder $i$ from further consideration (i.e. will never allocate to or reimburse bidder $i$ again).
    \item Finally, at the end of the game, the seller gives a reimbursement of $g_{\mathbf{v}, i}$ to each uneliminated bidder $i$.
\end{enumerate} 

In order for a direct-revelation mechanism to be \textit{truthful}, these allocations, prices, and reimbursements must satisfy the following two constraints:

\begin{itemize}
    \item \textit{(Limited liability)} For any $\mathbf{v}$, $i$, and $t$, we must have $x_{\mathbf{v}, i, t} \leq r_{\vb, i, t}v_{i, t}$ (i.e., a limited-liability bidder reporting truthfully must be able to pay for the item if they win in this round).
    \item \textit{(Incentive compatibility)} Fix a type profile $\mathbf{v}$ and a bidder $i$, and let $\mathbf{v}'$ be the profile where bidder $i$'s type $v_{i, t}$ is replaced by a new type $v'_{i, t}$. Then, if it is the case that
    
    $$x_{\mathbf{v}', i, t} \leq r_{\vb', i, t}v_{i, t}$$
    
    for all $t \leq \tau$ (i.e. our bidder with limited liability can pretend to have type $\mathbf{v}'$ for $\tau$ rounds) we must have that, for any $\tau \leq T$, 
    
    $$\left(\sum_{t=1}^{T} r_{\mathbf{v}, i, t}v_{i, t} - x_{\mathbf{v}, i, t}\right) + g_{\mathbf{v}, i} \geq \left(\sum_{t=1}^{\tau} r_{\mathbf{v}', i, t}v_{i, t} - x_{\mathbf{v}', i, t}\right) + x_{\vb', i, \tau}.$$
    
    That is, a bidder cannot increase their expected utility by misreporting their type and then following the resulting mechanism for $\tau$ steps before defecting. The additional term on the RHS represents the fact that the bidder does not need to pay the payment $x_{\vb', i, \tau}$ if they defect on round $\tau$. Note that both $x_{\vb', i, \tau}$ and $g_{\vb, i}$ will typically be $O(1)$, in comparison to the two sums which will typically be $\Theta(T)$.
\end{itemize}

\paragraph{Revenue and benchmarks.}

We wish to design truthful mechanisms with high \textit{revenue}: the total sum of payments paid to the seller, minus reimbursements. For a truthful direct-revelation mechanism and bidders with type profile $\mathbf{v}$, this can be written as

$$\Rev(\mathbf{v}) = \sum_{i = 1}^{k}\sum_{t=1}^{T} x_{\mathbf{v}, i, t} - \sum_{i=1}^{k} g_{\mathbf{v}, i}.$$


We will compare the revenue the seller achieves to a benchmark function of $\mathbf{v}$. In this paper, we primarily consider the benchmark of the revenue obtained when auctioning off all $T$ items at once at the beginning of the game, i.e. ``selling the business''. This is not possible with limited-liability bidders, but with unconstrained bidders this strategy obtains a revenue equal to

$$\Rev_{STB}(\mathbf{v}) = \max_{(2)}\left(\left\{ \sum_{t=1}^{T} v_{i, t}  \middle|\; i \in [k] \right\}\right),$$

\noindent
where we write $\max_{(2)}(S)$ to denote the second largest element in $S$. We say that a mechanism (more specifically, a family of mechanisms, one for each $T$) is \textit{$\alpha$-competitive} with selling the business if for all $T$ and type profiles $\mathbf{v}$, 

$$\Rev(\mathbf{v}) \geq \alpha \Rev_{STB}(\mathbf{v}) - O_k(1).$$

It should be noted that $\Rev_{STB}$ is not the ``optimal'' policy for the seller when bidders are unconstrained, merely a natural choice. In fact, for any type profile $\mathbf{v}$, there is a mechanism that achieves revenue equal to $\Rev(\mathbf{v}) = \sum_{t} \max(\{v_{i, t} | i \in [k]\})$ on this specific type profile (even if bidders are limited-liability), namely the mechanism which hopes that the type profile is exactly $\mathbf{v}$ and posts prices accordingly. 

We will write $\vb_i$ to refer to the specific type of player $i$ (in the type profile $\vb$), and $|\vb_i| = \sum_{t=1}^{T} v_{i,t}$ to denote the total value of type $\vb_i$. For instance, using this notation, we can write $\Rev_{STB}(\mathbf{v}) = \max_{(2)}(\{|\vb_i|\, \mid\, i \in [k]\})$.

\section{Optimal mechanisms for selling to a single buyer}


In the course of designing mechanisms and proving lower bounds for the multiple bidder game described in the previous section, it will be useful to think about a different game between a single seller and a single buyer. This game is identical to the $k=1$ variant of the multiple buyer game described in the previous section, with the following modifications:

\begin{itemize}
    \item The seller knows (at the beginning of the protocol) the total value $V$ of the buyer over all $T$ rounds.
    \item Instead of trying to compete with the benchmark of selling the business (the sum of the second-largest value per round), the seller wishes to be $\alpha$-competitive with respect to the total value $V$; i.e., the seller's mechanism should satisfy
    
    $$\Rev(\mathbf{v}) \geq \alpha V - O(1)$$
    
    for all $T$ and types $\mathbf{v}$.
\end{itemize}

We can define truthful, direct-revelation mechanisms as before. Since there is only a single buyer in this case, we drop all subscripts $i$ where relevant; e.g. a direct-revelation mechanism in the single buyer game is parametrized by sequences $r_{\mathbf{v}, t}$, $x_{\mathbf{v}, t}$, and $g_{\mathbf{v}, t}$.

In this section, we will show that there exists a $(1/e)$-competitive mechanism for the seller in the single buyer game. Moreover, we will show that this competitive ratio is optimal; no mechanism for the single buyer game can obtain more than  a $1/e$ fraction of the welfare. 

Our main technique for the both the upper and lower bound is understanding how the rate and allocation functions of the mechanism at time $t$ depend on the total amount of revenue received from the buyer so far. Our $(1/e)$-competitive mechanism will take the form of what we refer to as a ``pay-to-play'' mechanism, where the fraction of the item you are allocated at time $t$ is equal to some fixed function of the total amount you have paid to the seller so far.

\subsection{Continuous-time models}\label{sect:modelcont}

The analysis of the single buyer game is more clear in a continuous-time setting, where instead of being discretized into $T$ rounds, the buyer's value (and the various parameters of his mechanism) are described by well-behaved functions over a continuous interval of time. We formalize this continuous-time model here. The analogous upper and lower bounds in the discrete-time model are formulated and proved in Appendices \ref{sect:singlb-app} and \ref{sect:1p_disc_ub} -- they follow essentially the same logic, but are slightly messier.

In this continuous variant, our buyer has a Lipschitz-continuous value function $v(t):  [0, 1] \rightarrow [0, \infty)$ representing their value at time $t$. If the buyer has total value $V$ (known to the seller), this function $v(t)$ should satisfy $\int_{0}^{1}v(t)dt = V$. 

A direct-revelation mechanism in this setting specifies for each type $v(t)$ a continuous rate function $r_v(t): [0, 1] \rightarrow [0, 1]$ (the fraction of the infinitesimal item the buyer receives at time $t$) and a continuous payment function $x_v(t): [0, 1] \rightarrow [0, \infty)$. In order for this direct-revelation mechanism to be truthful, it must satisfy the following analogues of the corresponding constraints in Section \ref{sect:modelmulti}:

\begin{itemize}
    \item For all types $v$ and $t \in [0, 1]$,
    
    $$r_v(t)v(t) - x_v(t) \geq 0.$$
    
    \item For any types $v, v'$ and $\tau \in [0, 1]$ that satisfy

\begin{equation}
r_{v'}(t)v(t) - x_{v'}(t) \geq 0 \hspace{10 mm} \forall \;0 \leq t \leq \tau
\end{equation}

(i.e., it is possible for $v$ under limited liability constraints to imitate type $v'$ up until time $\tau$), it is the case that

\begin{equation}\label{eqn:dynamicic}
    \int_{0}^{1}(r_v(t)v(t) - x_v(t))dt \geq \int_{0}^{\tau}(r_{v'}(t)v(t) - x_{v'}(t))dt.
\end{equation}
\end{itemize}

Note that for continuous mechanisms, we don't need to allow for a final reimbursement; this reflects the fact that only an infinitesimal amount of the product is up for grabs at any point, so we can guarantee a buyer cannot gain positive utility by defecting.

The total revenue this mechanism achieves on type $v$ is given by $\int_{0}^{1} x_v(t)dt$; we denote this quantity as $X_v$. Our goal is to maximize our worst-case revenue over all types, $M = \inf_v X_v$; if a mechanism attains a specific $M$, we say it is $M$-competitive. 


\paragraph{Reparametrization by welfare}

Define $w_v(t) = \int_{0}^{t}r_v(s)v(s)ds$ to be the total utility received by the bidder up until time $t$, not including payments (we call this quantity the \textit{welfare} of the bidder up until time $t$). Note that for $t \in [0, 1]$, $w_v(t)$ is monotone increasing from $0$ to $W_v = w_v(1)$. 

This allows us to reparametrize our existing functions (value, rate, payment) in terms of welfare. For example, we (abusing notation) let $r_v(w) = r_v(w_v^{-1}(w))$ to be the rate at the time when bidder $v$ has received welfare $w$. We similarly define $v(w)$ and $x_v(w)$.

\subsection{A $(1/e)$-competitive mechanism}\label{sect:singlb}

In this section we demonstrate a mechanism for the single buyer game which is $(1/e)$-competitive. As mentioned, this mechanism falls into the class of mechanisms that we call \textit{pay-to-play mechanisms}. 

A pay-to-play mechanism is specified by a continuous, weakly-increasing function $\rho(X): [0, 1] \rightarrow (0, 1]$, denoting the fraction of the item we allocate to the buyer if the buyer has given the seller a fraction $X$ of his welfare so far. The buyer is free to pay however much they wish to at time $t$, under the constraint that a limited-liability buyer cannot pay more than $\rho(X)$ of their value in any round. We claim that for any pay-to-play mechanism, the dominant strategy for any limited-liability buyer (regardless of their type $v$) is to pay as much as they can every round until they have paid an $X_{opt}$ fraction (for some value $X_{opt}$ depending only on $\rho$) of their total value $V$, and then never pay again. 

More formally, we have the following lemma.

\begin{lemma}\label{lem:paytoplay-cont}
Let $\rho(X): [0, 1] \rightarrow (0, 1]$ be a weakly-increasing continuous function and let $X_{opt}$ be a value of $x \in [0, 1]$ which maximizes the expression

$$\rho(x)\left(1 - \int_{0}^{x}\frac{dx'}{\rho(x')}\right).$$

Consider the pay-to-play mechanism defined by $\rho(X)$. Fix a type $v$ with total value $V$. Then there exists a dominant strategy for the buyer with type $v$ where they pay at least $X_{opt}V$ to the seller. Moreover, if $X_{opt}$ is a strict maximum, then this property holds for any dominant strategy.
\end{lemma}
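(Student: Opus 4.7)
The plan is to change time variable so that the buyer's type $v$ drops out of the optimization, reducing the buyer's problem to a clean one-dimensional calculus-of-variations problem. First, I would reparametrize by ``value-consumed time'': set $\tau(t) = \int_0^t v(s)/V\,ds$, so $\tau$ traverses $[0,1]$. Let $y(\tau) = X(t)/V$ and $\omega(\tau) = w_v(t)/V$. In $\tau$-time the pay-to-play allocation rate is exactly $\rho(y)$, so $d\omega/d\tau = \rho(y)$; the limited-liability constraint $dX/dt \le \rho(X/V)\,v(t)$ becomes $dy/d\tau \le \rho(y)$; and the buyer's utility (divided by $V$) becomes $\omega(1) - y(1) = \int_0^1 [\rho(y(\tau)) - y'(\tau)]\,d\tau$. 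The buyer's problem is now to maximize this over non-decreasing $y\colon[0,1]\to[0,1]$ with $y(0)=0$ and $y'\le\rho(y)$, with no dependence on $v$.

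Next, for any such strategy with terminal payment $y_\star = y(1)$, I would encode it by the ``time-below'' function $T(z) = |\{\tau : y(\tau) \le z\}|$, a non-decreasing function satisfying $T(y_\star)=1$. The constraint $y'\le\rho(y)$ is equivalent (in the sense of Radon measures on $[0,y_\star]$) to $dT(z) \ge dz/\rho(z)$: accumulating each unit of payment takes at least $1/\rho(z)$ of $\tau$-time. Writing $dT = dz/\rho(z) + d\nu(z)$ with $\nu\ge 0$ of total mass $1 - \int_0^{y_\star}dz/\rho(z)$, I would compute $\omega(1) = \int\rho(z)\,dT = y_\star + \int \rho(z)\,d\nu(z)$. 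Monotonicity of $\rho$ and $\supp(\nu)\subseteq[0,y_\star]$ then yield $\int\rho(z)\,d\nu \le \rho(y_\star)\bigl(1 - \int_0^{y_\star}dz/\rho(z)\bigr)$, so the buyer's utility is at most $V\rho(y_\star)\bigl(1 - \int_0^{y_\star}dz/\rho(z)\bigr)$. Maximizing over $y_\star$ and invoking the definition of $X_{opt}$, the utility is at most $V\rho(X_{opt})\bigl(1 - \int_0^{X_{opt}}dz/\rho(z)\bigr)$.

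Finally, I would extract the dominant strategy from the equality conditions. Equality holds iff (i) $\nu$ is concentrated at $y_\star$ --- equivalently, the buyer pays at the maximum rate $y'=\rho(y)$ until payment reaches $y_\star$ and then pays $0$ --- and (ii) $y_\star$ achieves the outer maximum. Translated back to $t$-time, the strategy ``pay exactly $\rho(X(t)/V)\,v(t)\,dt$ each round until cumulative payment hits $X_{opt}V$, then pay nothing'' is feasible (it saturates limited liability with equality), achieves the bound, and hence is dominant, paying $X_{opt}V$ in total. If $X_{opt}$ is a strict maximizer, (ii) forces $y_\star = X_{opt}$ and (i) forces paying at the maximum rate, so every dominant strategy pays (at least) $X_{opt}V$ as well. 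The hardest step will be formalizing the measure inequality $dT \ge dz/\rho(z)$ for arbitrary (non-smooth) $y(\tau)$; modulo a routine change-of-variables/co-area argument (and ignoring the set $\{t : v(t)=0\}$, which contributes zero to $\tau$), the remainder is a monotone-rearrangement calculation.
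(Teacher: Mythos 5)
Your proposal is correct, and it reaches the paper's conclusion by a genuinely different route. The paper first argues (via a brief exchange argument: since $\rho$ is increasing, shifting payments earlier raises all future allocation rates) that any dominant strategy must front-load its payments, and only then computes the utility of a front-loaded strategy by reparametrizing the integral $\int_0^\tau v(t)\,dt$ by welfare, arriving at the expression $V\rho(\bar{x})\bigl(1-\int_0^{\bar{x}}dw/\rho(w)\bigr)$. You instead change the time variable to normalized cumulative value, encode an \emph{arbitrary} feasible payment path by the pushforward measure $dT$, decompose $dT = dz/\rho(z)+d\nu$, and use monotonicity of $\rho$ to bound the utility of \emph{every} strategy with terminal payment $y_\star$ by $V\rho(y_\star)\bigl(1-\int_0^{y_\star}dz/\rho(z)\bigr)$, then exhibit the front-loaded strategy attaining it. What your approach buys is that optimality of front-loading falls out of the rearrangement bound rather than being argued by a (somewhat informal) exchange step, so the domination claim over all strategies is established directly; the cost is the measure-theoretic care you flag (the inequality $dT\ge dz/\rho(z)$, which for absolutely continuous $y$ follows from $y'/\rho(y)\le 1$ and the change-of-variables formula). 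Two small points: feasibility of the exhibited strategy, i.e.\ $\int_0^{X_{opt}}dz/\rho(z)\le 1$, should be noted explicitly (it is automatic, since the objective is positive at $x=0$ but would be negative otherwise); and your equality condition (i) is not quite an ``iff'' when $\rho$ is constant near $y_\star$, but this is harmless since the ``moreover'' conclusion only needs condition (ii) forcing $y_\star=X_{opt}$.
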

\begin{proof}
Begin by fixing any dominant strategy for the buyer $x_{v}(t)$ (since there is only one player, this is just any strategy which optimizes the buyer's utility). We first note that if $\int_{t}^{1} x_{v}(t)dt > 0$ (the buyer pays some amount to the seller after time $t$), then $\int_{0}^{t}x_{\vb}(t)dt = \int_{0}^{t}r_{v}(t)v(t)dt = w_v(t)$ (the buyer passes along all reward at times $s < t$). In other words, the buyer's payments are front-loaded; i.e. any dominant strategy must pay as much as possible until it passes some threshold, then stop paying entirely. To see this, it suffices to note that since $\rho(X)$ is an increasing function of $X$, moving later payments earlier increases the value of all future rewards and hence is strictly optimal.

Any dominant strategy for the buyer can therefore be characterized by the total amount the buyer gives to the seller. We will show there is one dominant strategy where they give at least $X_{opt}V$. Assume that the buyer gives a total of $\overline{x}V$, for some $\overline{x} \in [0, 1]$. If the buyer pays the seller until time $\tau$, then $\tau$ must satisfy $\int_{0}^{\tau}x_v(t)dt = \overline{x}V$. But from the above argument, we know that $\int_{0}^{\tau}x_v(t)dt = w_v(\tau)$, so this means $\tau = w_v^{-1}(\overline{x}V)$.

Now, note that the eventual utility of the buyer is $\rho(\overline{x})(V - \int_{0}^{\tau}v(t)dt)$ (the buyer gets no net utility from the item until time $\tau$, and from then on they get a constant fraction $\rho(\overline{x})$ of the remainder of the item). 

Recall that $w_{v}(t) = \int_{0}^{t}r_{v}(s)v(s)ds$, so $\frac{dw_{v}}{dt} = r_{v}(t)v(t)$. This lets us reparametrize the integral $\int_{0}^{\tau}v(t)dt$ in terms of welfare via

\begin{eqnarray*}
\int_{0}^{\tau}v(t)dt &=& \int_{0}^{w_{v}(\tau)} v(t) \cdot \frac{1}{r_{v}(t)v(t)} dw \\
&=& \int_{0}^{\overline{x}V} \frac{dw}{r_{v}(w)} \\
&=& \int_{0}^{\overline{x}V} \frac{dw}{\rho(w/V)} \\
&=& V\int_{0}^{\overline{x}} \frac{dw'}{\rho(w')}.
\end{eqnarray*}

In the second-to-last equality, we have used the fact that while the buyer is passing along their full welfare to the seller, $r_{v}(w) = \rho(w/V)$. It follows that the net utility of the buyer is equal to

$$V\rho(\bar{x})\left(1 - \int_{0}^{\bar{x}}\frac{dw}{\rho(w)}\right).$$

Since $\overline{x}$ was chosen to maximize this expression, this is the maximal possible utility possible for the buyer and hence this is a dominant strategy. If $\overline{x}$ is a strict maximizer, any dominant strategy must pay a total of $\overline{x}V$ to the seller (or it will be dominated by this strategy). 
\end{proof}


With Lemma \ref{lem:paytoplay-cont} in hand, we can exhibit our $(1/e)$-competitive mechanism. 

\begin{theorem}\label{thm:mainalg}
There exists a mechanism for the seller which obtains $V/e$ total revenue.
\end{theorem}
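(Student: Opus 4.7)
The plan is to instantiate Lemma \ref{lem:paytoplay-cont} with a concrete choice of $\rho$ for which the optimization $\max_x \rho(x)\bigl(1 - \int_0^x dx'/\rho(x')\bigr)$ has value $1/e$ attained at $X_{opt} = 1/e$. Looking at the form of that expression, one is naturally led to try a $\rho$ whose reciprocal has an exponential antiderivative, so I would take
\[
\rho(x) = \min\bigl(e^{ex-1},\, 1\bigr),
\]
which is continuous, weakly increasing, maps $[0,1]$ into $(0,1]$, and saturates at $1$ precisely for $x \geq 1/e$. This defines a valid pay-to-play mechanism.

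Next I would verify, by a short direct calculation, that for $x \in [0, 1/e]$ the integral $\int_0^x e^{1-ex'}\, dx'$ evaluates to $1 - e^{-ex}$, so that
\[
\rho(x)\Bigl(1 - \int_0^x \frac{dx'}{\rho(x')}\Bigr) = e^{ex-1}\cdot e^{-ex} = \frac{1}{e}
\]
for all $x \in [0, 1/e]$. For $x > 1/e$, where $\rho \equiv 1$, the same routine computation gives $2/e - x < 1/e$. Hence the maximum value of the objective is exactly $1/e$, and $X_{opt} = 1/e$ is a valid (in fact, the largest) maximizer.

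Applying Lemma \ref{lem:paytoplay-cont} then gives, for every buyer type $v$ with total value $V$, the existence of a dominant strategy under which the buyer pays at least $X_{opt}\cdot V = V/e$ to the seller. Passing to the corresponding direct-revelation mechanism --- in which the seller simply simulates this dominant strategy on the buyer's behalf --- yields a truthful mechanism collecting the promised revenue $V/e$.

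I do not anticipate any serious obstacle: once the correct $\rho$ is guessed the verification is purely mechanical. The only mildly subtle point is that $X_{opt}$ is not a \emph{strict} maximizer (every $x \in [0, 1/e]$ achieves the value $1/e$), so the lemma only guarantees existence rather than uniqueness of the $V/e$-paying dominant strategy; but the direct-revelation formulation lets the seller select this particular dominant strategy to simulate, so this causes no difficulty for the final revenue bound.
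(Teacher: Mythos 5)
Your proposal is correct and follows essentially the same route as the paper: the same choice $\rho(x)=\min(e^{ex-1},1)$, the same verification that the objective equals $1/e$ on $[0,1/e]$ and decreases (indeed to $2/e - x$) beyond $1/e$, and the same appeal to Lemma \ref{lem:paytoplay-cont}. Your remark about the non-strict maximizer matches the paper's own handling (cf.\ Remark \ref{remark:mech1}), so there is nothing to add.
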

\begin{proof}
Consider the pay-to-play mechanism defined by the function $\rho(w)$, where

$$\rho(w) = \begin{cases}
e^{ew - 1} & \mbox{ if } w \in [0, 1/e] \\
1 & \mbox{ if } w \in [1/e, 1]
\end{cases}$$

Note that for $x \leq 1/e$, we have that

\begin{eqnarray*}
\rho(w)\left(1 - \int_{0}^{x}\frac{dx'}{\rho(x')}\right) &=& e^{ex-1}\left(1 - \int_{0}^{x'}e^{-ex'+1}dx'\right) \\
&=& e^{ex - 1}e^{-ex} \\
&=& e^{-1}.
\end{eqnarray*}

For $x > 1/e$, this expression is decreasing (since $\rho(x)$ is constant for $x \in [1/e, 1]$, but $\int_{0}^{1} (1/\rho(x'))dx'$ is increasing). It follows that $X_{opt} = 1/e$ is a value of $x$ which maximizes utility. By Lemma \ref{lem:paytoplay-cont}, it follows that a pay-to-play mechanism with this choice of $\rho$ results in the seller receiving $V/e$ total revenue, as desired.
\end{proof}

\begin{remark}\label{remark:mech1}
Note that for the $\rho$ in Theorem \ref{thm:mainalg}, the expression in Lemma \ref{lem:paytoplay-cont} is only weakly maximized at $X_{opt} = 1/e$ (and in fact attains this maximum for all $X \in [0, 1/e]$). It is possible to perturb $\rho$ slightly so that the expression is \textit{strictly} maximized at some $X_{opt} \geq 1/e - \eps$ for some arbitrarily small $\eps$, thus satisfying the stronger conditions of Lemma \ref{lem:paytoplay-cont}. One option is to choose $\tilde{\rho}(w) = \rho(\min(w/(1-\eps), 1))$. This can be thought of as allocating the item via the original pay-to-play mechanism but as if the buyer actually had value $(1-\eps)V$ (to see why this works, see Remark \ref{remark:mech2}). Alternatively, it suffices to give any positive reimbursement at the end of the game contingent on the buyer paying at least $V/e$.
\end{remark}

\begin{remark}\label{remark:mech2}
The above analysis assumes that the total value of the buyer is exactly $V$, but in fact works as long as $V$ is any lower bound on the total value of the buyer (that is, it still guarantees a revenue of at least $V/e$). Indeed, if the buyer's true value is $V' \geq V$, then by stopping at a payment of $\bar{x}V$ they receive a net utility of

$$\rho(\overline{x})\left(V' - V\int_{0}^{\overline{x}}\frac{dw}{\rho(w)}\right).$$

For the $\rho$ in Theorem \ref{thm:mainalg}, this expression is strictly increasing for $\overline{x} \in [0, 1/e]$ (in particular, it equals $(V'-V)\rho(\overline{x}) + e^{-1}V$) and strictly decreasing for $\overline{x} > 1/e$, so it is is strictly dominant for the buyer to pay $V/e$.
\end{remark}

\begin{remark}\label{remark:mech3}
Likewise, although the above analysis assumes that the buyer is subject to limited liability and does not pay more than his value, this too is not necessary. It is still true that a non-limited liable agent will front-load payments as much as possible, and thus we will have that $\tau \leq w_{v}^{-1}(\overline{x}V)$ (instead of in the limited-liable case, where they are equal). The rest of the proof proceeds as before.

For example, in the extreme case where the buyer can pay as much as they want at the very beginning, they will simply pay $V/e$ at the very beginning and receive $(1 - 1/e)V$ total utility (in contrast to if they were limited-liable, in which case they receive $V/e$ utility).
\end{remark}

\begin{remark}
The discrete version of this mechanism is presented in Appendix \ref{sect:singlb-app}. As expected, it works essentially equivalently to the continuous mechanism of Theorem \ref{thm:mainalg}. Two small differences are: i) instead of guaranteeing a revenue of $V/e$, it only guarantees a revenue of $V/e - O(1)$ (this additive loss is inevitable due to the, and ii) it employs a small $O(1)$ reimbursement at the end of the protocol to guarantee truthfulness.
\end{remark}

\subsection{Upper bound of $1/e$}

In this section we show that no incentive compatible mechanism can extract more than $1/e$ of the welfare of the bidder. Throughout this section, we will assume without loss of generality that the total value $V$ of the bidder is normalized to $1$.

We will need the following auxiliary lemma.

\begin{lemma}\label{lem:hazardub}
Fix an $\alpha \in [0, 1]$, and let $f:[0,1] \rightarrow [0,1]$ be a function satisfying

$$\left(1 - \int_{0}^{x}\frac{dx'}{f(x')}\right)f(x) \leq \alpha$$

for all $x \in [0, 1]$. Then

$$\int_{0}^{\alpha \log (1/\alpha)} \frac{dx}{f(x)} \geq 1 - \alpha.$$
\end{lemma}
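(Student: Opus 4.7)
The natural approach is to reinterpret the hypothesis as a differential inequality on the antiderivative of $1/f$, and then solve it by a Grönwall-type argument.

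Let me define $G(x) := \int_0^x \frac{dx'}{f(x')}$, so that the conclusion is simply $G(\alpha \log(1/\alpha)) \geq 1 - \alpha$. The first step is to rewrite the hypothesis in terms of $G$: the bound $(1 - G(x)) f(x) \leq \alpha$ rearranges, whenever $f(x) > 0$, to
$$G'(x) \;=\; \frac{1}{f(x)} \;\geq\; \frac{1 - G(x)}{\alpha}.$$
(If $1/f$ fails to be integrable on $[0, \alpha \log(1/\alpha)]$, then the left-hand side of the desired conclusion is $+\infty$ and there is nothing to prove, so I can freely assume $G$ is a well-defined absolutely continuous function and the manipulation above is valid almost everywhere.)

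The second step is to solve this differential inequality. Setting $H(x) := 1 - G(x)$, we get $H(0) = 1$ and $H'(x) \leq -H(x)/\alpha$. Multiplying by the integrating factor $e^{x/\alpha}$ shows that $(H(x) e^{x/\alpha})' \leq 0$, so $H$ is dominated by the solution to the corresponding equation:
$$H(x) \;\leq\; H(0)\, e^{-x/\alpha} \;=\; e^{-x/\alpha}.$$

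The third and final step is to plug in $x^\star = \alpha \log(1/\alpha)$, which gives $H(x^\star) \leq e^{-\log(1/\alpha)} = \alpha$, i.e., $G(x^\star) \geq 1 - \alpha$, which is exactly the claim.

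The only real subtlety is regularity of $f$ (it is not assumed continuous or positive). The cleanest way to handle this is the dichotomy noted above: if $1/f$ is not integrable up to $\alpha \log(1/\alpha)$, the conclusion is immediate; otherwise $G$ is absolutely continuous, $G' = 1/f$ a.e., and the Grönwall argument goes through in its standard almost-everywhere form. I expect this regularity bookkeeping, rather than the ODE step itself, to be the only place where care is needed.
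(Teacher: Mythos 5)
Your proof is correct, and at its core it uses the same mechanism as the paper: the integrating factor $e^{x/\alpha}$ applied to the inequality $1/f(x)\geq (1-\int_0^x dx'/f(x'))/\alpha$. The difference is in execution. You pass to the differential form: setting $G(x)=\int_0^x dx'/f(x')$ and $H=1-G$, you get $H'\leq -H/\alpha$ a.e., conclude $H(x)\leq e^{-x/\alpha}$ by monotonicity of $H(x)e^{x/\alpha}$, and evaluate at $x^\star=\alpha\log(1/\alpha)$; this requires the absolute-continuity bookkeeping you flag (handled correctly by your dichotomy: if $1/f$ is not integrable the claim is vacuous, otherwise $G$ is absolutely continuous with $G'=1/f$ a.e., and the inequality $G'\geq(1-G)/\alpha$ holds regardless of the sign of $1-G$). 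The paper instead stays in integrated form: it multiplies the integral inequality by $\lambda(x)=\alpha e^{x/\alpha}$, integrates over $[0,\alpha\log(1/\alpha)]$, and swaps the order of integration in the double integral, using the identity $\lambda(x)+\frac{1}{\alpha}\int_x^{\alpha\log(1/\alpha)}\lambda(x')dx'=1$ to land directly on $\int_0^{\alpha\log(1/\alpha)}dx/f(x)\geq 1-\alpha$. The integrated version needs no differentiation at all (only Tonelli for a nonnegative integrand), so it sidesteps the regularity discussion; your version is the more standard Gr\"onwall argument and arguably more transparent about where the bound $e^{-x/\alpha}$ comes from. One shared caveat: both arguments divide by $\alpha$, so both implicitly require $\alpha>0$, which is the only regime in which the lemma is used.
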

\begin{proof}
Define $g(x) = 1/f(x)$, so $g(x)$ satisfies

$$g(x) \geq \frac{1}{\alpha} - \frac{1}{\alpha}\left(\int_{0}^{x}g(x')dx'\right)$$

Let $\lambda(x) = \alpha e^{x/\alpha}$. Multiplying the above inequality by $\lambda(x)$ and integrating it from $0$ to $\alpha\log(1/\alpha)$, we get that

\begin{eqnarray*}
\int_{0}^{\alpha\log(1/\alpha)}\lambda(x)g(x)dx &\geq & \frac{1}{\alpha}\int_{0}^{\alpha\log(1/\alpha)}\lambda(x)dx - \frac{1}{\alpha}\int_{0}^{\alpha\log(1/\alpha)}\lambda(x)\int_{0}^{x}g(x')dx'dx \\
&=& \frac{1}{\alpha}\int_{0}^{\alpha\log(1/\alpha)}\lambda(x)dx - \frac{1}{\alpha}\int_{0}^{\alpha\log(1/\alpha)}g(x)\int_{x}^{\alpha\log(1/\alpha)}\lambda(x')dx'dx.
\end{eqnarray*}

From this, we have that

\begin{eqnarray*}
1 - \alpha &=& \frac{1}{\alpha} \int_{0}^{\alpha \log (1/\alpha)} \lambda(x)dx \\
&\leq & \int_{0}^{\alpha\log(1/\alpha)}\lambda(x)g(x)dx + \frac{1}{\alpha}\int_{0}^{\alpha\log(1/\alpha)}g(x)\int_{x}^{\alpha\log(1/\alpha)}\lambda(x')dx'dx \\
&=& \int_{0}^{\alpha\log(1/\alpha)}g(x)\left(\lambda(x)+ \frac{1}{\alpha}\int_{x}^{\alpha\log(1/\alpha)}\lambda(x')dx'\right)dx \\
&=& \int_{0}^{\alpha\log(1/\alpha)}g(x)dx,
\end{eqnarray*}

from which the desired result follows.
\end{proof}

We can now prove our main theorem.

\begin{theorem}\label{thm:singlb}
The worst-case revenue of any incentive compatible mechanism is at most $1/e$. 
\end{theorem}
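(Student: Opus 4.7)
The plan is to focus on the flat type $v(t) \equiv 1$ and show that any truthful mechanism, after reparametrization by cumulative payment, admits a pay-to-play interpretation allowing us to invoke Lemma~\ref{lem:hazardub}. Write $r(t), x(t)$ for the mechanism's allocation and payment functions on this type, and set $R = \int_0^1 r(t)\,dt$, $X = \int_0^1 x(t)\,dt$, and $U = R - X \geq 0$. The goal is $X \leq 1/e$; by slightly perturbing the mechanism (see Remark~\ref{remark:mech1}) I may assume $U > 0$, at the cost of an arbitrarily small additive term.

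Next I reparametrize by cumulative payment: for $y \in [0, X]$, define $\tau(y) = \inf\{t : X(t) \geq y\}$ and $f(y) = r(\tau(y))$, extended to a weakly increasing function on $[0, 1]$. A change of variables $y = X(t)$ gives $\int_0^X dy/f(y) = \int_0^1 x(t)/r(t)\,dt$. Since $r(t) \leq 1$ and $x(t) \leq r(t)$, the pointwise inequality $(1 - r(t))(1 - x(t)/r(t)) \geq 0$ integrates to $\int_0^1 x(t)/r(t)\,dt \leq 1 - R + X = 1 - U$, so
\[
\int_0^X \frac{dy}{f(y)} \leq 1 - U.
\]

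The crux is to show that $f$ satisfies the hypothesis of Lemma~\ref{lem:hazardub} with $\alpha = U$: that $(1 - \int_0^y dy'/f(y')) f(y) \leq U$ for all $y \in [0, 1]$. The left-hand side is the utility (per unit value) that a buyer would obtain in the pay-to-play interpretation of $f$ at payment level $y$, by fronting $y$ units of payment and then collecting at rate $f(y)$ for the remaining time. If it exceeded $U$, the flat-type buyer would gain from deviating to a strategy that realizes this pay-to-play utility, contradicting IC. To make this rigorous, for each $y$ I would construct an alternative type $v^{(y)}$ whose mechanism-induced trajectory matches the flat type's through payment level $y$ and then delivers an allocation rate of $f(y)$; limited-liability compatibility of the mimic, combined with the IC inequality for $v^{(y)}$, yields the bound.

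Given the hypothesis, Lemma~\ref{lem:hazardub} provides $\int_0^{U \log (1/U)} dy/f(y) \geq 1 - U \geq \int_0^X dy/f(y)$, from which (since $1/f > 0$) we conclude $X \leq U \log(1/U) \leq 1/e$, the final bound using $\max_{u \in [0,1]} u \log(1/u) = 1/e$. The main obstacle lies in the hypothesis verification: constructing the alternative types $v^{(y)}$ so that their mimic utilities faithfully realize the pay-to-play expression, and extending $f$ beyond $X$ to all of $[0, 1]$ while preserving the inequality. This requires careful simultaneous exploitation of limited liability and IC and attention to boundary behavior near $y = 0$ and $y = X$.
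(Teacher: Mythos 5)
There is a genuine gap, and it occurs exactly at the step you flag as the crux — but the problem is more fundamental than a missing verification. Your plan is to prove that the \emph{flat type} $v\equiv 1$ pays at most $1/e$, by establishing $(1-\int_0^y dy'/f(y'))f(y)\le U$ where $U$ is the flat type's own truthful utility. Neither the hypothesis nor the intended conclusion is true for general IC mechanisms. Concretely, consider the per-instant menu mechanism that, at each time $t$, lets the buyer choose any allocation $r\in[0,1]$ at price $\psi(r)=0.45\,r^2$ (in direct-revelation form, $r_v(t)=\arg\max_r\{rv(t)-\psi(r)\}$ and $x_v(t)=\psi(r_v(t))$). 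This is truthful — every deviation inequality holds pointwise in $t$ — and satisfies limited liability. On the flat type it gives $r\equiv 1$, $x\equiv 0.45$, so $X=0.45>1/e$ and $U=0.55$; your hypothesis already fails at $y=0$, where the left-hand side equals $f(0)=1>U$, and your intermediate bound $X\le U\log(1/U)\approx 0.33$ fails as well. The point is that the theorem bounds the \emph{worst-case} revenue $\inf_v X_v$, and a mechanism may extract well over $V/e$ from any one fixed type (here the flat type) by sacrificing revenue on others; no argument that only examines a single type's trajectory can prove the statement.

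The reason your IC justification cannot be repaired with $\alpha=U$ is that the deviation realizing the pay-to-play expression is not available to the flat type itself: to collect at rate $f(y)$ on all remaining value one needs a type whose remaining value is concentrated at the instant payment level $y$ is crossed (a ``spike'' type), and the IC inequality then bounds that \emph{spike type's} deviation utility by the spike type's truthful utility — not by $U$. This is precisely why the paper's proof maintains a \emph{uniform} bound $U_v\le\alpha$ over all types, proves in Lemma~\ref{lem:recurse} (via spike deviations, reparametrizing by welfare and invoking Lemma~\ref{lem:hazardub}) that such a bound self-improves to $U_v\le \alpha\log(1/\alpha)+\alpha-M$ using $X_v\ge M$ for \emph{every} type, and then iterates from $\alpha=1$ to reach a contradiction whenever $M>1/e$. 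Your reparametrization by cumulative payment and single application of Lemma~\ref{lem:hazardub} to one type cannot substitute for this bootstrap; to fix the proof you would need to reintroduce the quantification over all types and the recursive tightening, which is essentially the paper's argument. (The appeal to Remark~\ref{remark:mech1} to assume $U>0$ is also misplaced — that remark concerns perturbing the specific pay-to-play mechanism, not an arbitrary mechanism — but this is minor compared to the issue above.)
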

\begin{proof}
In other words, we must show that for any mechanism $M \leq 1/e$.

For a type $v$ let $U_v = W_v - X_v$ be the total utility this type ends up with at the end of the game if they act truthfully. We will need the following lemma, which shows that if all $U_v$ are bounded by some constant and $M$ is large enough, we can bound $U_v$ by an even smaller constant. 

\begin{lemma}\label{lem:recurse}
If for all types $v$, $U_v \leq \alpha$ (for some $\alpha \in [0, 1]$), then for all types $v$, $U_v \leq f(\alpha) - M$, where

$$f(\alpha) = \alpha\log(1/\alpha) + \alpha$$
\end{lemma}
\begin{proof}
Fix any type $v$ and any $w \in [0, W_v]$; let $t_v(w) = w_v^{-1}(w)$. Fix a small $\eps > 0$, and consider a new type, which we will call $\spike(w, \eps)$ with value equal to $v(t)$ for $t \leq t_v(w)$, value equal to $0$ for $t \geq t_v(w) + \eps$, and a constant value $\nu_{\spike} = \frac{1}{\eps}\int_{t_v(w)}^{1}v(t)dt$ within the interval $[t_v(w), t_v(w) + \eps]$ (here $\nu_{\spike}$ is chosen so that the total value of this type equals $1$). 

This type can deviate as follows: it can pretend to be $v$ until time $t_v(w) + \eps$ and then abort. By performing this deviation, the type $\spike(w, \eps)$ receives utility at least the utility it gains during this spike, so

\begin{eqnarray*}
U'_{\spike(w, \eps)} &\geq& \nu_{\spike}\int_{t_v(w)}^{t_v(w) + \eps} r_v(t)dt - \int_{t_v(w)}^{t_v(w) + \eps} x_v(t)dt \\
&=& \left(\int_{t_v(w)}^{1}v(t)dt\right)\left(\frac{1}{\eps}\int_{t_v(w)}^{t_v(w) + \eps} r_v(t)dt\right) - \int_{t_v(w)}^{t_v(w) + \eps} x_v(t)dt.
\end{eqnarray*}

On the other hand, by dynamic incentive-compatibility it holds that the utility the spike receives if it plays truthfully, $U_{\spike(w, \eps)}$, is at least $U'_{\spike(w, \eps)}$. Since by assumption $U_{\spike(w, \eps)} \leq \alpha$, this means 

$$\alpha \geq \left(\int_{t_v(w)}^{1}v(t)dt\right)\left(\frac{1}{\eps}\int_{t_v(w)}^{t_v(w) + \eps} r_v(t)dt\right) - \int_{t_v(w)}^{t_v(w) + \eps} x_v(t)dt.$$

Since $r_v(t)$ is continuous, taking the limit as $\eps \rightarrow 0$, the term $\left(\frac{1}{\eps}\int_{t_v(w)}^{t_v(w) + \eps} r_v(t)dt\right)$ converges to $r_v(t_v(w)) = r_v(w)$, and the term $\int_{t_v(w)}^{t_v(w) + \eps} x_v(t)dt$ converges to $0$. It follows that

\begin{equation}\label{eqn:alphapre}
    \alpha \geq \left(\int_{t_v(w)}^{1}v(t)dt\right)r_v(w).
\end{equation}

We'll now express the integral in (\ref{eqn:alphapre}) in terms of $w$. First, note that since $\int_{0}^{1} v(t)dt = 1$, 

$$\int_{t_v(w)}^{1}v(t)dt = 1 - \int_{0}^{t_v(w)}v(t)dt.$$

We will perform a substitution to express $\int_{0}^{t_v(w)}v(t)dt$ as an integral in terms of $w$. Since $w_{v}(t) = \int_{0}^{t}r_v(s)v(s)ds$, $\frac{dw_v(t)}{dt} = r_v(t)v(t)$ and $dt = \frac{dw_v(t)}{r_v(t)v(t)}$. It follows then that 

\begin{equation}\label{eqn:ttow}
    \int_{0}^{t_v(w)}v(t)dt = \int_{0}^{w}\frac{dw'}{r_v(w')}.
\end{equation}

Subsituting this into (\ref{eqn:alphapre}), we have that:

\begin{equation}\label{eqn:alpha}
    \alpha \geq \left(1 - \int_{0}^{w}\frac{dw'}{r_v(w')}\right)r_v(w).
\end{equation}

Note that by Lemma \ref{lem:hazardub}, (\ref{eqn:alpha}) implies that 

\begin{equation}\label{eqn:hazardlem}
    \int_{0}^{\alpha\log(1/\alpha)} \frac{dw}{r_v(w)} \geq 1 - \alpha.
\end{equation}

Now, since $\int_{0}^{1}v(t)dt = 1$ and $w(1) = W_v$, substituting $w = W_v$ into (\ref{eqn:ttow}) we have that

\begin{equation}\label{eqn:wmeasure}
    \int_{0}^{W_v} \frac{1}{r_v(w)} dw = 1.
\end{equation}

    If $W_v > \alpha\log(1/\alpha)$, we therefore have that 
    
    \begin{eqnarray*}
    1 &=& \int_{0}^{W_v} \frac{1}{r_v(w)} dw \\
    &=& \int_{0}^{\alpha\log(1/\alpha)} \frac{1}{r_v(w)} dw + \int_{\alpha\log(1/\alpha)}^{W_v} \frac{1}{r_v(w)} dw\\
    &\geq & (1 - \alpha) + (W_v - \alpha\log(1/\alpha)).
    \end{eqnarray*}
    
    where here we've applied (\ref{eqn:hazardlem}) and the fact that $r_v(w) \leq 1$. For this to be true, we must have
    
    \begin{equation}\label{eqn:wbound}
        W_v \leq \alpha \log(1/\alpha) + \alpha.
    \end{equation}
    
    On the other hand, if $W_v \leq \alpha\log(1/\alpha)$, $W_v$ also satisfies (\ref{eqn:wbound}). Therefore in either case, $W_v$ is at most $\alpha \log(1/\alpha) + \alpha = f(\alpha)$.
    
    Since $U_v = W_v - X_v$, and since $X_v \geq M$ for all types $v$ (by definition of $M$), it follows that $U_v \leq f(\alpha) - M$, as desired.
\end{proof}

Now, note that if $f(\alpha) - M < \alpha$ for all $\alpha \in [0, 1]$, repeatedly applying Lemma \ref{lem:recurse} (starting from $\alpha = 1$) will eventually imply $U_{v} < 0$ for all types $v$. But this is clearly not possible in any incentive compatible mechanism (any type could improve their situation by immediately deviating and not paying). 

We claim that $f(\alpha) - M < \alpha$ for all $\alpha \in [0, 1]$ if $M > 1/e$, thus completing the proof. One way to see this is to show that the maximum value of $f(\alpha) - \alpha$ on the interval $[0, 1]$ is at most $1/e$. But $f(\alpha) - \alpha = \alpha\log(1/\alpha)$, which is maximized at $\alpha = 1/e$ and has a maximum value of $1/e$, as desired.

\end{proof}

\section{Selling to multiple bidders}

In this section, we apply the results from the single buyer problem to the problem of selling to multiple bidders. 

We begin (in Section \ref{sect:2bid}) by considering the case when there are only two bidders. We show that even with just two bidders, it is impossible to get arbitrarily close to the benchmark of selling the business (in particular, we show that no algorithm can achieve a better competitive ratio than $2/e$). On the other hand we show that a constant competitive-ratio is achievable: by adapting the pay-to-play mechanism from the single buyer game, we can get a $(1/2e)$-competitive mechanism. 

For $k$ bidders, this same strategy results in a $(1/ke)$-competitive mechanism. We show how to improve this by adding competition in the form of a first-price auction. The resulting mechanism is no longer truthful, but has the guarantee that it is $\Theta(1)$-competitive as long as all bidders play non-dominated strategies.

\subsection{Two bidders}\label{sect:2bid}

We begin by showing that there is no limited liability mechanism which is guaranteed to get more than a factor of $2/e$ of the benchmark of selling the business. 

\begin{theorem}\label{thm:2plb}
There is no truthful mechanism for two bidders which is $\alpha$-competitive against selling the business for $\alpha > 2/e$.
\end{theorem}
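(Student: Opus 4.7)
The plan is to reduce the two-bidder lower bound to the single-bidder lower bound of Theorem~\ref{thm:singlb}. Suppose for contradiction that a truthful two-bidder mechanism $\mathcal{M}$ is $\alpha$-competitive for some fixed $\alpha > 2/e$. First I would symmetrize $\mathcal{M}$, without loss of generality, by having the seller randomly permute the two bidder labels at the start of the protocol; this preserves truthfulness, $\alpha$-competitiveness and limited liability, while enforcing the payment-function identity $x_1(\vb_1, \vb_2, t) = x_2(\vb_2, \vb_1, t)$ for every profile.

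For any ``dummy type'' $\vb^*$ with $|\vb^*| = V$, I would then build an induced single-buyer mechanism $\mathcal{M}^*_{\vb^*}$ for a buyer of known total value $V$. Upon receiving the buyer's reported type $\vb$, the seller flips a fair coin: on heads, the seller runs $\mathcal{M}$ on $(\vb, \vb^*)$ and passes bidder~$1$'s allocation/payment back to the real buyer, while on tails the seller runs $\mathcal{M}$ on $(\vb^*, \vb)$ and passes bidder~$2$'s allocation/payment back. Because $\vb^*$ is chosen independently of $\vb$, both IC and per-round limited liability for the real buyer follow at once from the corresponding properties of $\mathcal{M}$ in whichever role the buyer is placed. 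By the symmetrization above, $x_2(\vb^*, \vb) = x_1(\vb, \vb^*)$, so the expected revenue of $\mathcal{M}^*_{\vb^*}$ on input $\vb$ simplifies to $\tfrac12 x_1(\vb, \vb^*) + \tfrac12 x_2(\vb^*, \vb) = x_1(\vb, \vb^*)$. Applying Theorem~\ref{thm:singlb} to $\mathcal{M}^*_{\vb^*}$ therefore gives $\min_{\vb : |\vb|=V} x_1(\vb, \vb^*) \leq V/e + O(1)$ for every $\vb^*$.

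The crux is a fixed-point step: I would argue there exists $\vb^*$ which is itself a minimizer of $x_1(\,\cdot\,, \vb^*)$, so that $x_1(\vb^*, \vb^*) \leq V/e + O(1)$. At such a symmetric profile $(\vb^*, \vb^*)$, the fact that $\Rev_{STB}(\vb^*, \vb^*) = V$ together with $\alpha$-competitiveness gives $x_1(\vb^*, \vb^*) + x_2(\vb^*, \vb^*) \geq \alpha V - O(1)$, and the symmetry of $\mathcal{M}$ collapses both summands to the same value, so that $x_1(\vb^*, \vb^*) \geq \alpha V/2 - O(1)$. The two bounds on $x_1(\vb^*, \vb^*)$ combine to $\alpha \leq 2/e + O(1/V)$, which contradicts $\alpha > 2/e$ once $V$ (equivalently $T$) is taken large enough.

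The hard part is establishing the fixed point. The natural tool is Kakutani's theorem applied to the best-response correspondence $\Phi(\vb) := \text{argmin}_{\vb' : |\vb'|=V}\, x_1(\vb', \vb)$ over the compact space of valid types (the per-round simplex of value profiles in the discrete model, or bounded Lipschitz valuations in the continuous model). Upper hemicontinuity of $\Phi$ follows from continuity of $x_1$ in the types, which can be arranged by a negligible smoothing of the mechanism; convex-valuedness of $\Phi$ can be forced by a vanishing strictly convex perturbation of the payment function. Both perturbations are absorbed into the $O(1)$ additive slack. A more elementary alternative is to finitely discretize the type space, apply Brouwer's theorem to the finite-dimensional simplex, and extract a fixed point in the original space by a compactness limit as the discretization refines.
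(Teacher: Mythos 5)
Your reduction skeleton is sound and close in spirit to the paper's: use the single-buyer bound of Theorem~\ref{thm:singlb} to show each bidder's payment is at most $V/e+O(1)$ on some hard input, while $\alpha$-competitiveness at a profile where both totals equal $V$ forces total payments of at least $\alpha V-O(1)$, so that $\alpha\le 2/e$ in the limit. The genuine gap is the fixed-point step, which is exactly where the order-of-quantifiers difficulty of the problem lives. Theorem~\ref{thm:singlb} only says that for each fixed dummy $\vb^*$ there is \emph{some} bad type $\vb$ (depending on $\vb^*$) with $x_1(\vb,\vb^*)\le V/e+O(1)$; you need the bad type to coincide with the dummy. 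Kakutani does not apply to $\Phi(\vb)=\arg\min_{\vb'}x_1(\vb',\vb)$ for an arbitrary truthful mechanism: $x_1$ has no continuity in either argument, and the repairs you propose are not available. Smoothing the allocation/payment rules over reported types, or adding a perturbation to the payment function, is not shown to preserve incentive compatibility or per-round limited liability (the IC constraint here is conditional on the deviation being feasible under limited liability, and the paper itself notes in Appendix~\ref{sect:2p-disc-ub} that even mixtures of truthful protocols need not be truthful for this reason). Moreover, a ``vanishing strictly convex perturbation'' of a non-convex objective does not make the argmin convex-valued, and making the perturbation large enough to do so would destroy the property that the fixed point is an approximate minimizer of $x_1(\cdot,\vb^*)$, which is the whole point. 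Your ``more elementary'' Brouwer-on-a-discretization alternative has the same problem: a best-response map on pure types is not a continuous self-map, and passing to mixed strategies (distributions over types) to fix this is no longer a pure fixed point --- it is precisely the minimax/Nash existence statement.

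That distributional statement is how the paper closes the gap: it discretizes types and protocols (introducing $\eps_T$-truthful, $\eps_T$-discrete protocols so the action spaces are finite), applies von Neumann's minimax theorem to extract a single hard distribution $\mu$ over single-buyer types on which \emph{every} single-buyer mechanism earns at most $V/e$ in expectation, and then builds the two-bidder hard distribution by giving bidder~1 a $\mu$-sample on the first $T$ of $2T$ rounds and bidder~2 an independent $\mu$-sample on the last $T$ rounds; plugging a dummy draw from $\mu$ into the other slot decomposes $\mathcal{M}$ into two single-buyer mechanisms whose expected revenues sum to $\mathcal{M}$'s, yielding the contradiction. If you replace your pure fixed point $\vb^*$ by an i.i.d.\ pair drawn from such a hard distribution $\mu$ (your symmetrization then gives $2\,\E[x_1]\ge\alpha V-O(1)$ while the single-buyer bound gives $\E_{\vb\sim\mu}[x_1(\vb,\vb^*)]\le V/e+O(1)$ for every fixed $\vb^*$), your argument goes through --- but the existence of $\mu$ is exactly the minimax ingredient, so at that point you have reproduced the paper's proof rather than found a way around it.
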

\begin{proof}[Proof Sketch]
Fix a $\nu > 0$ and a sufficiently large $T$. Let $V = \nu T$. From our upper bound (Theorem \ref{thm:singlb}), we know that for any truthful mechanism for the single bidder game, there exists a value profile $\vb$ with $\sum_t v_t = V$ such that our mechanism receives revenue at most $V/e$ on this profile. By the minimax theorem, this means there exists a distribution $\mu$ over value profiles $\vb$ with $|\vb| = V$ such that no mechanism receives expected revenue more than $V/e$ on a value profile sampled from this distribution. (There is a technical issue here in that von Neumann's minimax theorem requires finite action spaces, but here there are an uncountably infinite set of valuations and protocols. We address this by performing appropriate discretizations -- the full detailed proof is in Appendix \ref{sect:2p-disc-ub}). 

Consider the following distribution $\D$ of instances for the two bidder game. All instances will have $2T$ rounds. The valuation profile $\vb_1$ of the first bidder will be sampled so that $(v_{1, 1},  \dots, v_{1, T})$ is sampled according to $\mu$ and $v_{1, t} = 0$ for $t > T$. Likewise, $\vb_2$ will be sampled so that $(v_{2,T+1}, \dots, v_{2, 2T})$ is (independently from $\vb_1$) sampled according to $\mu$ and so that $v_{2, t} = 0$ for $t \leq T$. 

Assume to the contrary that there exists a truthful mechanism $\mathcal{M}$ for two bidders which is $\alpha$-competitive for some $\alpha > 2/e$. Since $|\vb_1| = |\vb_2| = V$ for all instances $\vb$ in the support of $\D$, $\Rev_{STB}(\vb) = V$, and this means that this mechanism receives expected revenue at least $\alpha V$ over instances sampled from $\D$. 

Now, note that from the mechanism $\mathcal{M}$ we can construct two mechanisms $\mathcal{M}_1$ and $\mathcal{M}_2$ for the single-bidder game. To construct $\mathcal{M}_1$, after soliciting the bidder's valuation $\vb$, additionally sample a dummy valuation $\vb'$ from $\mu$. Consider running $\mathcal{M}$ on the valuation profile $(\vb, \vb')$: since $\vb$ and $\vb'$ have disjoint supports, $\mathcal{M}$ chooses some sequence of allocations / payments for the first bidder for the first $T$ rounds and some sequence of allocations / payments for the second bidder for the second $T$ rounds. Use this sequence of allocations / payments for the first $T$ rounds as the allocations / payments for $\mathcal{M}_1$; since $\mathcal{M}$ is truthful, it follows that $\mathcal{M}_1$ is also truthful. We construct $\mathcal{M}_2$ symmetrically.

By construction, the expected revenue of $\mathcal{M}$ on instances drawn from $\D$ is the same as sum of the expected revenues of mechanisms $\mathcal{M}_1$ and $\mathcal{M}_2$ on instances drawn from $\mu$. Since $\mathcal{M}$ receives expected revenue at least $\alpha V$, this means that either $\mathcal{M}_1$ or $\mathcal{M}_2$ must receive expected revenue at least $(\alpha / 2) V$ on instances from $\mu$. But $(\alpha / 2)V > V/e$, contradicting the fact that no mechanism can receive expected revenue larger than $V/e$ from $\mu$. The theorem statement follows.
\end{proof}

Note that since we can always embed the 2 bidder game in the $k$ bidder game for any $k > 2$, we have the immediate corollary that there is no $\alpha$-competitive mechanism for $k$ bidders with $\alpha \geq 2/e$. 

\begin{corollary}\label{cor:kbidderub}
For any $\eps > 0$ and integer $k \geq 2$, there is no mechanism for $k$ $\eps$-truthful bidders which is $\alpha$-competitive against selling the business for $\alpha > 2/e$.
\end{corollary}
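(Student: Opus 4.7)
The plan is to reduce the $k$-bidder instance to the $2$-bidder instance by padding with ``dummy'' bidders whose valuations are identically zero, and then invoke Theorem \ref{thm:2plb}. Suppose, for contradiction, that for some $k \geq 2$ and $\alpha > 2/e$ there were a truthful (or $\eps$-truthful) mechanism $\mathcal{M}_k$ for $k$ bidders which is $\alpha$-competitive against selling the business. I will use $\mathcal{M}_k$ as a black box to build a mechanism $\mathcal{M}_2$ for two bidders which is also $\alpha$-competitive, contradicting Theorem \ref{thm:2plb}.

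The construction of $\mathcal{M}_2$ is as follows. On any $2$-bidder type profile $\vb = (\vb_1, \vb_2)$, form the $k$-bidder type profile $\vb^+ = (\vb_1, \vb_2, \mathbf{0}, \dots, \mathbf{0})$ obtained by appending $k-2$ bidders whose valuations are $0$ in every round, and simulate $\mathcal{M}_k$ on $\vb^+$. Allocations, payments, and reimbursements for the first two slots of $\mathcal{M}_k$ become the allocations, payments, and reimbursements of $\mathcal{M}_2$; the remaining $k-2$ ``virtual'' bidders contribute no payments (their valuations are zero, so limited liability forces $x_{\vb^+,i,t} = 0$ for $i \geq 3$, and any reimbursements $g_{\vb^+,i}$ for them are wasted). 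The first thing to check is that $\mathcal{M}_2$ inherits truthfulness / limited liability from $\mathcal{M}_k$: any profitable deviation for a bidder in $\mathcal{M}_2$ immediately lifts to a profitable deviation in $\mathcal{M}_k$, since the dummy bidders never defect and their state is fully determined by $\vb^+$. This is a routine check.

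Next, I will compare benchmarks and revenues. Because appending zero-valuation bidders does not change the second-largest of $\{|\vb_i|\}$, we have $\Rev_{STB}(\vb^+) = \Rev_{STB}(\vb)$. On the revenue side, $\Rev_{\mathcal{M}_2}(\vb) = \Rev_{\mathcal{M}_k}(\vb^+) + \sum_{i \geq 3} g_{\vb^+,i} \geq \Rev_{\mathcal{M}_k}(\vb^+)$, where the last inequality uses only that reimbursements may be shoved back into $\mathcal{M}_2$'s revenue in the worst case (and in fact we can instead have $\mathcal{M}_2$ simulate $\mathcal{M}_k$ while setting dummy reimbursements to $0$, which preserves truthfulness because no real bidder interacts with them). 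Either way,
\[
\Rev_{\mathcal{M}_2}(\vb) \;\geq\; \Rev_{\mathcal{M}_k}(\vb^+) - O(k) \;\geq\; \alpha \cdot \Rev_{STB}(\vb^+) - O(k) \;=\; \alpha \cdot \Rev_{STB}(\vb) - O(1),
\]
where the $O(k)$ additive slack is absorbed into the additive $O(1)$ in the definition of $\alpha$-competitiveness (since $k$ is a fixed constant here).

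Thus $\mathcal{M}_2$ is an $\alpha$-competitive truthful mechanism for two bidders with $\alpha > 2/e$, contradicting Theorem \ref{thm:2plb}. The only subtlety worth pausing on is making sure the $\eps$-truthful extension in the statement is handled gracefully: one needs to verify that $\mathcal{M}_2$ inherits the same $\eps$-truthfulness parameter from $\mathcal{M}_k$ under the embedding, which again is immediate because the dummy bidders are non-strategic by construction. This, together with a correspondingly $\eps$-truthful strengthening of Theorem \ref{thm:2plb}, yields the corollary. The main (minor) obstacle is just tracking additive $O(k)$ versus $O(1)$ terms and making sure limited liability is preserved for the dummy bidders, both of which are bookkeeping.
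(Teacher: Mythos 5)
Your proposal is correct and follows essentially the same argument as the paper: pad the two-bidder instance with $k-2$ zero-valuation dummy bidders, observe that $\Rev_{STB}$ is unchanged and that truthfulness, limited liability, and the revenue guarantee carry over, and conclude by contradiction with Theorem \ref{thm:2plb}. The extra bookkeeping you do on dummy reimbursements and the $\eps$-truthfulness parameter is consistent with (and slightly more explicit than) the paper's treatment.
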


\begin{proof}
See Appendix \ref{sect:kp-disc-ub}.
\end{proof}

On the other hand, we can reuse the pay-to-play mechanism of Section \ref{sect:singlb} to get a $(1/2e)$-competitive mechanism. This mechanism works as follows:

\begin{enumerate}
    \item At the beginning of the game, ask both players for their total values $V_1 = \sum_{t} v_{1, t}$ and $V_2 = \sum_{t} v_{2, t}$.
    \item Define 
    
    $$\rho(w) = \begin{cases}
e^{ew - 1} & \mbox{ if } w \in [0, 1/e] \\
1 & \mbox{ if } w \in [1/e, \infty)
\end{cases}$$

    (Recall that this is the rate function for the single buyer mechanism in Theorem \ref{thm:mainalg}).
    
    \item
    In a given round, let $X_1$ and $X_2$ be the total payments of players 1 and 2 thus far. Allocate fraction $\rho(2X_1/V_2)/2$ of the item to player 1, and allocate fraction $\rho(2X_2/V_1)/2$ of the item to player 2. Note that since $\rho(x) \leq 1$, our allocation is valid.
\end{enumerate}

\begin{theorem}\label{thm:twopmech}
This mechanism for 2 bidders is $(1/2e)$-competitive mechanism against selling the business.
\end{theorem}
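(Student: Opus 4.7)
My plan is to exhibit the two-bidder mechanism as two parallel, independent instances of the single-buyer pay-to-play mechanism of Theorem~\ref{thm:mainalg}, each operating on a stream of ``half items,'' and then invoke the single-buyer revenue guarantee on the side where the pay-to-play parameter is dominated by the bidder's true value.

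First, I would verify basic feasibility and truthfulness. The two allocation rates sum to $\rho(2X_1/V_2)/2 + \rho(2X_2/V_1)/2 \leq 1$ since $\rho \leq 1$, so the allocations are valid. Moreover player $i$'s allocation rule depends only on $V_j$ (for $j\neq i$) and on player $i$'s own cumulative payment $X_i$, so player $i$'s utility is independent of their own reported total $V_i$; this alone makes truthful reporting of $V_i$ weakly dominant, and combined with the single-buyer analysis below it establishes dominant-strategy truthfulness for the full mechanism.

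Next, without loss of generality assume $V_1 \geq V_2$, so that $\Rev_{STB}(\vb) = V_2$. I would then restrict attention to player $1$'s side of the mechanism and reinterpret it as a single-buyer pay-to-play game on half items: the rate function $\rho$ is evaluated at $X_1/(V_2/2) = 2X_1/V_2$, each half item is worth $v_{1,t}/2$ to player $1$, and the assumed total value is $V_2/2$. Because player $1$'s actual total value on this half-item stream is $V_1/2 \geq V_2/2$, Remark~\ref{remark:mech2} applies to this subgame: it is strictly dominant for player $1$ to front-load payments up to a total of $V_2/(2e)$, yielding revenue $V_2/(2e) = \Rev_{STB}(\vb)/(2e)$ from player $1$ alone. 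Revenue from player $2$ is nonnegative and only helps. The discretization introduces an additive $O(1)$ loss inherited from the discrete version of Theorem~\ref{thm:mainalg} (see Appendix~\ref{sect:1p_disc_ub}).

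The main obstacle is making this decomposition rigorous, i.e., confirming that player $1$'s strategic problem really is isomorphic to the single-buyer pay-to-play game and is not perturbed by player $2$'s choices. The key observation is that player $1$'s instantaneous allocation and per-round limited-liability cap depend only on $X_1$, $V_2$, and $v_{1,t}$; nothing player $2$ does can change either the allocation rule $\rho(2X_1/V_2)/2$ or the cap $\rho(2X_1/V_2)\,v_{1,t}/2$. Given this independence, Lemma~\ref{lem:paytoplay-cont} and Remark~\ref{remark:mech2} transfer without modification, and the argument reduces to bookkeeping around the $O(1)$ discretization slack.
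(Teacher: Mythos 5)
Your proposal is correct and follows essentially the same route as the paper: reduce player 1's side (the player with the larger total value) to the single-buyer pay-to-play mechanism on half items with the lower bound $V_2/2$ on their value, and invoke the guarantee of Theorem~\ref{thm:mainalg} together with Remark~\ref{remark:mech2} to get revenue $V_2/(2e) = \Rev_{STB}(\vb)/(2e)$. Your added bookkeeping (feasibility of the allocation, independence of player 1's subgame from player 2's actions, weak dominance of truthfully reporting $V_i$) just makes explicit what the paper's terse proof leaves implicit.
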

\begin{proof}
Assume without loss of generality that $V_1 \geq V_2$. We therefore must show that this mechanism obtains revenue $V_2/(2e)$.

From the perspective of player 1, they are playing the single buyer mechanism for half of the item each round (for which their total valuation is $V_1/2$) with the lower bound on their value of $V_2 / 2$. Since $V_1 / 2 \geq V_2 / 2$, the single buyer mechanism guarantees that player 1 will pay at least $(1/e) \cdot (V_2 / 2) = V_2/(2e)$, as desired.
\end{proof}

\subsection{More than two bidders}\label{sect:kbidders}

What approximation factor can we achieve when we have $k > 2$ bidders? Adapting the mechanism of Theorem \ref{thm:twopmech} (by splitting the item into $k$ pieces and running the single buyer mechanism on each piece with the highest other reported value), we can obtain an approximation factor of $1/ek$. In this section, we will show that it is in fact possible to achieve a constant approximation factor independent of $k$. Unlike the other mechanisms in this paper, this mechanism is not a direct revelation mechanism -- nonetheless, we will show that as long as bidders follow non-dominated strategies, we will receive expected revenue within a constant factor of $\Rev_{STB}(\vb)$. 

\subsubsection{Selling shares via a first-price auction}

Assume, to begin, that we know any lower bound $V^*$ on the largest total value of any bidder. We will first show how to construct a mechanism that guarantees we obtain revenue at least a constant fraction of $V^*$. This mechanism will also have the property that it requires reimbursements of size $O(T)$ at the end of the protocol. Note that this is not inherently at odds with limited liability: our protocol will still never expect a bidder to pay more than their value in any given round. Nevertheless, we will show how to remove this constraint in Section \ref{sect:reimbursement}.

Roughly, our mechanism proceeds as follows. Each round we will split the item into two halves, which we will allocate in different ways. One half of the item we will allocate via a first-price auction among all $k$ bidders (each bidder will submit one bid per round in advance). We will allocate a fraction $\rho(X_i) = 10(X_i/V^*)$ of the other half of the item (i.e. $\rho(X_i)/2$ of the item) to bidder $i$, where $X_i$ is the total payment of bidder $i$ up to this round in the first-price auction.

Now, note that it is possible for $\sum_{i} \rho(X_i) > 1$; this would make it impossible to allocate the second half of the item as described above. To get around this, as soon as $\sum_{i} \rho(X_i) = 1$ we will freeze the allocation rates as is. That is, we will continue allocating the first half of the item to the bidder with the highest bid via a first-price auction, but stop updating the values $X_i$. Finally, at the end of the protocol, we will reimburse each bidder for the amount they paid in the first-price auction after the allocation rates have been fixed.

More formally, our mechanism operates as follows:

\begin{enumerate}
    \item At the beginning of the mechanism, the seller asks each bidder $i$ to (simultaneously) report their desired bid $b_{i, t}$ in the first price auction at time $t$. 
    \item In round $t$, the seller splits the item into two equal halves which they allocate separately. We will keep track of a value $X_i(t)$ for each bidder $i$ representing the amount they have paid the seller up until round $t$ before the allocation cap is hit. A fraction $\rho(X_i(t)) = 10(X_i(t)/V^*)$ of the first half (the \textit{allocation half}) is allocated to bidder $i$. We will guarantee $\sum_i \rho(X_i(t)) \leq 1$ for all $t$. 
    \item The second half of the item (the \textit{auction half}) is allocated to the bidder $i = \arg\max_{i} b_{i, t}$ (if multiple bidders share the same highest bid, split it evenly among them). This bidder is expected to pay $b_{i, t}$ this round (if they do not, we refuse to ever allocate to them in the future). If this causes $\sum_{i} \rho(X_i(t))$ to exceed $1$, decrease the payment until it exactly equals $1$. Note that if $\sum_{i} \rho(X_i(t)) = 1$, the payment in this round does not contribute towards $X_i(t)$.
    \item Finally, at the end of the protocol, as long as they successfully paid their bids when they were chosen in step (3), we provide each bidder a reimbursement equal to the amount they paid the seller after the allocation cap was hit (i.e. after the first round where $\sum_{i} \rho(X_i(t)) = 1$). In addition, provide each bidder any $O(1)$ reimbursement $g > 1$ in case the other reimbursement is $0$. This makes it dominant for each bidder to only report bids $b_{i, t}$ which they have the ability of paying (i.e. $b_{i, t}$ satisfying $b_{i, t} \leq v_{i, t}/2$) and also to participate for the entire protocol. 
\end{enumerate}

Note that (for this mechanism) a pure strategy for bidder $i$ consists of their choice of $b_{i, t}$ to report at the beginning of the mechanism, along with a rule for when to defect from the mechanism. Recall that we write $s_i$ to denote a pure strategy for bidder $i$, $s_{-i}$ to denote a profile of pure strategies for the other $k-1$ bidders, and $U_i(s_{i}, s_{-i})$ the utility received by bidder $i$ when they play strategy $s_{i}$ and the other bidders play $s_{-i}$. We call a pure strategy $s_{i}$ weakly-dominated if there exists a (possibly mixed) strategy $\sigma'_{i}$ such that $U_{i}(s_{i}, s_{-i}) \leq U_{i}(\sigma'_{i}, s_{-i})$ for all $s_{-i}$, and if there exists at least one choice of $s_{-i}$ such that $U_{i}(s_{i}, s_{-i}) < U_{i}(\sigma'_{i}, s_{-i})$; a strategy is non-dominated if it is not weakly-dominated (and a mixed strategy is non-dominated if it is supported on non-dominated strategies). 

For example, note that since we offer a reimbursement of at least $g > 1$ at the end of the protocol, any strategy which defects (doesn't pay $b_{i, t}$ when they are expected to) is dominated; by not paying $b_{i, t}$ they possibly gain $b_{i, t}$ but lose $g$ (plus any other utility they may have obtained) for a net negative loss. In other words, all non-dominated strategies will never defect.

We show that, as long as all bidders are playing non-dominated strategies, the above mechanism extracts a constant fraction of $V^*$. 

\begin{theorem}\label{thm:kbiddermech-fpa}
Assume we are given a lower bound $V^*$ on the largest total value belonging to any bidder. Then, as long as all bidders play non-dominated strategies, the above mechanism achieves revenue at least $\alpha V^* - O(1)$ for some constant $\alpha > 0$ (independent of the number of bidders $k$).
\end{theorem}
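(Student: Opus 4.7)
The plan is to case-split on whether the allocation cap $\sum_i \rho(X_i(t)) = 1$ is ever triggered during the protocol. If it is, then because $\rho(X) = 10X/V^*$, the cumulative pre-cap payments satisfy $\sum_i X_i = V^*/10$ exactly at the moment of the trigger; all subsequent first-price auction payments are refunded by construction, and the per-bidder participation reimbursements $g$ contribute only $O(1)$ each. Hence the seller collects at least $V^*/10 - O(1)$, giving $\alpha = 1/10$ in this case.

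The main work is to rule out the case where the cap is never triggered. Let $j^*$ be any bidder with $V_{j^*} \geq V^*$, and let $T^*$ be the largest time with $\sum_{s > T^*} v_{j^*, s} \geq V^*/5$, so that the ``early regime'' $\{1, \dots, T^*\}$ carries at least $V_{j^*} - V^*/5 - O(1) \geq 4V^*/5 - O(1)$ of $j^*$'s value. The key claim is that non-dominance of $s_{j^*}$ forces $b_{j^*, t} = v_{j^*, t}/2$ on every round $t \leq T^*$. To verify this, suppose $b_{j^*, t} < v_{j^*, t}/2$ for some such $t$, and compare $s_{j^*}$ to the alternative $s'_{j^*}$ that replaces only this bid by $v_{j^*, t}/2$. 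A direct computation shows that each extra unit of payment made on round $t$ raises $\rho(X_{j^*})$ by $10/V^*$ on every future round, producing a future allocation-half utility gain of $(5/V^*) \sum_{s > t} v_{j^*, s} \geq 1$ per unit of added payment. Case-splitting on whether $j^*$'s win/loss at round $t$ changes, one verifies that $s'_{j^*}$ is weakly better than $s_{j^*}$ against every opposing profile, and strictly better against any profile where some competitor bids in $(b_{j^*, t}, v_{j^*, t}/2)$ on round $t$---so $s_{j^*}$ is weakly dominated by $s'_{j^*}$, contradicting non-dominance.

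Given $b_{j^*, t} = v_{j^*, t}/2$ throughout the early regime, the winning bid in every such round is at least $v_{j^*, t}/2$ (whether $j^*$ or an opponent wins), so the pre-cap payments accumulated in the early regime total at least $\sum_{t \leq T^*} v_{j^*, t}/2 \geq 2V^*/5 - O(1)$. For all sufficiently large $V^*$ this exceeds $V^*/10$, forcing the cap to be triggered during the early regime---a contradiction with the assumption that it is never reached. We therefore fall into the cap-triggered case and obtain the theorem with $\alpha = 1/10$.

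The main obstacle I anticipate is making the weak-dominance step fully rigorous. Non-dominance is a weak solution concept, and weak dominance requires a pointwise comparison over all opposing strategy profiles. The delicate points are: (i) tie-breaking in the first-price auction, so that the ``opponent bids in $(b_{j^*, t}, v_{j^*, t}/2)$'' profile yields a genuine strict improvement rather than a mere tie; (ii) confirming that the marginal computation is not spoiled by second-order effects such as opponents' bids triggering the cap strictly before $t$ (in which case we are already in the cap-triggered case, so nothing needs to be proved); and (iii) the boundary round $T^* + 1$, whose potentially large value $v_{j^*, T^* + 1} \leq 1$ introduces the $O(1)$ slack absorbed into the theorem's additive loss.
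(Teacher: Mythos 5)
Your proposal follows the paper's own route almost exactly: the same case split on whether the allocation cap $\sum_i \rho(X_i)=1$ is ever hit (giving revenue $V^*/10$ when it is), and the same key lemma that in any round where the top bidder's remaining value is at least $V^*/5$ (the paper phrases this as rounds $t'$ with $W_1(t')\le 0.8V_1-1$), bidding below $v_{1,t'}/2$ is weakly dominated by the one-round modification that bids exactly $v_{1,t'}/2$; your arithmetic of ``$(5/V^*)\cdot(\text{remaining value})\ge 1$ per unit of extra payment'' is the same computation the paper performs. Your concluding contradiction (pre-cap payments in the early regime would exceed $V^*/10$, so the cap must in fact trigger) is a cosmetic variant of the paper's direct bound of $0.4V_1-O(1)$ in the no-cap case; both are fine, as is your accounting of the reimbursements (the per-bidder $g$'s make the additive loss $O(k)$, which the paper's $-O_k(1)$ convention allows).

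The one place where your sketch hides the real work is the sentence ``each extra unit of payment made on round $t$ raises $\rho(X_{j^*})$ by $10/V^*$ on every future round.'' Because payments made after the cap are refunded and the $X_i$ are frozen once the cap is reached, the relevant quantity is the net extra payment $\Delta = X'_{j^*}(T)-X_{j^*}(T)$, and the cap may trigger at different times under $s_{j^*}$ and $s'_{j^*}$ (in particular earlier under $s'_{j^*}$, precisely because of its larger round-$t$ payment, after which $X'_{j^*}$ is frozen while $X_{j^*}$ can keep growing). The paper's Lemma~\ref{lem:maxstrat} handles exactly this by showing that $X'_1(t)-X_1(t)$ is zero before $t'$, nonincreasing for $t\ge t'$, and nonnegative throughout, so every later round enjoys an allocation-rate advantage of at least $10\Delta/V^*$, which the remaining-value bound then converts into a utility gain of at least $\Delta$. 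Your anticipated obstacle (ii) points at the easy sub-case (the cap triggering before $t$, where a higher bid is harmless since those payments are refunded; the paper dispatches it in one sentence), whereas the genuinely delicate part is the post-$t$ cap timing just described --- that monotonicity argument is the ingredient your ``direct computation'' is missing. A smaller point: your strictness witness (an opponent bidding in $(b_{j^*,t},v_{j^*,t}/2)$) yields only a weak inequality at the boundary of your early regime, so either build a small additive slack into the threshold (as the paper's ``$\le 0.8V_1-1$'' does) or use the all-zero-bids opposing profile as the paper does.
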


The full proof of Theorem \ref{thm:kbiddermech-fpa} can be found in Appendix \ref{sect:kp-disc-lb}. The main intuition behind the proof is as follows. Without loss of generality, assume that bidder $1$ is the bidder with the highest total value (and so in particular, $V_1 \geq V^*$). We argue that for ``sufficiently early'' rounds $t$, if bidder $1$ ever bids less than $v_{1, t}/2$ (their value for the auction half of the item this round), then this strategy is dominated by the strategy where they bid $v_{1, t}/2$. This follows from the construction of the mechanism: increasing their bid this round costs bidder $1$ a little bit more if they win, but this is more than paid back by the value they obtain from the allocation half of the item over the remaining rounds. In particular, we can think of this as the bidder paying one unit of cost to obtain a $5/V^*$ share of the remaining value of the item; if the bidder values the remaining rounds at $V^*/5$ or more, the bidder should be happy to pay this amount (and in fact, the bidders in the mechanism need to compete for the ability to buy these shares in a first-price auction).

Since $V_1 \geq V^*$, there are many rounds where the bidder values the remainder of the item (i.e. the total value of the item across the remaining rounds) at over $V^*/5$; in fact, this happens for at least $4V_1/5$ of the bidder's welfare. If the auction is active for all these rounds, then it generates revenue at least $2V_1/5$ if bidder $1$ is playing a non-dominated strategy (and is thus bidding at least half their value per round). But if the auction ever stops, this means that $\sum_{i} \rho(X_i) = 1$, and therefore that $\sum_{i} X_i \geq V^*/10$, so we have received $V^*/10$ in revenue. In either case, we have received at least a constant fraction of $V^*$ (since $2V_1/5 \geq 2V^*/5$).

Why do we reimburse each bidder for the amount they paid after the allocation rates have been fixed? Note that if we do not do this, then bidders may be incentivized to shade their bids for rounds where they think the cap has been hit, and we can no longer argue that they will bid their full value. Providing this reimbursement allows us to encourage bidders to bid their full value for the item and incentivizes them to hit the cap.

\begin{remark}
In fact, Theorem \ref{thm:kbiddermech-fpa} does not require all $k$ bidders to play non-dominated strategies -- it merely requires the bidder with largest total value to play a non-dominated strategy. 

Similarly, although we assume bidders cannot bid above their true value, note that the mechanism in Theorem \ref{thm:kbiddermech-fpa} is still $\alpha$-competitive even if some bidders can bid above their true value: it is still in a bidder's interest to bid at least their value in early rounds of the first price auction (and in fact, it might be in their interest to bid above their value, if they can afford to do so). 
\end{remark}

\begin{remark}
This mechanism has the additional caveat that, unlike our previous mechanisms (and the $O(1/k)$-competitive generalization), it does require the bidders to know their individual per round values in advance (at least for Theorem \ref{thm:kbiddermech-fpa} to hold). Since bidders' bids can affect the utility of other bidders, allowing bidders to bid dynamically can lead to undominated grim trigger strategies where bidders punish each other for not bidding according to a specific schedule. It is an interesting open question whether it is possible to adapt this mechanism to remove these constraints. 
\end{remark}

\subsubsection{Soliciting bidders' values}

The mechanism of Theorem \ref{thm:kbiddermech-fpa} assumes we have a lower bound $V^*$ on the highest total value of a bidder, and extracts a constant fraction of this bound as revenue. However, our ultimate goal is to construct a completely prior-free mechanism that extracts constant fraction of the second-highest total value $\Rev_{STB}(\vb)$.

To do so, we will at the very beginning of the mechanism solicit the value of $\Rev_{STB}(\vb)$ using a similar technique as in our construction of the $k=2$ algorithm. Specifically, we will do the following at the very beginning of the protocol:

\begin{itemize}
    \item At the beginning of the mechanism, the seller asks each bidder $i$ to (simultaneously) report their total value $V_i$. (This should be done simultaneously with the reporting of bids in the mechanism of Theorem \ref{thm:kbiddermech-fpa}). 
    \item The seller splits the bidders randomly into two subsets $S_{price}$ and $S_{play}$. Each bidder is independently assigned to $S_{price}$ with probability $1/2$ and to $S_{play}$ with probability $1/2$. (If either set is empty the seller immediately stops the auction, never allocating the item).
    \item Define $V^* = \max_{i \in S_{price}} V_i$. The seller now removes all players in $S_{price}$ from the auction (i.e., the seller will only allocate to players in $S_{play}$ in the future).
    \item Run the mechanism from Theorem \ref{thm:kbiddermech-fpa} on the bidders in $S_{play}$ with this $V^*$.
\end{itemize}

Note that $1/4$ of the time, the bidder with the highest value belongs to $S_{play}$ and the bidder with the second highest value belongs to $S_{price}$. In this case, $V^* = \Rev_{STB}(\vb)$ and the conditions of Theorem \ref{thm:kbiddermech-fpa} are satisfied with constant probability (assuming bidders report their values truthfully). Moreover, note that the value $V$ that a bidder reports in this step is completely independent from their eventual utility (in particular, if they are selected to belong to $S_{play}$, their value $V$ is never used). It is therefore never better for a bidder to report a value $V'$ not equal to their true value $V$. 

If bidders cannot pay more than their value each round, it is possible to make it dominant for each bidder to report their true value $V$. To do so, with some small probability $\delta > 0$, instead of running the rest of the mechanism after step (1), run the following procedure: allocate $\eps \approx 1/kT$ of the item to each buyer each round. If buyer $i$ has paid a total of $\eps V_i$ at the end of the mechanism, reimburse them an additional $2\eps V_i$.

If we do this, then bidder $i$ receives a total utility of $\eps V_i - \eps V'_i + 2\eps V'_i = \eps (V_i + V'_i)$ by misreporting $V'_i$ instead of one's true valuation $V_i$; to maximize this, a bidder will want to report the largest $V'_i$ they can possibly afford, which is $V_i$ (they cannot report a larger $V'_i$ while paying at most their value every round). It is therefore dominated for a bidder $i$ to report any value $V'_i \neq V_i$.

Our main theorem for selling to $k$ buyers therefore follows.

\begin{theorem}\label{thm:kbiddermech}
As long as all bidders play non-dominated strategies, the above mechanism is an $\alpha$-competitive mechanism for $k$ bidders for some constant $\alpha > 0$ (independent of $k$).
\end{theorem}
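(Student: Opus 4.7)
The plan is to combine three ingredients: (i) the truthfulness of the value-reporting preamble, (ii) a constant-probability ``good split'' event that makes the lower bound $V^{*}$ tight, and (iii) the guarantee of Theorem \ref{thm:kbiddermech-fpa} applied to the post-split subgame. First I would verify that under the $\delta$-probability auxiliary branch every bidder strictly prefers to report $V'_{i} = V_{i}$: conditional on that branch, bidder $i$'s utility is $\eps(V_{i} - V'_{i}) + 2\eps V'_{i} = \eps(V_{i} + V'_{i})$, so under the per-round limited-liability constraint the largest affordable report is $V'_{i} = V_{i}$, and every other report is strictly worse. Since this branch is independent from the main branch, any misreport is strictly dominated across the whole mechanism; thus every non-dominated strategy reports $V_{i}$ truthfully in the opening step.

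Next I would let $i^{*}$ denote a bidder with the largest total value and $i^{**}$ a bidder with the second-largest total value, and define $E$ to be the event $\{i^{*} \in S_{play}\} \cap \{i^{**} \in S_{price}\}$. Because bidders are placed into $S_{price}$ and $S_{play}$ by independent fair coin flips, $\Pr[E] \geq 1/4$. Conditional on $E$ and on truthful reporting, $V^{*} = \max_{i \in S_{price}} V_{i} = V_{i^{**}} = \Rev_{STB}(\vb)$, and $i^{*} \in S_{play}$ has total value $V_{i^{*}} \geq V^{*}$, so the hypothesis of Theorem \ref{thm:kbiddermech-fpa} is satisfied for the subgame run on $S_{play}$. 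Applying that theorem (whose guarantee only requires the highest-value bidder in $S_{play}$ to play a non-dominated strategy) yields revenue at least $\alpha' V^{*} - O(1) = \alpha' \Rev_{STB}(\vb) - O(1)$ on event $E$.

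I would then assemble the overall bound. With probability $1-\delta$ we run the split-plus-first-price branch; on $E$ (of probability $\geq 1/4$ inside this branch) revenue is $\alpha' \Rev_{STB}(\vb) - O(1)$, and off $E$ revenue is simply nonnegative. With probability $\delta$ we run the uniform-$\eps$ branch, whose total reimbursement is at most $2\eps \sum_{i} V_{i} = O(1)$ because $\eps \approx 1/(kT)$ and each $V_{i} \leq T$, so the loss to overall revenue is $O(1)$. Combining, the expected revenue is at least $\frac{(1-\delta)\alpha'}{4}\Rev_{STB}(\vb) - O(1)$, giving the desired constant $\alpha = (1-\delta)\alpha'/4$ independent of $k$.

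The main obstacle is not the calculation but carefully partitioning the solution concept across the two stages: we must ensure that ``non-dominated'' on the full mechanism implies both truthful $V_{i}$ reporting \emph{and} the bidding behavior required by Theorem \ref{thm:kbiddermech-fpa}. The $\delta$-branch is the key device for the first, since without it a bidder in $S_{play}$ could freely shade their reported $V_{i}$; once it is in place, reporting and bidding decouple and each reduces to a previously analyzed problem. A minor point worth double-checking is that the $O(1)$ reimbursement $g$ from the inner mechanism remains $O(1)$ after averaging over the random split, so the additive loss is genuinely constant (not $O(k)$ or $O(T)$).
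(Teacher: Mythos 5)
Your proposal is correct and matches the paper's own argument essentially step for step: the $\delta$-probability spot-check branch makes truthful reporting of $V_i$ dominant, the random split yields the event (probability at least $1/4$) in which $V^* = \Rev_{STB}(\vb)$ is a valid lower bound for the highest-value bidder in $S_{play}$, and Theorem \ref{thm:kbiddermech-fpa} then gives the constant-factor guarantee. The only nit is your final caveat: the additive loss from the per-bidder reimbursements is $O(k)$ rather than $O(1)$, but this is exactly what the definition of $\alpha$-competitive (with its $O_k(1)$ slack) permits, so nothing breaks.
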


\subsubsection{Removing large reimbursements}\label{sect:reimbursement}

As written, Theorem \ref{thm:kbiddermech} could require post-protocol reimbursements of up to $O(T)$, since we reimburse bidders for all payments after the allocation rates for the allocation half have been fixed. Although this is not at odds with limited-liability, ideally we could reduce these reimbursements to $O(1)$ as in our other protocols.

Since we reimburse bidders for their payments after the allocation cap has been hit, one simple idea is to simply remove these payments. In other words, after the allocation cap has been hit, still allocate the item to the bidder with the highest bid, but charge them nothing. This largely works, but has the unfortunate side effect that now bidders might gamble and submit bids larger than their value for later rounds where they think the allocation cap has already been hit. Note that this was originally not an issue because in the first-price auction a bidder had to immediately pay their bid if they won the item, thus checking that they in fact could afford it. But now, it might be strategic for a bidder to report a higher bid than their value for a late round hoping that the cap has been met (and if the cap has not been met, defecting at that point).

Nonetheless, we can still address this in the following (admittedly contrived) way. The key idea is to spot check that bidders are capable of paying the bids they have reported. More specifically, for some uniformly randomly selected constant fraction (e.g. $50\%$) of the rounds, instead of running the mechanism described above, allocate a tiny $\eps$ of the item (for some $\eps \approx 1/kT$) to each bidder and ask them to pay $2\eps$ times their bid (since their bid was for half the item, this should be at most $\eps$ times their value). If they fail to do this, eject them from the protocol. 

We then can show that with this change, it is dominated for a limited-liability bidder to bid higher than their value. In particular, the expected increase in utility they can possibly get from bidding above their value is outweighed by the expected loss in future utility (including the $O(1)$ reimbursement at the end) which they would lose from being forced to defect.

\begin{corollary}\label{cor:small_reimbursements}
It is possible to implement Theorem \ref{thm:kbiddermech} with reimbursements of size $O(1)$. 
\end{corollary}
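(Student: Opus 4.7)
The plan is to modify the mechanism of Theorem \ref{thm:kbiddermech} in two ways. First, after the allocation cap is hit (the first round with $\sum_i \rho(X_i(t)) = 1$), continue to award the auction half to the highest bidder but charge them nothing; this eliminates exactly the payments that the $O(T)$ reimbursement was refunding. Second, independently at each round flip a fair coin: with probability $1/2$ the round becomes a \emph{spot-check}, in which the seller hands $\eps := 1/(kT)$ of the item to every remaining bidder and demands a payment of $2\eps \cdot b_{i,t}$, ejecting any bidder who refuses. Otherwise the round is \emph{normal} and follows the modified mechanism above.

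The first step of the analysis is to show that under limited liability every non-dominated strategy satisfies $b_{i,t} \le v_{i,t}/2$ on every round. Suppose $s_i$ bids $b_{i,t} > v_{i,t}/2$ on some round $t$, and let $s_i'$ be the strategy that agrees with $s_i$ except it caps this single bid at $v_{i,t}/2$. On a spot-check round, $2\eps b_{i,t} > \eps v_{i,t}$ exceeds the limited-liability budget, so $s_i$ must refuse and is ejected. On a normal pre-cap round, if $s_i$ wins it again cannot pay its bid and is ejected. On a normal post-cap round $s_i$ can conceivably gain at most $v_{i,t}/2 \le 1/2$ over $s_i'$ (by winning the auction half where $s_i'$ would not). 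Taking expectation over the independent spot-check coin, and using that an ejection costs at least the $O(1)$ terminal reimbursement $g$, one obtains $U_i(s_i', s_{-i}) - U_i(s_i, s_{-i}) \ge g/2 - 1/4 > 0$ for every $s_{-i}$, provided $g > 1/2$. Hence $s_i$ is weakly (in fact strictly) dominated.

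The second step reuses the proof of Theorem \ref{thm:kbiddermech-fpa}. As in Theorem \ref{thm:kbiddermech}, condition on the constant-probability event that the top-value bidder lies in $S_{play}$ while the second-highest lies in $S_{price}$, so that $V^* \ge \Rev_{STB}(\vb)$. A Chernoff bound guarantees that with probability $1 - e^{-\Omega(T)}$ at least $(1/2 - o(1))T$ rounds are normal. On each such round, step one caps the top bidder's bid at $v_{1,t}/2$ while the exchange argument of Theorem \ref{thm:kbiddermech-fpa} forces any non-dominated bid to be at least $v_{1,t}/2$, so the top bidder bids exactly $v_{1,t}/2$ on every normal round. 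Replaying the Theorem \ref{thm:kbiddermech-fpa} calculation on this constant fraction of rounds yields either $\Omega(V^*)$ from the first-price auction before the cap is hit or $\sum_i X_i \ge V^*/10$ once it is; since no post-cap payments are now collected, the only reimbursements remaining are the $O(1)$ per-bidder $g$ terms, giving the claimed $\alpha \cdot \Rev_{STB}(\vb) - O(1)$ bound.

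The main obstacle is the dominance argument in step one: one must calibrate $g$ and $\eps$ so that the expected ejection penalty strictly dominates the one-shot gain from overbidding, uniformly over opponent profiles, without the $\eps$-sized spot-check perturbations destroying so much revenue in step two that the competitive ratio collapses. With $\eps = 1/(kT)$ and any $g > 1$, both conditions hold, and the rest is a routine adaptation of Theorem \ref{thm:kbiddermech-fpa} restricted to the (random) normal-round subsequence.
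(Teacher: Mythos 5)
Your overall plan matches the paper's (drop the post-cap payments, add probability-$1/2$ spot-check rounds with an $\eps \approx 1/kT$ allocation and demanded payment $2\eps b_{i,t}$, argue overbidding is dominated, then rerun the analysis of Theorem \ref{thm:kbiddermech-fpa}), but your dominance step has a genuine gap. You cap an \emph{arbitrary} overbid $b_{i,t} > v_{i,t}/2$ and claim $U_i(s_i', s_{-i}) - U_i(s_i, s_{-i}) \ge g/2 - 1/4$ for every $s_{-i}$. This fails when $s_i$ (and hence $s_i'$, which you leave unchanged elsewhere) contains further overbids after round $t$: in the branch where round $t$ is a check, $s_i$ is ejected, but $s_i'$ is then itself ejected at its next surviving overbid round with probability $1/2$ per round, so it retains $g$ only with vanishing probability and its surplus in that branch is merely the $O(1)$ utility it collects before its own ejection; in the branch where round $t$ is normal and post-cap, $s_i$ still collects the free auction half, worth up to $v_{i,t}/2$, and the continuations coincide. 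With opponents who bid just below $b_{i,t}$ at round $t$ and tie $s_i'$ thereafter, the difference can be made nonpositive, so $s_i'$ does not dominate $s_i$. The paper avoids this by capping the \emph{last} overbid $\tau$: after $\tau$ both strategies are check-proof, so the $g > 1$ reimbursement and the entire remaining utility really are retained by the modified strategy, and the exchange closes (alternatively one could induct backwards over overbid rounds, but some such choice is needed).

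Second, your accounting omits the allocation-half share channel and the corresponding devices in the paper: running the mechanism with $0.5V^*$ in place of $V^*$ and initializing each $X_i$ to $1$. Under the model's limited-liability constraint (payment at most $r_{i,t}v_{i,t}$ for the \emph{total} fraction allocated that round), a bid somewhat above $v_{i,t}/2$ can actually be paid by a bidder who also holds an allocation share, and winning with it increases $X_i$ and hence the bidder's share of the allocation half on all remaining rounds --- a gain that can be of order $(1/V^*)\sum_{s>t} v_{i,s}$, far exceeding $g/2$. The paper's initialization ensures that ejection at a check forfeits at least a comparable baseline share of the remaining rounds, which is precisely what balances this term; your proposal has no analogue. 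Finally, your Chernoff bound on the \emph{number} of normal rounds does not control the top bidder's \emph{value} on normal rounds (their value could be concentrated on a few rounds that happen to be checks); the correct statement, and the reason the paper runs the mechanism with the parameter $0.5V^*$, is that the top bidder's expected value over non-check rounds is at least $V_1/2 \ge V^*/2$, after which the argument of Theorem \ref{thm:kbiddermech-fpa} applies in expectation.
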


Details and the full proof of Corollary \ref{cor:small_reimbursements} can be found in Appendix \ref{sect:kp-disc-lb}.

\bibliographystyle{ACM-Reference-Format}
\bibliography{references}

\appendix

\section{Omitted proofs}

\subsection{Mechanism for one buyer in the discrete model}\label{sect:singlb-app}

In this appendix, we extend the mechanism of Theorem \ref{thm:mainalg} (showing it is possible to obtain revenue $V/e$ in the single-buyer game) to the discrete model.

As in the continuous setting, a pay-to-play mechanism is specified by a continuous, weakly-increasing function $\rho(X): [0, 1] \rightarrow (0, 1]$, denoting the fraction of the item we allocate to the buyer if the buyer has given the seller a fraction $X$ of his welfare so far. The buyer is free to pay however much they wish to at time $t$, under the constraint that a limited-liability buyer cannot pay more than $\rho(X)$ of their value in any round. In our discrete setting, a pay-to-play mechanism may additionally provide a small ($O(1)$) reimbursement to the buyer at the end of the game if they pay at least some specified fraction of their welfare to the seller. 

We have the following discrete analogue of Lemma \ref{lem:paytoplay-cont}.

\begin{lemma}\label{lem:paytoplay-disc}
Let $\rho(X): [0, 1] \rightarrow (0, 1]$ be a weakly-increasing Lipschitz-continuous function and let $X_{opt}$ be a value of $x \in [0, 1]$ which maximizes the expression

$$\rho(x)\left(1 - \int_{0}^{x}\frac{dx'}{\rho(x')}\right).$$

Consider the pay-to-play mechanism defined by $\rho(X)$ where the buyer with type $\vb$ receives a reimbursement of $\Theta(1)$ if they pay at least $X_{opt}V$. Fix a type $\vb$ with total value $V \geq 1$. Then there exists a dominant strategy for the buyer with type $\vb$ where they pay at least $X_{opt}V$ to the seller. 
\end{lemma}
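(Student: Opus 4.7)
The plan is to mirror the continuous argument of Lemma~\ref{lem:paytoplay-cont}, tracking discretization error as an additive $O(1)$ term and then using the $\Theta(1)$ reimbursement to certify the existence of a dominant strategy that clears the $X_{opt}V$ threshold. First I would establish the \emph{front-loading} reduction: because $\rho$ is weakly increasing, any strategy that pays a total of $Y$ to the seller is weakly dominated by the strategy that pays as much as possible each round (under the limited-liability cap $\rho(X_t)v_t$) until the cumulative payment first exceeds $Y-1$, and then stops. This is because moving a unit of payment earlier only increases $X_t$ at all later times, hence weakly increases every future allocation rate $\rho(X_t)$. So without loss of generality the buyer's strategy is parameterized by a single number $\bar x\in[0,1]$, namely the intended payment fraction, with $O(1)$ roundoff in the achieved total coming from the last round's limited-liability cap.

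Next I would compute the utility of the front-loaded strategy with parameter $\bar x$. Let $\tau$ be the last round in which the buyer pays. While paying, the per-round contribution to $w_v(t)$ equals the per-round payment, so $w_v(\tau)=\bar xV\pm O(1)$. The net utility from rounds $1,\dots,\tau$ is $0$ (the buyer hands the seller everything they receive), and the net utility from rounds $\tau+1,\dots,T$ is
\[
\rho(\bar x)\Bigl(V-\sum_{t=1}^{\tau}v_t\Bigr).
\]
Using the same change-of-variables as in Lemma~\ref{lem:paytoplay-cont}, $\sum_{t\le\tau}v_t$ is a Riemann sum for $V\int_0^{\bar x}dw'/\rho(w')$. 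Since $\rho$ is Lipschitz and bounded away from $0$, and since each $v_t\le 1$, the Riemann sum approximates the integral to within $O(1)$. Combining and adding the reimbursement $R(\bar x)\in\{0,\Theta(1)\}$, the utility of the $\bar x$-strategy satisfies
\[
U(\bar x)\;=\;V\,\rho(\bar x)\!\left(1-\int_0^{\bar x}\frac{dw}{\rho(w)}\right)+R(\bar x)\pm C,
\]
for some absolute constant $C$ depending only on the Lipschitz constant of $\rho$ and the bound $v_t\le 1$.

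Finally I would conclude by choosing the reimbursement constant $\Theta(1)$ strictly larger than $2C$. At $\bar x=X_{opt}$ the continuous expression is (by definition of $X_{opt}$) at least as large as at any other $\bar x$, and the reimbursement contributes a positive amount larger than $2C$; at any $\bar x<X_{opt}$ the continuous expression is no larger and the reimbursement is $0$, so the $X_{opt}$-strategy wins by at least the reimbursement minus $2C>0$. Thus the front-loaded strategy aiming for total payment $X_{opt}V$ is (weakly) optimal among all front-loaded strategies, and by the reduction in the first step it is a dominant strategy overall; by construction it pays at least $X_{opt}V$ (up to the $O(1)$ roundoff, which is absorbed by pushing the target slightly above $X_{opt}$).

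The main technical nuisance I expect is the discretization bookkeeping: justifying that the Riemann-sum-to-integral replacement only costs an absolute $O(1)$ additive error uniformly in the type $\vb$ (using Lipschitz continuity of $\rho$ and $v_t\in[0,1]$), and making sure the last-round cap and the ``first exceeds $\bar xV$'' definition of $\tau$ do not compound this error over rounds. Once the error is controlled by a single constant $C$ independent of $T$ and $V$, the reimbursement of magnitude strictly larger than $2C$ cleanly certifies the existence of the desired dominant strategy.
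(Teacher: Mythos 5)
Your proposal is correct and follows essentially the same route as the paper's proof: front-load payments using monotonicity of $\rho$, reduce to a single parameter $\bar{x}$, show via a change-of-variables/Riemann-sum argument that the discrete utility is within an absolute additive constant of $V\rho(\bar{x})\left(1-\int_0^{\bar{x}}\frac{dw}{\rho(w)}\right)$, and use the $\Theta(1)$ reimbursement (exceeding the discretization error) to make paying at least $X_{opt}V$ dominant. The only cosmetic difference is bookkeeping: the paper bounds the Riemann-sum error by partitioning time into welfare increments of width $1/V$, while you sum round-by-round with mesh $r_t v_t \le 1$; both give the same $O(1)$ error controlled by $1/\rho(0)$ and $v_t \le 1$.
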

\begin{proof}
Begin by fixing any dominant strategy for the buyer $(r_{\vb, t}, x_{\vb, t})$ (since there is only one player, this is just any strategy which optimizes the buyer's utility). We first note that if $x_{\vb, t} > 0$, then $x_s = r_sv_s$ for all $s < t$ (the bidder passes along all reward at times $s < t$). In other words, the buyer's payments are front-loaded; i.e. any dominant strategy must pay as much as possible until it passes some threshold, then stop paying entirely. To see this, it suffices to note that since $\rho(w)$ is an increasing function of $w$, moving later payments earlier increases the value of all future rewards and hence is strictly optimal.

Any dominant strategy for the buyer can therefore be characterized by the total amount the buyer gives to the seller. We will show there is one dominant strategy where they give at least $X_{opt}V$. Assume that the buyer gives a total of $\overline{x}V$, for some $\overline{x} \in [0, 1]$. We introduce the following notation. 

Let $w_t = \frac{1}{V}\sum_{s \leq t} r_{\vb, s}v_{\vb, s}$ be the (normalized by $V$) total amount the buyer receives up until time $t$. We will divide time (specifically, the time where the buyer pays the seller) into equal regions by $w_t$. In particular, let $\Delta = 1/V$ and $N = \lceil \overline{x} V \rceil$, and set $T_i = \{ t | w_t \in [i\Delta, (i+1)\Delta)\}$ for $0 \leq i < N$. Let $s_i = \min T_i$ and $e_i = \max T_i$ (so $T_i = [s_i, e_i]$).

Let $V_i = \sum_{t \in T_i} v_t$ and $W_i = \sum_{t \in T_i} x_t = \sum_{t \in T_i} r_tv_t$. Let $U$ be the total utility of the bidder at the end of the game (but before any reimbursement). Note that we can approximate $U$ (to within $O(1)$) via

\begin{equation}\label{eqn:bidder_util}
\left|U - \rho(\overline{x})\left(V - \sum_{i=0}^{N-1}V_i\right)\right| \leq 1,
\end{equation}

\noindent
since the bidder receives all their reward after they stop paying at a rate of $\rho(\overline{x})$ (the gap comes from not paying the full $r_tv_t$ on their last turn, and is at most $1$). We now claim that 

\begin{equation}\label{eqn:abs_bound}
\left|\sum_{i=0}^{N-1} V_i - V \int_{0}^{\overline{x}}\frac{1}{\rho(w)}dw \right| \leq \frac{2}{\rho(0)}.
\end{equation}

To see this, note that since for $t \in T_i$, $r_t \in [\rho(w_{s_i}), \rho(w_{e_i})]$, it follows that $W_i \in [ \rho(w_{s_i})V_i, \rho(w_{e_i})V_i]$, and therefore that $V_i \in [W_i/\rho(w_{e_i}), W_i/\rho(w_{s_i})]$. We can therefore write 

$$\sum_{i=0}^{N-1}V_i \leq \sum_{i=0}^{N-1}\frac{W_i}{\rho(w_{s_i})} $$

Now, note that $W_i = V(w_{s_{i+1}} - w_{s_i})$, so we can interpret this as a Riemann sum for $\int_{0}^{\overline{x}}1/\rho(w)dw$. The error between this sum and the integral is bounded by the range of $1/\rho(w)$ (which is $1/\rho(0)$) multiplied by the maximum width of an interval. By the construction of $T_i$, $w_{s_{i+1}} - w_{s_i} \leq \Delta + \frac{1}{V}$. It follows that

\begin{eqnarray*}
\sum_{i=0}^{N-1}V_i &\leq & \sum_{i=0}^{N-1}\frac{W_i}{\rho(w_{s_i})} \\
&= & V\sum_{i=0}^{N-1}\frac{(w_{s_{i+1}} - w_{s_i})}{\rho(w_{s_i})} \\
&\leq & V\left(\int_{0}^{\overline{x}}\frac{1}{\rho(w)}dw + \frac{1}{\rho(0)}\left(\Delta + \frac{1}{V}\right)\right) \\
&\leq & V\left(\int_{0}^{\overline{x}}\frac{1}{\rho(w)}dw\right) + \frac{2}{\rho(0)}.
\end{eqnarray*}

Similarly (by using the fact that $V_i \geq W_i/\rho(w_{e_i})$), we can show that $\int_{0}^{\overline{x}}\frac{1}{\rho(w)}dw - \sum_{i=0}^{N-1}V_i \leq 2/\rho(0)$, from which (\ref{eqn:abs_bound}) follows. 

Combining equations (\ref{eqn:bidder_util}) and (\ref{eqn:abs_bound}), it follows that

\begin{equation}
    \left| U - \rho(\overline{x})\left(1 - \int_{0}^{\overline{x}}\frac{1}{\rho(w)}dw\right)V\right| \leq 1 + \frac{2\rho(\overline{x})}{\rho(0)} = O(1).
\end{equation}

Since $\rho(\overline{x})\left(1 - \int_{0}^{\overline{x}}\frac{1}{\rho(w)}dw\right)$ is maximized for $\overline{x} = X_{opt}$, if the seller offers an $O(1)$ reimbursement (of size at least $1 + 2/\rho(0)$) to the buyer as long as they pay at least $X_{opt}V$, it follows that it is a dominant strategy for the buyer to pay at least $X_{opt}V$, as desired.
\end{proof}

We reiterate that the reimbursements in Lemma \ref{lem:paytoplay-disc} should be thought of as ``tie-breakers'' for the agent deciding between multiple comparable strategies. In most cases, they are negligible compared to the total value of the buyer; for a buyer with constant average value per round, their total value $V$ will be $\Theta(T)$, whereas this reimbursement is always at most $O(1)$. Contrast this with other dynamic mechanisms which, for example, refuse to pay the bidder anything until the end of the protocol (and thus break the spirit of limited liability).

We can now prove the discrete-time analogue of Theorem \ref{thm:mainalg}.

\begin{theorem}\label{thm:mainalg-disc}
There exists a mechanism for the seller which obtains $V/e - O(1)$ total revenue.
\end{theorem}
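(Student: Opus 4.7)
The plan is to mirror the proof of Theorem \ref{thm:mainalg} exactly, but invoke the discrete pay-to-play lemma (Lemma \ref{lem:paytoplay-disc}) in place of its continuous counterpart. Assume without loss of generality $V \geq 1$, since otherwise $V/e - O(1) \leq 0$ is trivially achieved. Consider the pay-to-play mechanism defined by
$$\rho(w) = \begin{cases} e^{ew-1} & \text{if } w \in [0, 1/e] \\ 1 & \text{if } w \in [1/e, 1] \end{cases}$$
augmented with the small $O(1)$ end-of-game reimbursement prescribed by Lemma \ref{lem:paytoplay-disc} that is paid to the buyer if they have paid at least $V/e$ by the end of the protocol.

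The first step is to identify the maximizer $X_{opt}$ of the expression $\rho(x)\!\left(1 - \int_0^x dx'/\rho(x')\right)$ appearing in Lemma \ref{lem:paytoplay-disc}. This is done by the same direct computation used in Theorem \ref{thm:mainalg}: the expression equals $e^{-1}$ for every $x \in [0, 1/e]$ (since $e^{ex-1}\!\cdot e^{-ex} = e^{-1}$) and is strictly decreasing on $(1/e, 1]$ because $\rho$ is constant there while $\int_0^x dx'/\rho(x')$ keeps growing. So $X_{opt} = 1/e$ is a valid maximizer. The second step is to verify that Lemma \ref{lem:paytoplay-disc} applies to this $\rho$: it is weakly increasing, Lipschitz-continuous (with constant $e^{e-1}\cdot e$ on $[0,1/e]$ and $0$ on $[1/e,1]$), and satisfies $\rho(0) = 1/e > 0$, so the additive error terms in that lemma (which scale like $1/\rho(0)$) are absolute constants independent of $V$ and $T$.

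Finally, applying Lemma \ref{lem:paytoplay-disc} yields a dominant strategy under which the buyer pays at least $X_{opt} V = V/e$ to the seller. Subtracting the $O(1)$ reimbursement, the net revenue is at least $V/e - O(1)$, as required.

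I do not anticipate any significant obstacle: the genuinely technical work lies in Lemma \ref{lem:paytoplay-disc} itself (controlling the Riemann-sum approximation of the bidder's utility as a function of their total payment). Given that lemma, the proof of the theorem reduces to two sanity checks---identifying $X_{opt}$ and verifying the regularity hypotheses on $\rho$---both of which are immediate. The only place one might worry is whether the $O(1)$ losses swallow an unexpected $T$- or $V$-dependence, but since $\rho(0) = 1/e$ and the Lipschitz constant of $\rho$ are both absolute constants, all error terms remain $O(1)$.
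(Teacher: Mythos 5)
Your proposal matches the paper's own proof: the paper likewise instantiates the discrete pay-to-play lemma (Lemma \ref{lem:paytoplay-disc}) with $\rho(w)=\min(e^{ew-1},1)$, verifies by the same computation that the buyer's objective equals $e^{-1}$ on $[0,1/e]$ and decreases thereafter so $X_{opt}=1/e$, and concludes revenue $V/e - O(1)$ after the $O(1)$ reimbursement. Your added remarks (handling $V<1$, noting $\rho(0)=1/e$ so the error terms are absolute constants) are correct and consistent with the paper's argument.
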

\begin{proof}
Consider the following mechanism for the seller: at each round, if the bidder has given $wV$ in total so far to the seller, allocate fraction $\rho(w)$ of the item to the bidder, where

$$\rho(w) = \begin{cases}
e^{ew - 1} & \mbox{ if } w \in [0, 1/e] \\
1 & \mbox{ if } w \in [1/e, 1]
\end{cases}$$


Note that for $x \leq 1/e$, we have that

\begin{eqnarray*}
\rho(x)\left(1 - \int_{0}^{x}\frac{dx'}{\rho(x')}\right) &=& e^{ex-1}\left(1 - \int_{0}^{x'}e^{-ex'+1}dx'\right) \\
&=& e^{ex - 1}e^{-ex} \\
&=& e^{-1}.
\end{eqnarray*}

For $x > 1/e$, this expression is decreasing (since $\rho(x)$ is constant for $x \in [1/e, 1]$, but $\int_{0}^{1} (1/\rho(x'))dx'$ is increasing). It follows that $X_{opt} = 1/e$ is a value of $x$ which maximizes utility. By Lemma \ref{lem:paytoplay-disc}, it follows that a pay-to-play mechanism with this choice of $\rho$ results in the seller receiving $V/e - O(1)$ total revenue, as desired.

\end{proof}

\subsection{Upper bound for one buyer in the discrete model}\label{sect:1p_disc_ub}

In this appendix, we extend the proof of Theorem \ref{thm:singlb} (proving an upper bound for the single buyer game) to the discrete model. Specifically, we will show the following theorem.

\begin{theorem}\label{thm:disc_ub}
For any $c > \frac{1}{e}$ and $d \geq 0$, there exists a sufficiently large $T$ such that no mechanism for the single buyer game satisfies

$$\Rev(\vb) \geq c V + d$$

\noindent
for all types $\vb$.
\end{theorem}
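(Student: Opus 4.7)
The plan is to mirror the continuous proof of Theorem~\ref{thm:singlb}, tracking discretization errors carefully. Fix $c > 1/e$, let $\delta = c - 1/e > 0$, and suppose for contradiction that some mechanism attains $\Rev(\vb) \ge cV + d$ for every type $\vb$ with $|\vb| = V$. For each type let $U_\vb = \sum_t r_{\vb,t} v_{\vb,t} - \Rev(\vb)$ be the buyer's truthful utility and $w_\vb(t) = \sum_{s \le t} r_{\vb,s} v_{\vb,s}$ the cumulative welfare; individual rationality (defect immediately) forces $U_\vb \ge 0$.

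The main work is a discrete analogue of Lemma~\ref{lem:recurse}: if $U_\vb \le \alpha V + C$ for every type with $|\vb| = V$, then $U_\vb \le f(\alpha) V - cV - d + C'$ for every such type, where $f(\alpha) = \alpha\log(1/\alpha) + \alpha$ and $C' - C$ is an absolute constant. To prove it, I would use discrete ``spike'' types $\vb^{\spike}$ that agree with $\vb$ on a prefix $[1, t_0]$ and then concentrate the remaining budget $V - \sum_{s \le t_0} v_{\vb,s}$ in a plateau of $L$ consecutive rounds of value $1$ (padded with zeros). Because $v_{i,t} \in [0,1]$ and $V \le T$, such a plateau always fits inside $T$ rounds. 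Per-round incentive compatibility forces the spike's truthful utility to be at least its deviation utility from imitating $\vb$ through round $t_0 + L$; rearranging and varying $t_0$ across cumulative-welfare levels $w$ yields the discrete analogue of Inequality~(\ref{eqn:alpha}) with an $O(1/V)$ slack per bound.

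Next, I would apply Lemma~\ref{lem:hazardub} to a piecewise-constant interpolation of the rate function reparametrized by cumulative welfare, producing $\sum_{w \in [0,\,\alpha\log(1/\alpha) V]} (1/r_\vb(w)) \ge (1-\alpha)V - O(1)$. Combined with the identity $\sum_w (1/r_\vb(w)) = V$ (summing $1/r_{\vb,t}$ weighted by welfare increments $r_{\vb,t} v_{\vb,t}$ recovers $\sum_t v_{\vb,t} = V$) and $r_\vb \le 1$, this yields $w_\vb(T) \le f(\alpha)V + O(1)$ and hence $U_\vb \le f(\alpha) V - cV - d + O(1)$, completing the recursion step. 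Iterating from $\alpha_0 = 1$, the analytic fact $\max_{\alpha \in [0,1]}(f(\alpha) - \alpha) = 1/e$ attained uniquely at $\alpha = 1/e$ gives $f(\alpha) - c \le \alpha - \delta$ on $[0,1]$, so each iteration drops the upper bound on $U_\vb/V$ by at least $\delta - O(1/V)$. Choosing $T$ (hence $V$) large enough that the accumulated $O(N/V)$ error over $N = \lceil 2/\delta \rceil$ iterations stays below $\delta N / 2$ forces $U_\vb < 0$, contradicting $U_\vb \ge 0$.

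The main obstacle is the spike construction. Because $v_t \in [0,1]$, the ``narrowest'' discrete spike has width $L \approx V - w_\vb(t_0)$ rather than the continuous $\eps \to 0$, so care is needed to show that summing the spike-IC constraints over positions gives a clean Riemann-sum approximation of~(\ref{eqn:alpha}) whose accumulated error per recursion step is only $O(1)$ rather than $O(V)$. Concretely, the plateau contributes a correction term proportional to the average value of $1/r_\vb$ over a welfare interval of width about $L/V$, and one must verify that telescoping these corrections against the ``identity'' $\sum_w 1/r_\vb(w) = V$ leaves only an $O(1)$ residue. Once this discrete spike analysis is in place, the hazard-lemma step and the finite iteration step follow essentially as in the continuous case.
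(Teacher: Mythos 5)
There is a genuine gap, and it sits exactly where you flag ``the main obstacle'': the discrete spike analysis at a \emph{fixed} horizon $T$ does not deliver anything close to inequality (\ref{eqn:alpha}). Because values are capped at $1$, your plateau must have length $L \approx V - \sum_{s \le t_0} v_{\vb,s}$, which can be $\Theta(V)$. The deviation utility of the spike type is then bounded below by $\sum_{s=t_0+1}^{t_0+L} r_{\vb,s} - \sum_{s=t_0+1}^{t_0+L} x_{\vb,s}$, and neither term behaves like the continuous limit: the first is $L$ times the \emph{average} rate over a stretch of $\vb$'s schedule spanning a welfare window of width up to $\Theta(V)$ (not the rate at welfare level $w$, and the rate need not be monotone in time), and the subtracted payments can themselves be $\Theta(V)$, since the mechanism may charge essentially $r_{\vb,s}v_{\vb,s}$ throughout that stretch. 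So the claimed discrete analogue of (\ref{eqn:alpha}) ``with an $O(1/V)$ slack per bound'' does not follow, and the subsequent application of Lemma \ref{lem:hazardub} to the welfare-reparametrized rate has nothing to feed on. The proposed fix --- telescoping the plateau corrections against the identity $\sum_w 1/r_\vb(w) = V$ --- is precisely the unproven crux; no argument is given that the residue is $O(1)$ rather than $\Theta(V)$, and iterating the recursion at the same fixed $T$ faces the identical obstruction at every step.

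The paper resolves this with a device your proposal lacks: it does not keep $T$ fixed, but works with the doubly exponential schedule $T_n = T_{n-1}^2$ and, from a truthful mechanism $\pi$ on $T_n$ rounds, builds an induced mechanism $\pi'$ on $T_{n-1}$ rounds by stretching each coarse type over blocks of $T_{n-1}$ fine rounds and aggregating $\pi$'s allocations and payments per block. The spike is then placed in the \emph{fine} time scale and spans exactly one coarse round: its per-round value is at most $1$, the payment it forfeits is at most the coarse per-round payment $x_{\vb',t} \le 1$ (an additive slack of $1$, absorbed into $\beta'$ in Lemma \ref{lem:recurse_discrete}), and its average fine rate over the block is by definition the coarse rate $r_{\vb',t}$ --- exactly the quantity needed for the welfare reparametrization. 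Each application of the recursion therefore moves from $\V_n$ to $\V_{n-1}$ rather than staying at one horizon, and the constants ($\beta' = d + e^{1/\alpha}(\beta+3)+1$) stay independent of $n$ because only a constant number of iterations is needed. Without this two-time-scale reduction (or some substitute making the spike effectively narrow), your outline does not go through.
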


Before we do this, we define some preliminary notation. Let $T_n = 2^{2^{n}}$, and let $\V_n$ be the set of types over $T_n$ rounds; in this section we will only consider mechanisms over $T_n$ rounds for some $n$. Let $\pi$ be a truthful mechanism over $\V_n$ (i.e. over $T_n$ rounds). Note that $\pi$ induces a truthful mechanism $\pi'$ over $\V_{n-1}$ in the following manner:

\begin{enumerate}
    \item Let $\vb_{n-1} = (v_1, v_2, \dots, v_{T_{n-1}})$ be a type in $\V_{n-1}$. From this, construct the following type $\vb_n \in \V_n$ via
    
    $$\vb_n = \left(\frac{v_1}{T_{n-1}}, \underset{T_{n-1} \mathrm{ times}}{\dots}, \frac{v_1}{T_{n-1}}, \frac{v_2}{T_{n-1}}, \underset{T_{n-1} \mathrm{ times}}{\dots}, \frac{v_2}{T_{n-1}}, \dots, \frac{v_{T_{n-1}}}{T_{n-1}}, \underset{T_{n-1} \mathrm{ times}}{\dots}, \frac{v_{T_{n-1}}}{T_{n-1}}\right)$$
    
    where each $v_t$ gets repeated $T_{n-1}$ times and multiplied by $1/T_{n-1}$. Note that since $T_{n-1}^2 = T_{n}$, this type indeed belongs to $\V_n$. Moreover, note that the total value $V_n$ of $\vb_n$ equals the total value $V_{n-1}$ of $\vb_{n-1}$. 
    
    \item
    Let $r_{\vb_{n}, t}$ and $x_{\vb_{n}, t}$ be the allocation and pricing rule for $\vb_{n}$ under $\pi$. 
    \item 
    Define the allocation and pricing rule for $\vb_{n-1}$ under $\pi'$ via
    
    $$r_{\vb_{n-1}, t} = \frac{1}{T_{n-1}}\sum_{s=1}^{T_{n-1}}r_{\vb_{n}, (t-1)T_{n-1}+s}$$
    
    and
    
    $$x_{\vb_{n-1}, t} = \sum_{s=1}^{T_{n-1}}x_{\vb_{n}, (t-1)T_{n-1}+s}.$$
    
\end{enumerate}

It is straightforward to verify that if $\pi$ is truthful then $\pi'$ is truthful. Note further that $\Rev(\vb_{n}) = \Rev(\vb_{n-1})$, so if $\pi$ satisfies

$$\Rev(\vb) \geq cV - d$$

\noindent
for all types $\vb \in \V_n$, then $\pi'$ also satisfies 

$$\Rev(\vb') \geq cV' - d$$

\noindent
for all types $\vb' \in V_{n-1}$. 

We will now prove the following analogue of Lemma \ref{lem:recurse}. Recall that $U_{\vb} = \sum_{t=1}^{T} (r_{\vb, t}v_t - x_{\vb, t})$ equals the total utility type $\vb$ receives in this mechanism.

\begin{lemma}\label{lem:recurse_discrete}
Let $\pi$ be a truthful mechanism over $\V_n$ that satisfies $\Rev(\vb) \geq cV - d$ for all $\vb \in \V_n$, and let $\pi'$ be the induced mechanism for $\V_{n-1}$. If, for some $\alpha >0$, $\beta \geq 0$,  

$$U_{\vb} \leq \alpha V + \beta$$ 

\noindent
for all types $\vb \in \V_n$, then

$$U_{\vb'} \leq \alpha' V + \beta'$$

\noindent
for all types $\vb' \in \V_{n-1}$, where

$$\alpha' = \alpha + \alpha \log \frac{1}{\alpha} - c$$

\noindent
and

$$\beta' = d + e^{1/\alpha}(\beta + 3) + 1.$$
\end{lemma}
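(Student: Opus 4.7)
The plan is to mirror the proof of the continuous Lemma \ref{lem:recurse} directly at the level of $\pi'$ on $\V_{n-1}$, using the embedding $\vb' \mapsto \vb_n$ constructed above: since for every $\tilde{\vb} \in \V_{n-1}$ one has $U^{\pi'}_{\tilde{\vb}} = U^{\pi}_{\tilde{\vb}_n}$, the bound $U^{\pi}_{\vb} \leq \alpha |\vb| + \beta$ transfers to spike types constructed inside $\V_{n-1}$. The strategy is to upper-bound $W_{\vb'}$ and then use the identity $X_{\vb'} = X_{\vb_n}$ together with the revenue hypothesis on $\pi$ (giving $X_{\vb'} \geq cV - d$) to derive the utility bound from $U_{\vb'} = W_{\vb'} - X_{\vb'}$.

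For each round boundary $t^* \in \{0, 1, \ldots, T_{n-1}\}$ with $V_{\leq t^*} := \sum_{s \leq t^*} v'_s < V$, I construct a spike type $\tilde{\vb}_{t^*} \in \V_{n-1}$ that equals $\vb'$ on rounds $\leq t^*$, has value $1$ on the next $K := \lceil V - V_{\leq t^*} \rceil$ rounds, and value $0$ afterwards (if these rounds do not fit in $T_{n-1}$, pad via the corresponding $\V_n$-rounds through the embedding, which is always possible since $V \leq T_{n-1}$ and $2 T_{n-1} \leq T_n$). Its total value lies in $[V, V+1]$, so the imported hypothesis gives $U^{\pi'}_{\tilde{\vb}_{t^*}} \leq \alpha V + \beta + 1$. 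Applying the IC constraint with the deviation ``imitate $\vb'$ for $\tau = t^* + K$ rounds, then abort''---which is feasible because the spike's value $1$ dominates $v'_s \leq 1$ throughout the spike region and so can cover every payment $x_{\vb', s} \leq r_{\vb', s} v'_s$---dropping the non-negative truthful-$\vb'$ prefix utility and the final payment term, and using limited liability on the spike region, yields the key inequality
\[
\sum_{s=t^*+1}^{t^*+K} r_{\vb', s}(1 - v'_s) \;\leq\; \alpha V + \beta + O(1).
\]
This is the averaged discrete analogue of the pointwise continuous inequality $r_v(w)(V - G(w)) \leq \alpha V$ that drives the proof of Lemma \ref{lem:recurse}.

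To turn this family of averaged bounds into a ceiling on $W_{\vb'}$, I reparameterize rates by welfare: define $f(w) = r_{\vb', s}$ on the welfare interval $[w_{s-1}, w_s)$ accumulated during round $s$, so that $\int_0^{W_{\vb'}} dw/f(w) = V$. Multiplying the inequality above by the weight $\lambda(w) = \alpha e^{w/(\alpha V)}$ of Lemma \ref{lem:hazardub} (rescaled to total value $V$), summing across $t^*$, and mimicking the integration-by-parts step from that lemma's proof produces a lower bound of the form
\[
\int_0^{\alpha V \log(1/\alpha)} \frac{dw}{f(w)} \;\geq\; (1-\alpha) V - e^{1/\alpha}(\beta + 3),
\]
where the $e^{1/\alpha}$ factor arises because the additive slack $\beta + O(1)$ in the spike inequality is weighted by $\lambda$, which grows up to $\alpha e^{1/\alpha}$ over the relevant integration range. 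Combining with $f \leq 1$ (so that $1/f \geq 1$ past the threshold $\alpha V \log(1/\alpha)$) gives $W_{\vb'} \leq (\alpha + \alpha \log(1/\alpha)) V + e^{1/\alpha}(\beta + 3) + 1$, and subtracting the revenue bound $X_{\vb'} \geq cV - d$ yields $U_{\vb'} \leq \alpha' V + \beta'$ with the claimed constants. The main obstacle is that in discrete time the spike must have width $\geq$ remaining value (since per-round values are capped at $1$), so the spike argument delivers only an averaged rate bound rather than the pointwise bound available continuously; chasing this averaging through the hazard-rate integration is exactly what forces the multiplicative $e^{1/\alpha}$ amplification of $\beta$, in place of the gentler $\log(1/\alpha)$ amplification one would see with a truly pointwise bound.
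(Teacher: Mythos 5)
There is a genuine gap, and it is exactly at the step you flag as the "main obstacle." Because your spike types are built inside $\V_{n-1}$ with per-round height $1$, a spike carrying the remaining value must be spread over $K\approx V-V_{\le t^*}$ rounds of $\pi'$, and after invoking limited liability you only obtain the window-averaged inequality $\sum_{s=t^*+1}^{t^*+K} r_{\vb',s}(1-v'_s)\le \alpha V+\beta+O(1)$. This is strictly weaker than what the hazard-rate machinery requires, namely the \emph{pointwise} bound $\bigl(\sum_{s\ge t}v'_s\bigr)r_{\vb',t}-x_{\vb',t}\le\alpha V+\beta$ (the discrete analogue of $r_v(w)\bigl(1-\int_0^w dw'/r_v(w')\bigr)\le\alpha$ in Lemma \ref{lem:recurse}). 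Concretely, for any type whose values are near $1$ on their support --- e.g.\ the all-ones type, which is precisely the type used to close the induction in Theorem \ref{thm:disc_ub} --- the factor $1-v'_s$ makes your inequality vacuous for every $t^*$, while the conclusion you still need, $W_{\vb'}\le(\alpha+\alpha\log(1/\alpha))V+O(\beta)$, is a nontrivial restriction on the rates (it fails for $r\equiv 1$ whenever $\alpha<1$). Hence the asserted step "multiplying by $\lambda$ and mimicking the integration-by-parts step produces $\int_0^{\alpha V\log(1/\alpha)}dw/f(w)\ge(1-\alpha)V-e^{1/\alpha}(\beta+3)$" cannot be carried out: no weighting of a (possibly vacuous) family of averaged inequalities recovers it, and retaining the payments instead of bounding $x_{\vb',s}\le r_{\vb',s}v'_s$ does not help, since the payments inside the window need not be small.

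The paper's proof avoids this by building the spikes in the refined space $\V_n$ rather than in $\V_{n-1}$: the remaining value $\sum_{r\ge t}v'_r$ (possibly of order $T_{n-1}$) is packed at height $\frac{1}{T_{n-1}}\sum_{r\ge t}v'_r\le 1$ into a single block of $T_{n-1}$ rounds of $\V_n$, and that block corresponds to exactly one round of the induced mechanism $\pi'$; the imitate-then-abort deviation in $\V_n$ then yields the pointwise inequality above, which is what Lemma \ref{lem:hazardub_discrete} consumes. This is the entire purpose of the $T_n=T_{n-1}^2$ blow-up and the induced-mechanism construction --- it supplies enough rounds to concentrate the remaining value into one $\pi'$-round-equivalent window despite the unit cap on per-round values. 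Your outer frame (transferring the hypothesis via $U^{\pi'}_{\tilde\vb}=U^{\pi}_{\tilde\vb_n}$, using $X_{\vb'}\ge cV-d$, and finishing from $U_{\vb'}=W_{\vb'}-X_{\vb'}$) matches the paper, and your feasibility checks for the imitation are fine, but the central spike step as proposed does not deliver the inequality the rest of the argument needs, so the lemma is not proved. (A small side note: the padding into $\V_n$ you hedge about is never needed, since $v'_s\le 1$ forces $K\le T_{n-1}-t^*$; but fixing that does not repair the main issue.)
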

\begin{proof}
We follow the rough structure of the proof of Lemma \ref{lem:recurse}. 

Let $\vb' \in \V_{n-1}$ be a type for $\pi'$, and let $\vb \in \V_{n}$ be the corresponding type for $\pi$. Let $\spike_{\vb}(t) \in \V_{n}$ (for $t \in [T_{n-1}]$) be the type defined via:

\begin{equation}
    \spike_{\vb}(t)_{s} = \begin{cases}
    v_s & \hbox{ if } s \leq tT_{n-1} \\
    \frac{1}{T_{n-1}}\sum_{r=t}^{T_{n-1}}v'_{r} & \hbox{ if }s \in [tT_{n-1}+1, (t+1)T_{n-1}] \\
    0 & \hbox{ otherwise}
    \end{cases}
\end{equation}

Consider the deviation where $\spike_{\vb}(t)$ pretends to be type $\vb$ for the first $(t+1)T_{n-1}$ rounds and then exists the protocol. The utility $U'_{\spike_{\vb}(t)}$ this type $\spike_{\vb}(t)$ receives from performing this deviation is at least the utility it receives during the deviation, i.e.

\begin{eqnarray*}
U'_{\spike_{\vb}(t)} &\geq& \sum_{s=tT_{n-1}+1}^{(t+1)T_{n-1}} \left(r_{\vb, s}\spike_{\vb}(t)_{s} - x_{\vb, s}\right) \\
&=& \sum_{s=tT_{n-1}+1}^{(t+1)T_{n-1}} \left(r_{\vb, s}\left(\frac{1}{T_{n-1}}\sum_{r=t}^{T_{n-1}}v'_{r}\right) - x_{\vb, s}\right) \\
&=& \left(\sum_{r=t}^{T_{n-1}}v'_{r}\right)\left(\frac{1}{T_{n-1}}\sum_{s=tT_{n-1}+1}^{(t+1)T_{n-1}} r_{\vb, s}\right)  - \left(\sum_{s=tT_{n-1}+1}^{(t+1)T_{n-1}} x_{\vb, s}\right) \\
&=& \left(\sum_{r=t}^{T_{n-1}}v'_{r}\right)r_{\vb', t}  - x_{\vb', t}.
\end{eqnarray*}

On the other hand, since $\pi$ is truthful, $U'_{\spike_{\vb}(t)}$ is less than the utility $U_{\spike_{\vb}(t)}$ this type would receive by playing truthfully, which by assumption is at most $\alpha V + \beta$. Therefore, for all $t \in [T_{n-1}]$, we have that

\begin{equation}\label{eq:preeq}
\alpha V + \beta \geq \left(\sum_{s=t}^{T_{n-1}}v'_{s}\right)r_{\vb', t}  - x_{\vb', t}.
\end{equation}

Since $x_{\vb', t} \leq 1$, we can relax this to:

\begin{equation}
\alpha V + (\beta + 1) \geq \left(\sum_{s=t}^{T_{n-1}}v'_{s}\right)r_{\vb', t}.
\end{equation}

Rewrite this further as

\begin{equation}
\frac{V}{r_{\vb', t}} \geq - \frac{(\beta + 1)}{\alpha r_{\vb', t}} + \frac{1}{\alpha}\left(V - \sum_{s=1}^{t}v'_{s}\right).
\end{equation}

\noindent
and even further as

\begin{equation}\label{eq:main_ineq}
\frac{V}{\alpha} \leq \frac{\beta+1}{\alpha r_{\vb', t}} + \frac{V}{r_{\vb', t}} + \frac{1}{\alpha}\sum_{s=1}^{t}\vb_{s}'.
\end{equation}

Let $w_{t} = r_{\vb, t}v'_{t}$. We now use inequality \ref{eq:main_ineq} to prove the following analogue of Lemma \ref{lem:hazardub}:

\begin{lemma}\label{lem:hazardub_discrete}
If $\tau \in [T_{n-1}]$ satisfies $\sum_{t=1}^{\tau} w_{t} \geq (\alpha \log \frac{1}{\alpha})V$, then

$$\sum_{t=1}^{\tau} \vb'_{t} \geq (1-\alpha)V - e^{1/\alpha}(\beta+3).$$
\end{lemma} 
\begin{proof}

Define

\begin{equation}
\lambda_{t} = \alpha \exp\left(\frac{\sum_{s=1}^{t}w_{s}}{\alpha V}\right)\frac{w_{t}}{V}.
\end{equation}

Multiplying inequality \ref{eq:main_ineq} by $\lambda_{t}$ and summing from $t=1$ to $\tau$, we have that

\begin{equation}\label{eq:main_ineq_sum}
\sum_{t=1}^{\tau}\frac{V}{\alpha}\lambda_{t} \leq \sum_{t=1}^{\tau}\frac{(\beta+1)\lambda_t}{\alpha r_{\vb', t}} + \sum_{t=1}^{\tau}\frac{V\lambda_t}{r_{\vb', t}} + \sum_{t=1}^{\tau}\frac{\lambda_t}{\alpha}\sum_{s=1}^{t}\vb_{s}'.
\end{equation}

We will examine each of these terms in turn. Before we begin, let $W_{t} = \sum_{s=1}^{t} w_{s}$. Then, note that

\begin{equation}
\sum_{t=1}^{\tau}\lambda_{t} = \sum_{t=1}^{\tau}\alpha \exp\left(\frac{\sum_{s=1}^{t}w_{s}}{\alpha V}\right)\frac{w_{t}}{V},
\end{equation}

can be viewed as a Riemann sum for 

$$\int_{0}^{W_{\tau}/V}\alpha e^{x/\alpha} dx = \alpha^2\left(\exp\left(\frac{W_{\tau}}{\alpha V}\right) - 1\right)$$

Since $\frac{w_t}{V} \leq \frac{1}{V}$ and since $\alpha e^{x/\alpha} \leq \alpha e^{1/\alpha}$, we therefore have that (for any $\tau$) 

\begin{equation}\label{eq:lambda-bound}
    \left| \left(\sum_{t=1}^{\tau} \lambda_t\right) - \alpha^2\left(\exp\left(\frac{W_{\tau}}{\alpha V}\right) - 1\right) \right| \leq \frac{\alpha e^{1/\alpha}}{V}.
\end{equation}

We now start by looking at the LHS of inequality \ref{eq:main_ineq_sum}. From inequality \ref{eq:lambda-bound}, we have that 

\begin{eqnarray*}
\sum_{t=1}^{\tau}\frac{V}{\alpha}\lambda_{t} &\geq& -e^{1/\alpha} + V \alpha \left(\exp\left(\frac{W_{\tau}}{\alpha V}\right) - 1\right) \\
&\geq& -e^{1/\alpha} + (1-\alpha) V
\end{eqnarray*}

\noindent
where the second inequality uses the fact that $W_{\tau} \geq (\alpha \log \frac{1}{\alpha})V$. 
The first term of the RHS of inequality \ref{eq:main_ineq_sum} can be bounded via

\begin{eqnarray*}
\sum_{t=1}^{\tau}\frac{(\beta+1)\lambda_t}{\alpha r_{\vb', t}} &= & \sum_{t=1}^{\tau}\frac{(\beta+1)w_{t}}{Vr_{\vb', t}} \exp\left(\frac{W_t}{\alpha V}\right) \\
&\leq & e^{1/\alpha}(\beta + 1) \sum_{t=1}^{\tau}\frac{w_{t}}{Vr_{\vb', t}} \\
&=& e^{1/\alpha}(\beta + 1) \sum_{t=1}^{\tau}\frac{v'_{t}}{V} \\
&\leq & e^{1/\alpha}(\beta + 1).
\end{eqnarray*}

The third term of the RHS of inequality \ref{eq:main_ineq_sum} can be written as

\begin{eqnarray*}
\sum_{t=1}^{\tau}\frac{\lambda_t}{\alpha}\sum_{s=1}^{t}\vb_{s}' &= & \sum_{s=1}^{t}\vb_{s}'\sum_{t=s}^{\tau}\frac{\lambda_t}{\alpha} \\
&\geq& \sum_{s=1}^{\tau}\vb_{s}'\left(\alpha \exp\left(\frac{W_{\tau}}{\alpha V}\right) - \alpha \exp\left(\frac{W_{s}}{\alpha V}\right)  + \frac{e^{1/\alpha}}{V}\right) \\
&\geq& \sum_{s=1}^{\tau}\vb_{s}'\left(\alpha \exp\left(\frac{W_{\tau}}{\alpha V}\right)\right) - \sum_{s=1}^{\tau}\vb_{s}'\left(\alpha \exp\left(\frac{W_{s}}{\alpha V}\right)\right) + e^{1/\alpha} \\
&\geq & \sum_{s=1}^{\tau} \vb'_{s} - \sum_{s=1}^{\tau}\frac{V\lambda_{s}}{r_{\vb', s}} + e^{1/\alpha}.
\end{eqnarray*}

Combining this all, we have that

\begin{equation*}
    -e^{1/\alpha} + (1-\alpha) V \leq  e^{1/\alpha}(\beta + 1) + \sum_{t=1}^{\tau}\frac{V\lambda_t}{r_{\vb', t}} + \sum_{s=1}^{\tau} \vb'_{s} - \sum_{s=1}^{\tau}\frac{V\lambda_{s}}{r_{\vb', s}} + e^{1/\alpha}
\end{equation*}

and simplifying, we have that

\begin{equation}
    \sum_{s=1}^{\tau} \vb'_{s} \geq (1-\alpha) V - e^{1/\alpha}(\beta+3).
\end{equation}

\end{proof}

Let $W_{\vb'} = \sum_{t} w_{t}$ and $X_{\vb'} = \Rev(\vb') = \sum_{t} x_{\vb', t}$; note that $U_{\vb'} = W_{\vb'} - X_{\vb'}$. By assumption, we know that $X_{\vb'} \geq cV - d$. Now, if there exists no $\tau$ such that $W_{\tau} \geq (\alpha \log \frac{1}{\alpha})V$, then this means that $W_{\vb'} \leq (\alpha \log \frac{1}{\alpha})$ and therefore

$$U_{\vb'} \leq \left(\alpha \log \frac{1}{\alpha} - c\right)V + d \leq \alpha'V + \beta'.$$

On the other hand, if there does exist a $\tau$ such that $W_{\tau} \geq (\alpha \log \frac{1}{\alpha})V$, pick the smallest such $\tau$; since $w_{t} \leq 1$, this means that $W_{\tau} \leq (\alpha \log\frac{1}{\alpha})V + 1$. We therefore have that (from Lemma \ref{lem:hazardub_discrete})  

\begin{eqnarray*}
V &=& \sum_{t=1}^{T_{n-1}} \vb'_{t} \\
&=& \sum_{t=1}^{\tau} \vb'_{t} + \sum_{t=\tau+1}^{T_{n-1}} \vb'_{t} \\
&\geq & (1-\alpha)V - e^{1/\alpha}(\beta + 3) + \sum_{t=\tau+1}^{T_{n-1}} w_{t} \\
&= & (1-\alpha)V - e^{1/\alpha}(\beta + 3) + W_{\vb} - W_{\tau} \\
&\geq& (1-\alpha)V + W_{\vb} - \left(\alpha \log\frac{1}{\alpha}\right)V - e^{1/\alpha}(\beta + 3) - 1.
\end{eqnarray*}

Rearranging this we have that

\begin{equation}
    W_{\vb} \leq \left(\alpha + \alpha\log \frac{1}{\alpha}\right)V + e^{1/\alpha}(\beta + 3) + 1,
\end{equation}

from which it immediately follows that $U_{\vb} = W_{\vb} - X_{\vb} \leq \alpha'V + \beta'$. 

\end{proof}

We can now proceed to prove Theorem \ref{thm:disc_ub}.

\begin{proof}[Proof of Theorem \ref{thm:disc_ub}]
Consider the function $f(\alpha) = (\alpha + \alpha\log(1/\alpha) - c)$. Since $c > 1/e$, (by the same logic as in the proof of Theorem \ref{thm:singlb}) there exists some finite $m$ such that $f^{(m)}(1) < 0$ (where $f^(m)$ is the function $f$ iterated $m$ times).

Now, consider a truthful mechanism $\pi$ for the single buyer game (for $\V_n$, for a sufficiently large $n$ that we will choose later) which statisfies $\Rev(\vb) \geq cV - d$. Note that this mechanism must satisfy $U_{\vb} \leq V$ for all types $\vb \in V$. Iterating Lemma \ref{lem:recurse} $m$ times, this implies there exists a truthful mechanism over $\V_{n-m}$ which satisfies $U_{\vb} \leq \alpha_{m}V + \beta_{m}$, where $\alpha_{m} = f^{(m)}(1) < 0$, and $\alpha_{m}$ and $\beta_{m}$ are both independent of $n$. 

Now, note that for sufficiently large $r$ (in particular, any $r$ where $T_{r} > -\beta_{m}/\alpha_{m}$), $\alpha_{m}T_{r} + \beta_{m} < 0$. But this means that if $V \geq T_{r}$, then $U_{\vb} < 0$, which cannot be true of any truthful mechanism. By taking $n = r + m$ (and looking at the all ones type $\vb \in \V_{r}$), we therefore arrive at a contradiction, and therefore no mechanism for the single buyer game over $\V_{n}$ can satisfy $\Rev(\vb) \geq cV - d$. 
\end{proof}

\subsection{Proof of upper bound for two players}\label{sect:2p-disc-ub}

In this section we rigorously prove Theorem \ref{thm:2plb}, showing that there is no limited liability mechanism for two bidders which guarantees an $\alpha > 2/e$ approximation to selling the business. Largely, we will follow the proof sketch given in Section \ref{sect:2bid}, the main complication being that in order to apply the minimax theorem we will have to discretize our type spaces and our protocols. 

To do this, we will also need to relax our notion of truthfulness slightly. Let $\tV_{T}$ be the set of type profiles $\vb$ for the 2 bidder game over $T$ rounds such that each $v_{i, t}$ is an integer multiple of $\eps_{T} = T^{-100}$. A discrete protocol is a protocol which specifies an allocation rule $r_{\vb, i, t}$, a payment rule $x_{\vb, i, t}$, and a reimbursement $g_{\vb, i}$ for each discrete type $\vb \in \tV_{T}$. We say that a protocol is $\eps_T$-truthful if it satisfies all the inequalities a truthful protocol must satisfy with some additional $o(1)$ slack. Specifically, 

\begin{itemize}
    \item \textit{(Limited liability)} For any $\mathbf{v} \in \tV_{T}$, $i$, and $t$, we must have $x_{\mathbf{v}, i, t} \leq r_{\vb, i, t}v_{i, t} + 10\eps_{T}$.
    \item \textit{(Incentive compatibility)} Fix a type profile $\mathbf{v} \in \tV_T$ and a bidder $i$, and let $\mathbf{v}' \in \tV_{T}$ be a profile where bidder $i$'s type $v_{i, t}$ is replaced by a new type $v'_{i, t}$. Then, if it is the case that
    
    $$x_{\mathbf{v}', i, t} \leq r_{\vb', i, t}v_{i, t} + 10\eps_{T}$$
    
    for all $t \leq \tau$ (i.e. our bidder with limited liability can pretend to have type $\mathbf{v}'$ for $\tau$ rounds) we must have that, for any $\tau \leq T$, 
    
    $$\left(\sum_{t=1}^{T} r_{\mathbf{v}, i, t}v_{i, t} - x_{\mathbf{v}, i, t}\right) + g_{\mathbf{v}, i} \geq \left(\sum_{t=1}^{\tau} r_{\mathbf{v}', i, t}v_{i, t} - x_{\mathbf{v}', i, t}\right) + g_{\mathbf{v}', i}\mathbf{1}(\tau = T) - 10\eps_{T}T.$$
\end{itemize}

We further say that a protocol is $\eps_T$-discrete if all values $r_{\vb, i, t}$, $x_{\vb, i, t}$, and $g_{\vb}, i$ are multiples of $\eps_T$. Let $\tPi_{T}$ be the set of all $\eps_{T}$-discrete, $\eps_{T}$-truthful protocols for type profiles in $\tV_{T}$. 

We first claim that if there exists a $c$-competitive truthful protocol, there also exists an $c$-competitive protocol in $\tPi_{T}$. 

\begin{lemma}\label{lem:disc_to_cont}
Assume there exists a truthful protocol $\pi$ for the 2 bidder game over $T$ rounds that satisfies (for some fixed $d$) for all $\vb$

$$\Rev_{\pi}(\vb) \geq c\Rev_{STB}(\vb) - d.$$

Then there exists a discrete protocol $\tilde{\pi}$ in $\tPi_{T}$ that satisfies for all $\vb \in \tV_T$

$$\Rev_{\tilde{\pi}}(\vb) \geq c\Rev_{STB}(\vb) - d - T^{-99}.$$
\end{lemma}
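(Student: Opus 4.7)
The plan is to construct $\tilde\pi$ by restricting $\pi$ to $\tV_T$ and coordinate-wise rounding every output to a multiple of $\eps_T$. Specifically, I would set $\tilde r_{\vb,i,t}=\lfloor r_{\vb,i,t}/\eps_T\rfloor\eps_T$ (rounding rates down preserves $\sum_i \tilde r_{\vb,i,t}\le 1$), $\tilde x_{\vb,i,t}=\lfloor x_{\vb,i,t}/\eps_T\rfloor\eps_T$ (rounding payments down gives LL slack for free), and $\tilde g_{\vb,i}=\lfloor g_{\vb,i}/\eps_T\rfloor\eps_T$. By construction $\tilde\pi$ is $\eps_T$-discrete on $\tV_T$.

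The revenue and limited-liability verifications are then bookkeeping. Each of the $2T$ payments and $2$ reimbursements shifts by at most $\eps_T$, giving $|\Rev_{\tilde\pi}(\vb)-\Rev_\pi(\vb)|\le(2T+2)\eps_T=O(T^{-99})$, and absorbing the constant into the (tiny) hidden slack yields the $T^{-99}$ target. For LL, $\tilde x_{\vb,i,t}\le x_{\vb,i,t}\le r_{\vb,i,t}v_{i,t}\le \tilde r_{\vb,i,t}v_{i,t}+\eps_T$, well inside the $10\eps_T$ allowance.

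The substantive step is verifying IC with $10\eps_T T$ slack. Fix $\vb,\vb'\in\tV_T$, bidder $i$, and $\tau\le T$ satisfying the discretized hypothesis $\tilde x_{\vb',i,t}\le\tilde r_{\vb',i,t}v_{i,t}+10\eps_T$ for $t\le\tau$; the rounding bounds imply $x_{\vb',i,t}\le r_{\vb',i,t}v_{i,t}+11\eps_T$. In the ``easy case'' where the strict hypothesis $x_{\vb',i,t}\le r_{\vb',i,t}v_{i,t}$ also holds at every round, $\pi$'s IC gives the desired inequality exactly, and the per-term rounding errors sum to $O(\eps_T T)$ on each side, comfortably absorbed by $10\eps_T T$.

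The main obstacle, which I expect to dominate the write-up, is the remaining case where the discretized hypothesis is satisfied only thanks to its $10\eps_T$ slack, i.e. $x_{\vb',i,t}$ slightly exceeds $r_{\vb',i,t}v_{i,t}$ at some round. Here $\pi$'s IC is not directly applicable to $(\vb,\vb')$. To handle this I would pass through an auxiliary continuous type $\hat\vb$ that agrees with $\vb$ outside bidder $i$ and whose bidder-$i$ value at round $t\le\tau$ is inflated to $\hat v_{i,t}=\max(v_{i,t},\,x_{\vb',i,t}/r_{\vb',i,t})$, so that $(\hat\vb,\vb')$ satisfies $\pi$'s strict LL hypothesis and yields $\pi$'s IC inequality for $\hat\vb$. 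Crucially, $r_{\vb',i,t}(\hat v_{i,t}-v_{i,t})\le 11\eps_T$ at every round, so the weighted total inflation is $O(\eps_T T)$. The delicate piece is then transporting this inequality from $\hat\vb$ back to $\vb$ while losing only an additional $O(\eps_T T)$ — essentially a second application of $\pi$'s IC comparing $\vb$ against $\hat\vb$ — so that all accumulated error fits into the $10\eps_T T$ slack. The specific choice of the $10\eps_T$ and $10\eps_T T$ slack constants in the definition of $\eps_T$-truthfulness is exactly what makes this absorption go through.
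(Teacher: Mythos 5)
Your construction coincides with the paper's: round each $r_{\vb,i,t}$, $x_{\vb,i,t}$ (and $g_{\vb,i}$) down to a multiple of $\eps_T$, observe that limited liability survives with slack to spare, and charge the $O(T\eps_T)$ revenue loss against the $T^{-99}$ term. Where you diverge is the IC verification: the paper simply asserts that, since every rounded quantity is within $\eps_T$ of the original, the $\eps_T$-truthfulness inequalities ``follow immediately,'' and never isolates the case you call the main obstacle. You are right that this case (the relaxed hypothesis $\tilde x_{\vb',i,t}\le\tilde r_{\vb',i,t}v_{i,t}+10\eps_T$ holds while the exact hypothesis $x_{\vb',i,t}\le r_{\vb',i,t}v_{i,t}$ needed to invoke $\pi$'s IC fails) is not covered by that one-line argument. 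Note, however, that everywhere the paper later uses $\eps_T$-truthfulness it only needs deviations with $v_{i,t}\ge v'_{i,t}$ for all $t\le\tau$ (see the parenthetical remark in Appendix~\ref{sect:2p-disc-ub}), and for such deviations $\pi$'s own limited liability already gives the exact hypothesis, so only your ``easy case'' is ever exercised downstream.

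The genuine gap is that your treatment of the hard case does not work. The inflation step is fine: IC of $\pi$ at the inflated type $\hat\vb$ upper-bounds the deviation payoff of the true type $v$ by bidder $i$'s truthful utility at $\hat\vb$. But the ``second application of $\pi$'s IC'' that is supposed to carry this back to the truthful utility at $\vb$ points the wrong way: IC with true type $\hat v$ and report $\vb$ gives a \emph{lower} bound on $U^{\pi}_i(\hat\vb)$, while IC with true type $v$ and report $\hat\vb$ is exactly what is unavailable, since $v$ may be unable to afford $x_{\hat\vb,i,t}$ --- that unaffordability is what put us in the hard case. Moreover, no continuity argument can close this, because the utility jump between $\vb$ and $\hat\vb$ can be $\Theta(T)$ rather than $O(\eps_T T)$: consider the truthful two-bidder mechanism that ignores bidder $2$ and, if bidder $1$ reports $v_{1,1}\ge 1/2$, allocates the whole item to bidder $1$ every round, charging $1/2$ in round $1$ and nothing afterwards, while allocating nothing if $v_{1,1}<1/2$. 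A true type with $v_{1,1}=1/2-\eps_T$ and $v_{1,t}=1$ for $t\ge 2$ cannot afford the round-$1$ payment, so $\pi$ is truthful, yet its rounding violates the relaxed IC at the report $v'_{1,1}=1/2$ by roughly $T$, vastly exceeding the $10\eps_T T$ slack. (This mechanism does not satisfy the lemma's revenue hypothesis, so it does not refute the lemma itself, but it does refute the perturbation argument you propose, which never uses that hypothesis.) To make your write-up correct, either weaken the definition of $\eps_T$-truthfulness so the conclusion is only required when $v_{i,t}\ge v'_{i,t}$ for $t\le\tau$ (which is all the minimax argument needs, and for which your easy case is already a complete proof), or find an argument that genuinely modifies the mechanism rather than merely rounding it.
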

\begin{proof}
Define $\rnd_{\eps}(x) = \eps\lfloor \frac{x}{\eps} \rfloor$ (i.e. $\rnd_{\eps}(x)$ rounds $x$ down to the nearest multiple of $\eps$). Note that $|x - \rnd_{\eps}(x)| \leq \eps$.

Consider the allocation and payment rules $r_{\vb, i, t}$ and $x_{\vb, i, t}$ of $\pi$. We will use these to construct allocation and payment rules $\tilde{r}_{\vb, i, t}$ and $\tilde{x}_{\vb, i, t}$ for $\tilde{\pi}$ by letting (for each $\vb \in \tV_T$):

\begin{eqnarray*}
\tilde{r}_{\vb, i, t} &=& \rnd_{\eps_{T}}(r_{\vb, i, t}) \\
\tilde{x}_{\vb, i, t} &=& \rnd_{\eps_{T}}(x_{\vb, i, t}).
\end{eqnarray*}

By construction, $\tilde{\pi}$ is $\eps_{T}$ discrete. Since $|\tilde{r}_{\vb, i, t} - r_{\vb, i}| \leq \eps_{T}$ and $|\tilde{x}_{\vb, i, t} - x_{\vb, i, t}| \leq \eps_{T}$ for each $\vb \in \tV_{T}$, it follows immediately that $\tilde{\pi}$ satisfies the inequalities in the definition of $\eps_{T}$-truthful and therefore $\tilde{\pi} \in \tPi_{T}$. Finally, note that

$$\Rev_{\tilde{\pi}}(\vb) = \sum_{t=1}^{T} \tilde{x}_{\vb, i, t} \geq \sum_{t=1}^{T} x_{\vb, i, t} - T\eps_{T} = \Rev_{\pi}(\vb) - T\eps_{T} \geq c\Rev_{STB}(\vb) - d - T^{-99}.$$
\end{proof}

Specifically, to show that there is no $\alpha$-competitive truthful protocol, it suffices to show that there is no $\alpha$-competitive discrete protocol in $\tPi_{T}$. To do this, we will use our upper bound for the single bidder problem. Note that for the single bidder problem, we can similarly define the set of discrete types $\tVo_{T}$ (whose values are all multiples of $\eps_{T}$) and the set of discrete protocols $\tPio_{T}$ (which are $\eps_{T}$-truthful in the same sense as above, and whose allocation / pricing rules are also always multiples of $\eps_{T}$). Furthermore, write $\tVo_{T, V}$ to denote the discrete types with total value $V$ and $\tPio_{T, V}$ to denote the discrete protocols over types with total value $V$. 

Consider the following two-player zero-sum game. The first player picks a protocol $\pi$ in $\tPio_{T, V}$ and the second player picks a type $\vb$ in $\tVo_{T, V}$. The first player receives a score of $\Rev_{\pi}(\vb)$.

We first show that (as a consequence of our upper bound in the single buyer game) that for any $c > 1/e$, if the first player commits to their move first (even if they can commit to a mixed strategy), there is a type $\vb$ for the second player that guarantees $\Rev_{\pi}(\vb) \leq cV$. 

\begin{theorem}
Fix any $c > 1/e$ and $d \geq 0$. For all sufficiently large $T$, there exists a $V$ such that for any distribution $P_{T, V}$ over protocols in $\tPio_{T, V}$, there exists a $\vb \in \tVo_{T, V}$ such that

$$\E_{\pi \sim P_{T, V}}[\Rev_{\pi}(\vb)] \leq cV - d.$$
\end{theorem}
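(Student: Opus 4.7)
The plan is to apply von Neumann's minimax theorem to the finite two-player zero-sum game defined just before the theorem statement, and then reduce to a suitably robust version of the single-buyer upper bound, Theorem~\ref{thm:disc_ub}. Since $\tilde{\Pi}_{T,V}$ and $\tilde{\V}_{T,V}$ are both finite sets (by the $\eps_T$-discretization), the game with payoff $\Rev_\pi(\vb)$ has a well-defined value
$$V^* \;=\; \max_P \min_{\vb} \E_{\pi \sim P}[\Rev_\pi(\vb)] \;=\; \min_Q \max_\pi \E_{\vb \sim Q}[\Rev_\pi(\vb)].$$
The theorem is equivalent to showing $V^* \leq cV - d$ for an appropriate choice of $T$ and $V$.

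To bound $V^*$, I would take an arbitrary $P \in \Delta(\tilde{\Pi}_{T,V})$ and consider the averaged mechanism $\bar\pi_P$ with $\bar r_{\vb,t} = \E_{\pi \sim P}[r_{\pi,\vb,t}]$, $\bar x_{\vb,t} = \E_{\pi \sim P}[x_{\pi,\vb,t}]$, and $\bar g_{\vb} = \E_{\pi \sim P}[g_{\pi,\vb}]$. By linearity, $\Rev_{\bar\pi_P}(\vb) = \E_{\pi \sim P}[\Rev_\pi(\vb)]$ for every $\vb$, and the limited-liability inequality $\bar x_{\vb,t} \leq \bar r_{\vb,t} v_t + 10\eps_T$ follows immediately from averaging. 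I would then pick an auxiliary $c' \in (1/e, c)$ and invoke Theorem~\ref{thm:disc_ub} with parameter $c'$ and slack $d' = 1$: for $T$ large enough, no $\eps_T$-truthful mechanism satisfies $\Rev(\vb) \geq c'V + 1$ on all types. Applied to $\bar\pi_P$, this yields a type $\vb^* \in \tilde{\V}_{T,V}$ with $\Rev_{\bar\pi_P}(\vb^*) < c'V + 1$. Choosing $V \geq (d+1)/(c - c')$ (and $V \leq T$, which is possible for large $T$) gives $c'V + 1 \leq cV - d$, so $\E_{\pi \sim P}[\Rev_\pi(\vb^*)] < cV - d$, as desired.

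The main obstacle is verifying that $\bar\pi_P$ inherits enough $\eps_T$-truthfulness from its constituents for Theorem~\ref{thm:disc_ub} to apply. The subtlety is with IC: the IC inequality only fires when a certain deviation is affordable under the payment schedule, and averaged affordability under $\bar\pi_P$ does not pointwise imply affordability under each $\pi$ in the support of $P$, so one cannot simply average the per-$\pi$ IC inequalities. I would handle this by inspecting the proof of Lemma~\ref{lem:recurse_discrete}, which only uses specific spike-type deviations: one can either (a) re-derive that recursion in the expected-IC formulation, which is preserved under averaging, or (b) modify each spike so that it aborts as soon as the payment $x_{\pi,\vb,s}$ realized under the sampled $\pi$ would exceed its per-round value, and argue that the utility-accounting in Lemma~\ref{lem:recurse_discrete} loses at most an additive $O(\eps_T)$ per round from this modification. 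Either route absorbs into the existing slack constants $\alpha',\beta'$ in Lemma~\ref{lem:recurse_discrete}, so the final contradiction argument in Theorem~\ref{thm:disc_ub} is unchanged and the same bound $V^* \leq cV - d$ emerges.
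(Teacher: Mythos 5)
Your proposal is correct and follows essentially the same route as the paper: the paper also reduces to the proof (not just the statement) of Theorem~\ref{thm:disc_ub}, observing that it extends to distributions over $\eps_T$-truthful, discrete protocols, and its side remark makes exactly your averaging observation, resolving the IC subtlety by noting that the spike deviations in Lemma~\ref{lem:recurse_discrete} imitate a type whose values they pointwise dominate, so affordability holds per-realization and the IC inequalities can be averaged (your option (a)). Your alternative patch (b) is unnecessary given this observation, and the extra $c'$-vs-$c$ buffer is harmless.
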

\begin{proof}
This follows directly from the proof of Theorem \ref{thm:disc_ub} with the caveats that Theorem \ref{thm:disc_ub} (1) applies to non-discrete types and protocols, (2) applies to truthful (i.e. not $\eps$-truthful) protocols, and (3) applies to single protocols and not distributions over protocols. 

We claim that all these concerns can be dealt with by examining the proof of Theorem \ref{thm:disc_ub}. Indeed for (1), note that (e.g. in the proof of Lemma \ref{lem:recurse_discrete}) if $\vb'$ is a discrete type then the constructed types $\vb$ and $\spike_{\vb}(t)_{s}$ are also discrete types, so the proof works as is with discrete types and protocols. For (2), note that the only consequence of the protocol being $\eps$-truthful (rather than truthful) is that we lose an additional additive $\eps_{T}T = T^{-99} \leq 1$ in inequality \ref{eq:preeq}, which slightly worsens the bound on $\beta'$ in Lemma \ref{lem:recurse_discrete} but does not otherwise affect the proof. For (3), note that the proof works with distributions over protocols as written.

(As a side remark, it is almost the case that this distribution $P_{T, V}$ over protocols gives rise to a truthful protocol $\bar{\pi}$ defined via $\bar{r}_{\vb, t} = \E_{\pi \sim P}[r_{\vb, t}]$ and $\bar{x}_{\vb, t} = \E_{\pi \sim P}[x_{\vb, t}]$. The one catch is that this distribution $\bar{\pi}$ might not be truthful because of the IC constraint -- it is not the case that if 

$$\bar{r}_{\vb', t}v_{t} \leq \bar{x}_{\vb', t}$$

\noindent
for all $t \in [0, \tau]$ that necessarily

$$r_{\vb', t}v_{t} \leq x_{\vb', t}$$

\noindent
for all $\pi = (r_{\vb, t}, x_{\vb, t})$ in the support of $P_{T, V}$ (although the converse is true). However, if we relax this constraint to $v_{t} \geq v'_{t}$ for all $t \in [0, \tau]$ (which implies this constraint and is all that the proof of Theorem \ref{thm:disc_ub} uses), then the expectation over truthful protocols is indeed a truthful protocol.)
\end{proof}

As a consequence of von Neumann's minimax theorem (since both the action spaces $\tPi_{T, V}$ and $\tV_{T, V}$ are finite), we have the immediate corollary.

\begin{corollary}\label{cor:minimax}
Fix any $c > 1/e$ and $d \geq 0$. For all sufficiently large $T$, there exists a $V$ such that there exists a distribution $\mu_{T, V}$ over $\tVo_{T, V}$ such that every protocol $\pi$ in $\tPio_{T, V}$ satisfies

$$\E_{\vb \sim \mu_{T, V}}[\Rev_{\pi}(\vb)] \leq cV - d.$$
\end{corollary}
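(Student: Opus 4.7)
The plan is to apply von Neumann's minimax theorem directly to the finite two-player zero-sum game already set up, where player 1 picks a protocol $\pi \in \tPio_{T,V}$, player 2 picks a type $\vb \in \tVo_{T, V}$, and the payoff to player 1 is $\Rev_\pi(\vb)$. First I would observe that both pure strategy sets are indeed finite: each $\vb \in \tVo_{T, V}$ is a tuple of $T$ entries, each an integer multiple of $\eps_T$ lying in $[0, 1]$ and summing to $V$, so there are finitely many of them; similarly, an $\eps_T$-discrete protocol in $\tPio_{T, V}$ is specified by finitely many values $r_{\vb, t}, x_{\vb, t}, g_\vb$ indexed over the finite type set $\tVo_{T, V}$ and rounds $[T]$, each an integer multiple of $\eps_T$ in a bounded range.

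The preceding theorem (with $T$ and $V$ chosen as in its statement) asserts exactly that for every mixed strategy $P_{T, V}$ of the protocol-player there is a pure best response $\vb$ for the type-player achieving $\E_{\pi \sim P_{T,V}}[\Rev_\pi(\vb)] \leq cV - d$. Equivalently, the max-min value of the game satisfies
$$\max_{P} \min_{\vb} \E_{\pi \sim P}[\Rev_\pi(\vb)] \leq cV - d.$$
By von Neumann's minimax theorem (applicable since the strategy sets are finite and the payoff is bounded), this equals the min-max value
$$\min_{\mu} \max_{\pi} \E_{\vb \sim \mu}[\Rev_\pi(\vb)].$$
Take $\mu_{T, V}$ to be a distribution achieving this minimum; then for every $\pi \in \tPio_{T, V}$ we have $\E_{\vb \sim \mu_{T, V}}[\Rev_\pi(\vb)] \leq cV - d$, as claimed.

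There is essentially no obstacle here beyond verifying that the hypotheses of the minimax theorem apply, and the only subtle point is making sure finiteness really holds for both sides; this is where the discretization into $\eps_T$-multiples from the previous section was invested. Once finiteness is in hand, the proof is a one-line invocation of minimax on the inequality supplied by the previous theorem.
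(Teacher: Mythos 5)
Your proposal is correct and follows essentially the same route as the paper, which derives the corollary as an immediate consequence of von Neumann's minimax theorem applied to the finite two-player zero-sum game between the protocol-player and the type-player, using the preceding theorem as the bound on the max-min value. Your write-up simply spells out the standard details (finiteness of $\tPio_{T,V}$ and $\tVo_{T,V}$, equality of max-min and min-max, and taking $\mu_{T,V}$ to be the minimizing mixed strategy) that the paper leaves implicit.
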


We now proceed according to the proof sketch in Section \ref{sect:2bid}. Fix an $\alpha > 2/e$ and a $\beta \geq 0$, and assume to the contrary that for every $T$, we have a discrete protocol $\pi \in \tPi_{T}$ such that

$$\Rev_{\pi}(\vb) > \alpha\Rev_{STB}(\vb) - \beta$$

\noindent
for each $\vb \in \tV_{T}$. Let $c = \alpha/2 > 1/e$ and $d = \beta/2$. From Corollary \ref{cor:minimax}, choose a $T$ and $V$ (for this $c$ and $d$) and generate a hard distribution $\mu_{T, V}$. Construct a distribution $\mathcal{D}$ of types in $\tV_{2T}$ as follows. The valuation profile $\vb_1$ of the first bidder will be sampled so that $(v_{1, 1},  \dots, v_{1, T})$ is sampled according to $\mu_{T, V}$ and $v_{1, t} = 0$ for $t > T$. Likewise, $\vb_2$ will be sampled so that $(v_{2,T+1}, \dots, v_{2, 2T})$ is (independently from $\vb_1$) sampled according to $\mu_{T, V}$ and so that $v_{2, t} = 0$ for $t \leq T$. 

We will now use the behavior of $\pi$ on this distribution to construct two distributions $P_1$, and $P_2$ over single-buyer protocols in $\tPio_{T, V}$. To construct $P_1$ and $P_2$, we follow the following steps:

\begin{enumerate}
    \item Sample a distribution $\vb' \in \tVo_{T, V}$ from $\mu_{T, V}$.
    \item Define the function $\pad_{L}: \tVo_{T, V} \rightarrow \tVo_{2T, V}$ be the function that converts a discrete type $\vb \in \tVo_{T, V}$ over $T$ rounds to a discrete type over $2T$ rounds by \textit{prepending} $T$ zeroes to $\vb$. Likewise, let $\pad_{R}: \tVo_{T, V} \rightarrow \tVo_{2T, V}$ be the function that converts a discrete type $\vb \in \tVo_{T, V}$ over $T$ rounds to a discrete type over $2T$ rounds by \textit{appending} $T$ zeroes to $\vb$.
    \item From $\vb'$, we will use $\pi$ to generate a discrete mechanism $\pi_1 \in \tPio_{T, V}$. Assume that the mechanism $\pi \in \tPi_{2T}$ has allocation rules $r_{\vb^{(2)}, i, t}$ and $x_{\vb^{(2)}, i, t}$ where $\vb^{(2)} \in \tV_{2T}$; since $\vb'6{(2)}$ is the type profile for two bidders, we will write $\vb^{(2)} = (\vb_1, \vb_2)$. Then the allocation rules $(r_1)_{\vb, t}$ and $(x_1)_{\vb, t}$ for $\pi_1$ are given by (for $t \in [T]$):
    
    \begin{eqnarray*}
    (r_1)_{\vb, t} &=& r_{(\pad_{R}(\vb), \pad_{L}(\vb')), i, t}\\
    (x_1)_{\vb, t} &=& x_{(\pad_{R}(\vb), \pad_{L}(\vb')), i, t}
    \end{eqnarray*}
    
    The distribution $P_1$ is the distribution over these protocols $\pi_1$ induced by the randomness from sampling $\vb'$. 
    \item
    Likewise, we construct $\pi_2$ by setting
    
    \begin{eqnarray*}
    (r_2)_{\vb, t} &=& r_{(\pad_{R}(\vb'), \pad_{L}(\vb)), i, t}\\
    (x_2)_{\vb, t} &=& x_{(\pad_{R}(\vb'), \pad_{L}(\vb)), i, t}
    \end{eqnarray*}
    
    and construct $P_2$ similarly. 
\end{enumerate}

Since $\pi$ is $\eps_T$-discrete and $\eps_T$ truthful, each $\pi_1 \in \supp(P_1)$ and $\pi_2 \in \supp(P_2)$ are $\eps_T$-discrete and $\eps_T$-truthful (and thus belong to $\tPio_{T, V}$). Furthermore, note that (by the construction of $\mathcal{D}$, $P_1$, and $P_2$), we have that

\begin{equation}\label{eq:decomp}
    \E_{\vb^{(2)} \sim \mathcal{D}}[\Rev_{\pi}(\vb^{(2)})] = \E_{\vb_1 \sim \mu}[\E_{\pi_1 \sim P_1}[\Rev_{\pi_1}(\vb_1)]] + \E_{\vb_2 \sim \mu}[\E_{\pi_2 \sim P_2}[\Rev_{\pi_2}(\vb_2)]].
\end{equation}

Now, for each $\vb^{(2)} \in \supp(D)$, note that the value of both bidders is $V$, so $\Rev_{STB}(\vb^{(2)}) = V$, and thus 

$$\E_{\vb^{(2)} \sim \mathcal{D}}[\Rev_{\pi}(\vb^{(2)})] > \alpha V - \beta.$$

It follows from inequality \ref{eq:decomp} that either $\E_{\vb_1 \sim \mu}[\E_{\pi_1 \sim P_1}[\Rev_{\pi_1}(\vb_1)]]$ or $\E_{\vb_2 \sim \mu}[\E_{\pi_2 \sim P_2}[\Rev_{\pi_2}(\vb_2)]]$ is at least $\frac{1}{2}(\alpha V - \beta) = cV - d$. Without loss of generality, assume 

$$\E_{\vb_1 \sim \mu}[\E_{\pi_1 \sim P_1}[\Rev_{\pi_1}(\vb_1)]] > cV - d.$$

In particular, this implies there exists a $\pi_1^{*} \in \supp(P_1)$ such that 

$$\E_{\vb_1 \sim \mu}[\Rev_{\pi^{*}_1}(\vb_1)] > cV - d.$$

But this contradicts Corollary \ref{cor:minimax}. Combining this with Lemma \ref{lem:disc_to_cont}, we have the following lower bound.

\begin{theorem}
Fix any $\alpha > 2/e$, $\beta \geq 0$. Then, for sufficiently large $T$ there does not exist a truthful protocol for the 2 bidder game over $T$ rounds that satisfies (for all type profiles $\vb$)

$$\Rev(\vb) \geq \alpha \Rev_{STB}(v) - \beta.$$
\end{theorem}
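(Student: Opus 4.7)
The plan is to reduce the 2-bidder lower bound to the single-bidder lower bound (Theorem \ref{thm:singlb}) via a direct-sum argument along the lines of the proof sketch given for Theorem \ref{thm:2plb}. The key observation is that if a $\vb$ has $\vb_1$ supported on rounds $[1,T]$ and $\vb_2$ supported on rounds $[T+1,2T]$, each with total value $V$, then $\Rev_{STB}(\vb) = V$; and by limited liability, bidder $i$ can only pay during the rounds where $v_{i,t} > 0$. So a 2-bidder protocol on such profiles effectively runs two single-bidder protocols in disjoint time windows, and if its total revenue exceeded $2V/e$ then one of these induced single-bidder protocols would beat $V/e$, contradicting Theorem \ref{thm:singlb}.

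Concretely, I would first apply (the discrete version of) the single-bidder upper bound together with von Neumann's minimax theorem to obtain a hard distribution $\mu$ over single-bidder types of total value $V$ such that every single-bidder protocol has expected revenue at most $cV - d$ against $\mu$, for any fixed $c > 1/e$. Then I would define the product distribution $\mathcal{D}$ over 2-bidder profiles as in the sketch: $\vb_1$ has its restriction to $[1,T]$ drawn from $\mu$ and zeros on $[T+1,2T]$, and symmetrically for $\vb_2$, sampled independently. Assuming for contradiction a protocol $\pi$ that achieves $\Rev_\pi(\vb) \geq \alpha V - \beta$ for every $\vb$ with $\alpha > 2/e$, I would take expectations over $\mathcal{D}$ and split the total revenue into payments from bidder 1 (which by limited liability all occur on $[1,T]$) and payments from bidder 2 (on $[T+1,2T]$). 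An averaging argument then forces one of these, say bidder 1's contribution, to have expectation at least $(\alpha V - \beta)/2 = cV - d$ with $c = \alpha/2 > 1/e$.

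Next I would turn $\pi$ into a randomized single-bidder protocol $\pi_1$ on $T$ rounds by sampling a ``dummy'' valuation $\vb_2$ from $\mu$, padding it with leading zeros, padding the real input type $\vb$ with trailing zeros, and running $\pi$ on the resulting 2-bidder profile while returning bidder~1's allocations and payments. Truthfulness of $\pi$ transfers to $\pi_1$ because any single-bidder deviation lifts to a 2-bidder deviation by bidder~1 (bidder~2's reports and values are unchanged). The expected revenue of $\pi_1$ on $\vb \sim \mu$ equals the expected payments by bidder~1 in $\pi$ on $\mathcal{D}$, hence is at least $cV - d$; averaging over the dummy then yields a \emph{deterministic} $\pi_1^*$ attaining the same expected revenue against $\mu$, contradicting the minimax bound.

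The main obstacle, and where most of the technical work will go, is making the minimax step rigorous: von Neumann's theorem demands finite (or at least compact) strategy sets, but both the set of valuations and the set of protocols are uncountable. I would handle this by discretizing type values and protocol parameters ($r_{\vb,i,t}$, $x_{\vb,i,t}$, $g_{\vb,i}$) to multiples of some $\eps_T = T^{-100}$ and relaxing truthfulness to $\eps_T$-truthfulness, which allows $O(\eps_T T) = o(1)$ slack in the IR and IC inequalities. Two lemmas are needed: (i) any truthful protocol rounds to a discrete $\eps_T$-truthful protocol with at most $T\eps_T = o(1)$ revenue loss, so it suffices to rule out discrete $\eps_T$-competitive protocols; and (ii) the single-buyer lower bound (Theorem \ref{thm:disc_ub}) extends to distributions over $\eps_T$-truthful discrete protocols, which follows because its proof only uses constraints of the form ``$v \geq v'$ pointwise implies deviation is feasible,'' and the additional $o(1)$ slack is absorbed into the additive constant. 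Once these two ingredients are in place, the argument above goes through against $c > 1/e$ and an additive constant that absorbs all discretization losses, yielding the claimed impossibility for every $\alpha > 2/e$ and every $\beta \geq 0$ at sufficiently large $T$.
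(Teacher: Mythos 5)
Your proposal is correct and follows essentially the same route as the paper's own proof: discretize types and protocols to $\eps_T$-truthful versions, combine the discrete single-buyer upper bound (extended to distributions over protocols) with von Neumann's minimax theorem to get a hard distribution $\mu$, form the product distribution of disjointly-supported two-bidder profiles, and convert the hypothetical $\alpha$-competitive protocol into single-bidder protocols by padding with a dummy type drawn from $\mu$, reaching a contradiction by averaging. The details you flag as needing care (the rounding lemma, the $o(1)$ slack from $\eps_T$-truthfulness, and fixing a deterministic protocol in the support of the induced distribution) are exactly the ones the paper handles, in the same way.
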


\subsection{Proof of upper bound for $k > 2$ players}\label{sect:kp-disc-ub}

In this appendix, we prove Corollary \ref{cor:kbidderub}, showing that our upper bound of $2/e$ proved for $2$ players (Theorem \ref{thm:2plb}) extends to all numbers of players.

\begin{proof}
Assume to the contrary that there exists a truthful mechanism for $k$ bidders which is $\alpha$-competitive against selling the business for some $\alpha > 2/e$. We will use this to construct a truthful mechanism for $2$ bidders which is $\alpha$-competitive against selling the business, contradicting Theorem \ref{thm:2plb}. 

To do this, let $\vb_1$ and $\vb_2$ be arbitrary types, and introduce $k-2$ dummy types $\vb_{i}$ (for $3 \leq i \leq k$) all equal to the zero type (i.e. $v_{i, t} = 0$ for all $t \in [T]$). Note that $\Rev_{STB}(\vb_1, \vb_2, \vb_3, \dots, \vb_k) = \Rev_{STB}(\vb_1, \vb_2) = \min(|\vb_1|, |\vb_2|)$. Therefore, by running our $\alpha$-competitive truthful mechanism for $k$ bidders on these two real bidders and $k-2$ dummy bidders, we obtain an $\alpha$-competitive truthful mechanism for $2$ bidders, contradicting Theorem \ref{thm:2plb}.
\end{proof}

\subsection{Proof of mechanism for $k > 2$ players}\label{sect:kp-disc-lb}

In this appendix, we prove Theorem \ref{thm:kbiddermech-fpa}, showing that if we are given a lower bound $V^*$ on the highest total value of any bidder in a $k$-bidder game, the mechanism described in Section \ref{sect:kbidders} obtains a constant fraction of $V^*$ (independent of $k$) as revenue. 

\begin{proof}
Without loss of generality, assume $V_1 \geq V_2 \geq \dots \geq V_k$ (so $V_1 \geq V^*$). We will show that for early enough rounds $t$, it is a dominated strategy for the bidder with maximum value (bidder $1$) to bid less than $v_{1, t}/2$. Specifically, we have the following lemma:

\begin{lemma}\label{lem:maxstrat}
Let $W_i(t) = \sum_{s=1}^{t} v_{i, s}$ be the total value of bidder $i$ until time $t$. If round $t'$ satisfies $W_1(t') \leq 0.8V_1 - 1$, then any pure strategy for bidder $1$ with $b_{1, t'} < v_{1, t'}/2$ is a dominated strategy.
\end{lemma}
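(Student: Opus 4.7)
The plan is to construct a pure strategy $s'_1$ that weakly dominates $s_1$ and strictly dominates it against at least one opponent profile. Define $s'_1$ to coincide with $s_1$ on every bid except at round $t'$, where $s'_1$ submits $b'_{1,t'} = v_{1,t'}/2$ in place of $b_{1,t'}$, and keep the defection rule of $s_1$; write $\delta = v_{1,t'}/2 - b_{1,t'} > 0$ and let primed quantities ($X'_i$, etc.) refer to the game state under $s'_1$.

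I would fix an arbitrary opponent profile $s_{-1}$ and split on $B := \max_{j \neq 1} b_{j,t'}$. Three cases arise at round $t'$. Case A ($B > v_{1,t'}/2$): the two games coincide. Case B ($b_{1,t'} \leq B < v_{1,t'}/2$): bidder~$1$ wins only under $s'_1$, with zero net utility at round $t'$ (pays $v_{1,t'}/2$ for value $v_{1,t'}/2$), $X'_1 - X_1 = v_{1,t'}/2$, and the opponent~$j$ who would have won under $s_1$ has $X'_j = X_j - B$. Case C ($B < b_{1,t'}$): bidder~$1$ wins both games, pays $\delta$ more under $s'_1$, and $X'_1 - X_1 = \delta$. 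Let $\Delta_1$ denote the round-$(t'+1)$ boost in $X_1$ (i.e.\ $v_{1,t'}/2$ in case B and $\delta$ in case C); the round-$t'$ utility lost under $s'_1$ versus $s_1$ is $0$ (case B) or $\delta$ (case C), hence at most $\Delta_1$.

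For $t > t'$ the two strategies bid identically, so the auction winner at each subsequent round is the same in both games. I would prove by induction that $X'_1(t) \geq X_1(t)$ for all $t > t'$, implying $\rho(X'_1(t)) \geq \rho(X_1(t))$ and hence that bidder~$1$'s allocation-half value is weakly larger under $s'_1$ in every round. The key observation for the induction is that the initial gap $X'_1 - X_1$ at round $t'+1$ is $v_{1,t'}/2$ (case B) or $\delta$ (case C), while the \emph{extra} cap headroom available to bidder~$1$ under $s_1$ (relative to $s'_1$), namely $\sum X'_i - \sum X_i = v_{1,t'}/2 - B$ or $\delta$, upper-bounds how much more bidder~$1$ can subsequently pay under $s_1$ than under $s'_1$. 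Since the initial gap is at least this extra headroom, the inequality $X'_1(t) \geq X_1(t)$ persists through every round, including those where the cap binds under one game but not the other. Post-cap payments are fully reimbursed and so drop out of the utility comparison.

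Finally I would sum the utility contributions. Before either cap binds, $s'_1$ enjoys an extra allocation-half rate of $5\Delta_1/V^*$, accumulating to at least $(5\Delta_1/V^*)(V_1 - W_1(t'))$ in additional value to bidder~$1$ over the remaining rounds. Since $V_1 \geq V^*$ and $W_1(t') \leq 0.8 V_1 - 1$, we have $V_1 - W_1(t') \geq 0.2 V_1 + 1 \geq V^*/5 + 1$, so this future gain strictly exceeds $\Delta_1$. Hence $U_1(s'_1, s_{-1}) \geq U_1(s_1, s_{-1})$ always, with strict inequality in cases B and C; the opponent profile $b_{j,t} \equiv 0$ falls in case C and witnesses strict domination. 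The main obstacle throughout the argument is the cap bookkeeping that underpins the monotonicity $X'_1(t) \geq X_1(t)$; once that is settled, the final step is a routine linear summation.
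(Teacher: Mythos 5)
Your setup matches the paper's: the same modified strategy $s'_1$, the observation that for $t>t'$ the auction winner is unchanged, the fact that post-cap payments are refunded, and a headroom argument giving $X'_1(t)\geq X_1(t)$ for all $t$ (the paper's facts (1) and (3)). The gap is in the final accounting. You charge $s'_1$ a loss of at most $\Delta_1$ (the round-$t'$ boost) and credit it a gain of $(5\Delta_1/V^*)\bigl(V_1-W_1(t')\bigr)$, but the extra allocation-half rate of $5\Delta_1/V^*$ is only valid, as you yourself note, \emph{before either cap binds}. Consider the scenario where the cap binds under $s'_1$ at some round $\tilde t$ shortly after $t'$, bidder $1$ has essentially no value in $(t',\tilde t]$, and under $s_1$ (where the cap has not yet bound) bidder $1$ keeps winning and paying unrefunded amounts, driving $X_1(t)$ up toward the frozen $X'_1(\tilde t)$. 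Then over the rounds carrying the bulk of bidder $1$'s remaining value the rate advantage is proportional to the much smaller final gap $\Delta := X'_1(T)-X_1(T)$, not to $\Delta_1$, so your claimed accumulation fails; nonnegativity of the gap alone only gives a gain of $\geq 0$, which does not cover a loss of $\Delta_1$ (case C has a genuine net payment loss). The lemma survives because the true net loss is simultaneously only $\Delta$, not $\Delta_1$: post-cap payments under $s'_1$ are refunded while the pre-cap payments bidder $1$ makes under $s_1$ in the divergence window are real costs that cancel part of the round-$t'$ overpayment.

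The repair is the paper's second structural fact, which you never state or use: for $t\geq t'$ the gap $X'_1(t)-X_1(t)$ is \emph{non-increasing}, hence pointwise at least the final gap $\Delta$, which is exactly the total unrefunded extra payment of $s'_1$. With that, the allocation-half gain is at least $\sum_{t\geq t'}\tfrac12 v_{1,t}\cdot 10\Delta/V^* \geq 5\Delta\bigl(V_1-W_1(t'-1)\bigr)/V^*\geq \Delta$ (using $W_1(t')\leq 0.8V_1-1$ and $V_1\geq V^*$), offsetting the loss $\Delta$; your headroom computation (with $\sum_i X'_i-\sum_i X_i$ equal to $v_{1,t'}/2-B$ or $\delta$) is the right ingredient but must be deployed to prove this monotonicity-and-final-gap comparison, not merely $X'_1(t)\geq X_1(t)$. (Smaller points: you should also dispose of the case where $t'$ falls at or after the cap is hit, where the increased bid is harmless since its payment is refunded, and handle ties at round $t'$; these are routine.)
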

\begin{proof}
Let $s_1$ be a pure strategy (i.e. a choice of $V$ and $b_{t}$ to report) for bidder $1$ where $b_{1, t'} < v_{1, t'}/2$. Let $s'_1$ be a new pure strategy formed by modifying $s_1$ by increasing $b_{1,t'}$ to $b'_{1, t'} = v_{1, t'}/2$. We claim that (regardless of the strategies $s_{-1}$ of the other bidders), the utility $U(s'_1, s_{-i})$ bidder $i$ receives by playing $s'_1$ is at least the utility $U(s_1, s_{-1})$ bidder $1$ receives by playing $s_1$. 

These utility functions can be broken into three parts: the utility from being allocated the auction half of the item, the utility from being allocated the allocation half of the item, and the (negative) utility from paying the seller during the auction half. We will examine how these three components change between the two strategy profiles $(s_1, s_{-i})$ and $(s'_1, s_{-i})$.

To begin, note that if $t'$ takes place after the cap has triggered, then increasing the bid in this round can only possibly be beneficial for bidder $1$, since all payments after the cap has triggered are refunded. Thus, from now on we will assume that $t'$ takes place before the cap has triggered.

We start by examining the utility due to the allocation of the auction half. Note that since only the bid on round $t'$ changes between $s_1$ and $s'_1$, the allocation of the auction half remains the same for all rounds $t \neq t'$. For round $t'$, since bidder 1 increases their bid from strategy $s_1$ to $s'_1$, it is possible they were not allocated the item on round $t$ under $s_1$ but now are under $s'_1$. Strategy $s'_1$ thus gains more utility from the allocation of the auction half than $s_1$.

We next examine the difference in payments during the auction half. Let $X_1(t)$ be the total payment of bidder $1$ by time $t$ to the seller before the cap has been reached under strategy profile $(s_1, s_{-1})$. Let $X_1'(t)$ be the analogous quantity for strategy profile $(s'_1, s_{-1})$. Let $\Delta = X'_1(T) - X_1(T)$; note that this is the total difference in payments between strategy $s'_1$ and $s_1$ (since payments after the cap are refunded). To help bound this in terms of what we gain in the allocation half, we will prove the following facts about $X_1(t)$ and $X_1'(t)$:

\begin{enumerate}
    \item For all $t < t'$, $X'_1(t) = X_1(t)$.
    \item For all $t \geq t'$, $X'_1(t) - X_1(t)$ is decreasing in $t$.
    \item For all $t$, $X'_1(t) \geq X_1(t)$.
\end{enumerate}

The first fact holds since for all rounds $t < t'$, strategies $s_1$ and $s'_1$ perform the same actions and make the same payment. 

To see why the second fact is true, fix a round $t > t'$. Note that since $b_{1, t}$ is unchanged between $s_1$ and $s'_1$, if $b_{1, t}$ is the largest bid under $s_1$ it is also the largest bid under $s'_1$. The only way bidder $1$ pays a different amount in round $t$ under $s'_1$ than under $s_1$ is if the auction has already ended under $s'_1$ (i.e. the quantity $\sum \rho(X_i)$ has reached its cap of 1, or reaches this cap this round) but not yet under $s_1$. In this case bidder $1$ pays more under $s_1$ than $s'_1$ (and thus $X'_1(t) - X_1(t)$ is decreasing). 

To prove the third fact, note that after round $t'$, strategies $s_1$ and $s'_1$ will perform the same bids and payments until (possibly) the cap is triggered earlier under $s'_1$ (due to $s'_1$ bidding more in this round). Therefore, look at the first round $\tilde{t} > t'$ where the winning bid under $s_1$ and $s'_1$ differs. Let $X(t) = \sum_i X_i(t)$ be the total payment of all bidders $i$ under $s_1$ (likewise, let $X'(t) = \sum_i X_i'(t)$ be the total payment of all bidders $i$ under $s'_1$). For all $t \geq t'$ until this round, we must have that $X'(t) - X(t) = X'(t') - X(t')$ (since $t'$ is the only round where the behavior of $s_1$ and $s'_1$ differ). As mentioned above, the only way for the behavior under $s_1$ and $s'_1$ to differ is if the auction ends under $s'_1$ in round $\tilde{t}$; i.e., $X'(\tilde{t}) = V^*/10$. But this means that $X(\tilde{t}) \geq V^*/10 - (X'(t') - X(t'))$. In particular, it is impossible for $X_1(t)$ to increase by more than $(X'(t') - X(t'))$ after round $\tilde{t}$ (since this would make $X(t) \geq V^*/10$), and therefore $X'_1(t) - X_1(t) \geq 0$ for all $t \geq \tilde{t}$. 

Finally, we will examine the utility due to the allocation of the allocation half. In particular, we will show that strategy $s'_1$ gains at least $\Delta$ more utility from the allocation half than strategy $s_1$, thus compensating for the increased payment in the auction half. To do so, we will use the facts about $X'_1(t)$ and $X_1(t)$ that we have just proved. Note that under $s_1$, the utility bidder $1$ receives from the allocation half when playing according to $s_1$ is equal to

$$\sum_{t=1}^{T} \frac{1}{2}v_{1, t}\rho(X_1(t)).$$

Likewise the utility bidder $1$ receives from the allocation half when playing according to $s'_1$ is equal to

$$\sum_{t=1}^{T} \frac{1}{2}v_{1, t}\rho(X_1'(t)).$$

Therefore the extra utility bidder $1$ receives in the allocation half when playing according to $s'_1$ is at least

\begin{eqnarray*}
\sum_{t=1}^{T} \frac{1}{2}v_{1, t}(\rho(X_1'(t)) - \rho(X_1(t)) &= & \sum_{t=t'}^{T} \frac{1}{2}v_{1, t}(\rho(X_1'(t)) - \rho(X_1(t)) \\
&=& \sum_{t=t'}^{T} \frac{1}{2}v_{1, t}\frac{10(X_1'(t) - X_1(t))}{V^*} \\
&\geq& \sum_{t=t'}^{T} \frac{1}{2}v_{1, t}\frac{10\Delta}{V^*} \\
&=& 5\Delta(W_1(T) - W_1(t'-1))/V^* \\
&\geq& 5\Delta(V_1 - 0.8V_1 + 1)/V^* \geq \Delta \frac{V_1}{V^*} \geq \Delta.
\end{eqnarray*}

Here we have used the fact that since $X'_1(t) - X_1(t)$ is decreasing for $t \geq t'$, $X'_1(t) - X_1(t) \geq \Delta = X'_1(T) - X_1(T)$. It follows that $U(s'_1, s_{-1}) - U(s_1, s_{-1}) \geq \Delta - \Delta = 0$. This shows that $s_1$ is never better than $s'_1$; to see that it is dominated, it suffices to consider the strategy profile $s_{-1}$ where all other bidders bid $0$ every round. 
\end{proof}

Now, consider the total revenue received by the auction. Let $X_i(t)$ be the total amount bidder $i$ has paid to the auctioneer up until time $t$, and let $X(t) = \sum_{i} X_i(t)$. If we ever have that $X(t) = V^*/10$ (so we stop the auction), then we already have a $1/10$-competitive mechanism. Assume therefore that $X(t) < V^*/10$ for all $t$.

Consider the rounds $t$ where $W_1(t) \leq 0.8V_1 - 1$. By Lemma \ref{lem:maxstrat}, we know that bidder $1$ will bid at least $v_{1, t}/2$ in each such round. This means that the winning bid for each of these rounds is at least $v_{1, t}/2$ and the total revenue is therefore at least $(0.8V_1 - 2)/2$ (since the lowest the cumulative welfare can be in the last of these rounds is $.8V_1-2$, by fact that values are at most $1$). It follows that the total revenue is at least $0.4V_1 - O(1) \geq 0.4V^* - O(1)$, and once again, we receive a constant factor approximation (up to an additive $O(1)$ term) to $V^*$. The $O(1)$ reimbursement we provide to each player increases this additive loss to an additive $O(k)$. 
\end{proof}

We now prove Corollary \ref{cor:small_reimbursements}, showing that it is possible to implement the above mechanism with small $O(1)$-sized reimbursements. 

\begin{proof}[Proof of Corollary \ref{cor:small_reimbursements}]
We will make the following changes to the mechanism:

\begin{itemize}
    \item We will run the mechanism of Theorem \ref{thm:kbiddermech-fpa} with value $0.5V^*$ instead of $V^*$. We will also initialize each of the payments $X_i$ to $1$ at the beginning of the protocol; this only decreases the total revenue of the mechanism by at most an additive $O(k)$.
    \item Each round $t$, with probability $1/2$, instead of performing the action of the mechanism of Theorem \ref{thm:kbiddermech-fpa}, we instead run the following sub-procedure (a \textit{check}):
    \begin{itemize}
        \item Allocate an $\eps < 1/kT$ fraction of the item to the buyer.
        \item Expect bidder $i$ to pay $2\eps b_{i, t}$ this round.
        \item If the bidder pays less than this amount, eject them from the mechanism (never allocate to them again).
    \end{itemize}
\end{itemize}

We claim that under these changes, any strategy $s_i$ for bidder $i$ which submits some bids $b_{i, t}$ larger than $v_{i, t}/2$ is weakly dominated by another strategy. In particular, let $\tau$ be the last round for which $s_i$ submits a bid $b_{i, \tau}$ satisfying $b_{i, \tau} > v_{i, \tau}/2$. We claim $s_i$ is dominated by the strategy $s'_i$ for bidder $i$ which submits a bid $b'_{i, \tau} = v_{i, \tau}/2$ on round $\tau$ (and bids identically to $s_i$ on other rounds).

To see why, note that during a check round, if bidder $i$ is limited-liable and submitted a bid $b_{i, t} > v_{i, t}/2$, they are forced to defect; they cannot pay more than $\eps v_{i, t}$ for an $\eps$ fraction of the item by the limited-liability constraint. If they are forced to defect, they lose their $g > 1$ reimbursement for the rest of the game, as well as whatever future utility they may gain from the rest of the protocol (in particular, they lose at least their $10/V^*$ share of the remaining allocation half they received at the beginning of the game).

On the other hand, we can bound the expected utility that strategy $s'_i$ possibly loses due to decreasing their bid down to $v_{i, \tau}/2$. The possible negative effects of doing this are: (1) bidder $i$ might not win the item in round $\tau$ and (2) bidder $i$ might lose out on a $10/V^*$ share of the remaining allocation half. Since $v_{i, t}$ are bounded in size by at most $1$, the potential utility loss for $s'_i$ due to (1) is compensated for by the potential loss of the $g > 1$ reimbursement for $s_i$. Similarly, the potential utility loss for $s'_i$ due to (2) is at most $10/V^*$ times the remaining value of the allocation half of the item for bidder $i$ (i.e. $\sum_{t=\tau+1}^{T} v_{i, t}$), but $s_i$ loses at least their initial $10/V^*$ share of the allocation half. Thus overall the expected utility under $s_i$ is strictly less than the expected utility of $s'_i$, regardless of the actions of the other bidders. We can therefore conclude that bidders will not defect in this mechanism.

Since the highest bidder now has expected value at least $0.5V^*$ on the non-check rounds, the proof of Theorem \ref{thm:kbiddermech} shows that this mechanism receives a constant fraction of $0.5V^*$ as revenue, as desired.
\end{proof}






\end{document}